\documentclass[11pt,a4paper]{article}
\usepackage{tikz}
\usepackage[compat=1.1.0]{tikz-feynman}
\usepackage{silence} 
\usepackage{jheppub}
\usepackage{graphicx}
\usepackage[table]{xcolor}
\usepackage{orcidlink} 
\usepackage[makeroom]{cancel}
\usepackage{enumitem}
\usepackage{amssymb,amsmath,mathrsfs,bm,bbm,mathtools} 
\usepackage{amsthm}
\definecolor{ao(english)}{rgb}{0.0, 0.5, 0.0}
\hypersetup{colorlinks=true,linkcolor=blue,urlcolor=blue,citecolor=ao(english)}
\usepackage{stackrel}
\usepackage{multirow}

\newcommand\nn{\nonumber\\}

\newcommand{\bma}{\left(\begin{array}}
	\newcommand{\ema}{\end{array}\right)}
\newcommand{\be}{\begin{equation}}
	\newcommand{\ee}{\end{equation}}
\newcommand{\ben}{\begin{equation*}}
	\newcommand{\een}{\end{equation*}}
\newcommand{\ba}{\begin{eqnarray}}
	\newcommand{\ea}{\end{eqnarray}}
\newcommand{\ban}{\begin{eqnarray*}}
	\newcommand{\ean}{\end{eqnarray*}}
\newcommand{\bs}{\begin{subequations}}
	\newcommand{\es}{\end{subequations}}
\newcommand{\bc}{\begin{center}}
	\newcommand{\ec}{\end{center}}
\newcommand{\ve}{\varepsilon}

\def\Arg{{\rm Arg}}

\newcommand{\Pl}{{\text{\tiny Pl}}}

\newcommand{\Mpl}{M_\Pl}

\newcommand{\au}[2]{#1.~#2}
\newcommand{\arX}[1]{\href{http://arxiv.org/abs/#1}{{\cob arXiv:#1}}}
\newcommand{\oarX}[1]{\href{http://arxiv.org/abs/#1}{{\cob #1}}}

\newcommand{\book}[5]{\emph{#1}, #2, #3, #4 (#5)}
\newcommand{\books}[4]{\emph{#1}, #2, #3 (#4)} 
\newcommand{\doin}[6]{\href{http://dx.doi.org/#1}{{\cob {\it #2 #3} {\bf #4} (#6) #5}}}
\newcommand{\doinn}[5]{\href{http://dx.doi.org/#1}{{\cob {\it #2} {\bf #3} (#5) #4}}}
\newcommand{\doij}[5]{\href{http://dx.doi.org/#1}{{\cob {\it #2} {\bf #3} (#5) #4}}}
\newcommand{\ndoin}[6]{\href{#1}{{\cob {\it #2 #3} {\bf #4} (#6) #5}}}

\newcommand{\procsinm}[5]{in \emph{#1}, #2 (eds.), #3, #4 (#5)}

\newcommand{\procm}[6]{in \emph{#1}, #2 (eds.), #3, #4, #5 (#6)}
\newcommand{\tia}[1]{\textit{#1},}

\newcommand{\boxd}[1]{\boxed{\phantom{\Biggl(}#1\phantom{\Biggl)}}}

\renewcommand{\leq}{\leqslant}
\renewcommand{\geq}{\geqslant}
\newcommand{\Eq}[1]{(\ref{#1})}
\newcommand{\Eqq}[1]{eq.~(\ref{#1})}
\newcommand{\Eqqs}[1]{eqs.~(\ref{#1})}
\def\rme{e}
\def\rmd{d}
\def\rmi{i}

\def\Re{\text{Re}}
\def\Im{\text{Im}}

\def\a{\alpha}
\def\b{\beta}
\def\de{\delta}

\def\g{\gamma}
\def\la{\lambda}

\def\e{\epsilon}
\def\ve{\varepsilon}

\def\om{\omega}
\def\G{\Gamma}
\def\t{\tau}
\def\s{\sigma}

\def\vp{\varphi}

\def\B{\Box}
\def\H{{\rm H}}
\def\mst{M_*}

\def\lst{\ell_*}

\def\cC{\mathcal{C}}
\def\cD{\mathcal{D}}

\def\cF{\mathcal{F}}
\def\cG{\mathcal{G}}

\def\cM{\mathcal{M}}

\def\cO{\mathcal{O}}
\def\cP{\mathcal{P}}

\def\ups{u}
\def\p{\partial}
\def\pe{p_\textsc{e}}
\def\ke{k_\textsc{e}}
\def\bpe{\bar p_\textsc{e}}
\def\bke{\bar k_\textsc{e}}

\def\cob{\color{blue}}

\newtheorem{theo}{Theorem}

\begin{document}
	
\renewcommand{\thefootnote}{\fnsymbol{footnote}}
	
\title{Perturbative unitarity of\\ fractional field theories and gravity} 

\author[a,*]{Gianluca Calcagni\,\orcidlink{0000-0003-2631-4588}\note{Corresponding author.}}
\emailAdd{g.calcagni@csic.es}
\affiliation[a]{Instituto de Estructura de la Materia, CSIC, Serrano 121, 28006 Madrid, Spain}

\author[b,c,d]{and Fabio Briscese\,\orcidlink{0000-0002-9519-5896}}
\emailAdd{fabio.briscese@uniroma3.it}
\affiliation[b]{Dipartimento di Architettura, Università Roma Tre, Via Aldo Manuzio 68L, 00153 Rome, Italy}
\affiliation[c]{Istituto Nazionale di Alta Matematica Francesco Severi, Gruppo
	Nazionale di Fisica Matematica, Piazzale Aldo Moro 5, 00185 Rome, Italy}
\affiliation[d]{Istituto Nazionale di Fisica Nucleare, Sezione di Roma 3, Via della Vasca Navale 84, 00146 Rome, Italy}
	
\abstract{Motivated by quantum gravity on spacetimes with multi-scale geometry, we analyze quantum field theories with a self-adjoint fractional power $(\Box^2)^{\gamma/2}$ of the d'Alem\-bert\-ian in the kinetic term, for any real $\gamma>0$. Selecting a particularly simple version of the kinetic term which we call hermitian polynomial, we study the spectral decomposition of the propagator and, when $\gamma>1$, obtain the standard mass singularity $-k^2=m^2$. This is the only mode in the perturbative spectrum of asymptotic states, since the only other content of the theory is a cloud of purely virtual particles with complex masses. We also show that other versions of the self-adjoint fractional kinetic term lead to a different distribution of the virtual complex modes but to the same physical spectrum for $0<\gamma<3$, thus addressing the issue of uniqueness in this class of nonlocal theories. The non-hermitian version of the theory has the $-k^2=m^2$ particle plus a continuum of standard massive modes. Finally, we prove that unitarity of scalar, gauge and gravity models is respected at all perturbative orders if, in the hermitian cases, one adopts the fakeon prescription on scattering amplitudes or, in the non-hermitian case, $0<\gamma<1$ or $2<\gamma<3$ with the standard Feynman prescription. These results drastically simplify previous characterizations of fractional quantum gravity, which is super-renormalizable for $\gamma>2$.
}
	
\keywords{Models of Quantum Gravity, Classical Theories of Gravity}

\maketitle
\renewcommand{\thefootnote}{\arabic{footnote}}


\section{Introduction}\label{intro}

If we tried to enclose in a nutshell the main reasons behind the efforts to find a theory of quantum gravity \cite{Ori09,Fousp,Calcagni:2013hv,Bam24}, we could trace one to the failure in quantizing Einstein gravity as a perturbative quantum field theory (QFT) \cite{tHooft:1974toh,Deser:1974cz,Deser:1974cy,Deser:1974xq,Goroff:1985sz,Goroff:1985th,vandeVen:1991gw,Bern:2015xsa,Bern:2017puu}. The perceived impossibility to combine directly the two governing paradigms of modern theoretical physics, and thus to have a unified description of the four fundamental interactions, spurred a number of heterogeneous attempts to obtain a consistent theory. In some cases such as asymptotic safety or loop quantum gravity, Einstein gravity is kept as the classical basis and perturbative quantization is replaced by non-perturbative techniques. In other instances, the classical theory of gravity is modified but the perturbative QFT framework is kept unaltered or with minimal adaptations. This is the case of fractional quantum gravity (FQG) and the general class of fractional QFTs, the topic of this paper.

FQG is a type of perturbative quantum gravity with nonlocal operators of non-integer order \cite{Calcagni:2016azd,Calcagni:2021ipd,Calcagni:2021aap,Calcagni:2022shb,Calcagni:2025wnn}. Its action in four spacetime dimensions is
\be\label{gravac}
S=\frac{\Mpl^2}{2}\int\rmd^4x\,\sqrt{|g|}\,\left[R-c_2\lst^2 G_{\mu\nu}(\lst^4\B^2)^{\frac{\g}{2}-1}\,R^{\mu\nu}-c_0\lst^2 R(\lst^4\B^2)^{\frac{\g}{2}-1}\,R\right]+\dots,
\ee
where we work in spacetime signature $(-,+,+,+)$, $\Mpl=1/\sqrt{8\pi G}$ is the reduced Planck mass, $G$ is Newton's constant, $g=\det g_{\mu\nu}$, $R_{\mu\nu}$ and $R=g_{\mu\nu}R^{\mu\nu}$ are, respectively, the Ricci tensor and scalar defined in the usual way, $G_{\mu\nu}=R_{\mu\nu}-(1/2)g_{\mu\nu}R$ is the Einstein tensor, $\lst$ is a fundamental length scale marking the transition between a dynamics with second-order derivatives and one with fractional derivatives, $(\B^2)^{\g/2}$ with $\g>1$ is the non-integer  power of the square of the Laplace--Beltrami operator or d'Alembertian $\B$ \cite{Calcagni:2025wnn} and, finally, the ellipsis stands for the contribution of matter fields plus higher-curvature nonlocal terms (called killer operators \cite{Modesto:2014lga}) that do not modify the propagator but can make the theory finite \cite{Calcagni:2022shb}. $c_0$ and $c_2$ are dimensionless constants: $c_0$ is free, while $c_2=\pm 1$ is just a sign choice. In previous works \cite{Calcagni:2021aap,Calcagni:2022shb}, we chose $c_2=-1$ and $c_0=0$ but we will keep these couplings general until later and make the opposite choice of sign for $c_2$ for convenience. This is just a convention and does not change the physics. 

The action \Eq{gravac} arises from a top-down construction based on the following assumptions:
\begin{enumerate}
\item The dynamics and the parameters of the theory must be chosen so that to obtain a perturbatively unitary and renormalizable QFT of gravity and matter fields. This choice is a matter of taste. We want to use the conservative framework of perturbative QFT because we know not only how to pose a great number of physical questions within it, but also how to answer them. This does not exclude that the theory admit also a non-perturbative formulation \cite{Calcagni:2024xku}.
\item Spacetime and the field dynamics undergo dimensional flow, which is the change of spacetime dimensionality with the probed scale. This is part of a long-standing program \cite{Calcagni:2016azd,Calcagni:2021ipd,Calcagni:2009kc} aiming at implementing dimensional flow as a fundamental rather than as an emergent property. The reason to do this is that dimensional flow appears in all quantum-gravity scenarios \cite{tHooft:1993dmi,Carlip:2009kf,Calcagni:2016edi,Carlip:2017eud,Mielczarek:2017cdp,Carlip:2019onx} and is typically related to the good ultraviolet (UV) properties of these theories, although it is insufficient \emph{per se} to secure such properties \cite{Calcagni:2013qqa}.
\item The theory is covariant, diffeomorphism invariant and Lorentz invariant in any local inertial frame. These conditions are dictated by experience with theories with dimensional flow breaking Lorentz invariance. Their phenomenology is interesting but it is difficult to handle perturbative QFTs with non-trivial spacetime measures \cite{Calcagni:2016azd,Calcagni:2013qqa,Calcagni:2017jtf} or with nonlocal derivative operators breaking standard diffeomorphism symmetry \cite{Calcagni:2021aap,Calcagni:2018dhp}.
\item The classical action is self-adjoint, hence the theory admits a well-defined classical limit. This requirement dictates the use of the self-adjoint operator $(\lst^4\B^2)^{\g/2-1}$ in \Eq{gravac} \cite{Calcagni:2025wnn} instead of the operator $(-\lst^2\B)^{\g-2}$ employed in previous papers \cite{Calcagni:2021ipd,Calcagni:2021aap,Calcagni:2022shb}.
\end{enumerate}
In particular, assumption 3 forces the spacetime measure to be the ordinary Lebesgue measure, so that dimensional flow is realized only dynamically, i.e., through the kinetic terms. It also reduces the choice of derivative operators with non-trivial dimensional flow to $\B^\g$ and its self-adjoint extensions.

Tree-level unitarity and renormalizability at all orders of \Eq{gravac} with the non-hermitian operator $(-\lst^2\B)^{\g-2}$ have been studied in \cite{Calcagni:2021aap,Calcagni:2022shb}. However, the approximations used in those works are insufficient to fully understand unitarity, which is anyway altered in the self-adjoint formulation \Eq{gravac}. The purpose of this paper is to study the unitarity of the theory \Eq{gravac} and of the general class of fractional scalar and gauge QFTs at all perturbative orders starting from the tree level. The resulting propagator is multi-valued and must be represented on a complex Riemann surface, from which one extracts a sheet where to build the physical theory. The level of mathematical sophistication we meet on such surface is nothing compared with the one found in multi-particle scattering amplitudes in ordinary QFT and in the Standard Model (an example is \cite{Dawid:2023jrj}). Still, it may be highly unfamiliar due to the fact that such a structure appears at the tree level and that it describes fundamental particle states.

Our main findings are the following. 
\begin{itemize}
\item Some Riemann sheets are excluded because they contain unphysical modes that cannot be removed, such as unpaired modes with complex masses.
\item On physical sheets, the tree-level exact Green's function of fractional QFT for $\g>0$ is made of various contributions: (i) for $\g>1$, a mode corresponding to the standard Green's function $1/(k^2+m^2)$ that dominates in the infrared (IR); (ii) a finite number of complex-conjugate poles; (iii) a continuum of complex-conjugate pairs coming from a discontinuity of the Green's function on the imaginary axis $\mathbb{I}=\{z\,:\,\Re\,z=0\}$. Of these, only (i) enters the physical spectrum, while the rest are purely virtual contributions just like in the purely fractional case.
\item Complex-conjugate modes do not affect unitarity provided one defines the Lorentzian amplitude with the fakeon or Anselmi--Piva prescription \cite{Anselmi:2017yux,Anselmi:2017lia,Anselmi:2021hab,Anselmi:2022toe,Anselmi:2025uzj,Anselmi:2025uda}. Unitarity at all orders in perturbation theory is established easily and for the first time, from the structure of the tree-level propagator applied to an arbitrary amplitude.
\item It is always possible to eliminate all extra poles from the theory.
\item These results are independent of the representation used to define the operator $(\B^2)^{\g/2}$ and of the way $\B$ and $(\B^2)^{\g/2}$ are combined in the kinetic term. The only things that can change are the presence of real extra poles (absent in our main definition), the number and position of complex poles in any given sheet and the range of $\g$ for which one can eliminate all extra poles, in particular, $0<\g<3$. Together with the analysis of the problem of initial conditions in the classical theory \cite{Calcagni:2025wnn}, this points towards the fact that, although different realizations of a fractional QFT with the same scaling $\g$ may differ at a technical level in their purely virtual sector, the physics remains the same. Therefore, despite nonlocal QFTs be non-unique due to the possibility to define mathematically inequivalent form factors, one can establish universality classes based on physical properties such as the number of initial conditions, the type of classical solutions or the asymptotic particle spectrum.
\item If one uses instead the non-hermitian operator $(-\B)^\g$, the continuum of massive modes is standard and does not require the fakeon prescription but unitary is respected only for $\g$ in the ranges $0<\g<1$ and $2<\g<3$, in the Riemann sheets with no extra poles.
\end{itemize}  

The structure of this paper aims at a rigorous, step-by-step derivation of these results, confining technical details into theorems and giving prominence to the physical interpretation. In section~\ref{sec2} , we recall what we know so far about the renormalizability and unitarity of fractional QFT in $D=4$ spacetime dimensions, taking a general scalar-field toy model with derivative interactions as an example parametrizing the scalar, gauge and gravitational sectors of the theory. A study of the singularities (branch point and poles) of the full Green's function, including both the IR and the UV terms, is done in section~\ref{sec3} for the HP model, where we also show how to eliminate all the complex poles. Soon in the analysis, we fix the sign and value of the coefficient $c_2$ in \Eq{gravac} to $c_2=1$. In section~\ref{sec4}, we calculate the contributions of complex poles (if present), of the branch point and of the discontinuity of the Green's function and analyze tree-level unitarity. In particular, in section~\ref{cocopo}, the problems carried by complex poles are discussed in some detail. Section~\ref{sec5} is devoted to the issue of how the physics changes if we modify our definition of the fractional kinetic term while maintaining its order $\g$. In section~\ref{othrep}, we consider the non-hermitian $(-\B)^\g$ and two alternative implementations of $(\B^2)^{\g/2}$ and show that the distribution of undesired or virtual degrees of freedom does change across these cases but not in a way altering the physical spectrum, which is the same as in the HP model. In section~\ref{compa}, we make a comparison between fractional QFT and nonlocal QFT with entire form factors. Perturbative unitarity at all orders is derived in section~\ref{sec6}. Conclusions are in section~\ref{sec7}. 

Appendix~\ref{appA} contains a minor calculation of a residue. Appendices~\ref{appB} and \ref{appC} collect a sequence of theorems on the existence and distribution of poles in the Green's function for various hermitian and non-hermitian definitions of the multi-fractional kinetic term. These contents are very technical and are reported only for the sake of transparency. We recommend the user \emph{not} to read them, with the exception of Theorem~\ref{theo6-A} which is pivotal to the construction of a theory free of extra poles. 

A very condensed presentation of the theory, with also some considerations on black holes and gravitational waves, can be found in \cite{Briscese:2026jcf}.


\section{Fractional QFT: renormalizability and unitarity (so far)}\label{sec2} 

A toy model capturing the main renormalizability properties of scalar, gauge and gravitational QFTs is \cite{Calcagni:2022shb}
\ba
&&S=\!\int\!\rmd^{4}x\left[\frac{1}{2}\phi\left\{c_1(\B-m^2)-\frac{c_2}{\lst^2}[\lst^2(m^2-\B)]^{\g}\right\}\phi-\phi^{\frac{N}{2}}\left[\la_0+\la_{\g-\a}(-\lst^2\B)^{\g-\a}\right]\phi^{\frac{N}{2}}\right],\nn
&&\g>\a\geq 0\,,\label{eq:modelnh}
\ea
where $\phi$ is a real scalar with energy dimensionality $[\phi]=1$, the constant coefficients in the action have dimensionality $[c_1]=0=[c_2]$, $[m]=1$, $[\la_0]=[\la_{\g-\a}]=4-N$, $\mathbb{N}\ni N\geq 3$, the coefficient $c_1$ can take the values 0 (purely fractional model) or 1 (realistic model), $\g>0$ is a non-integer exponent, $\B=\p_\mu \p^\mu$ is the d'Alembertian (in momentum space, $\B\to -k^2=(k^0)^2-|\bm{k}|^2$) and $(-\B)^\g$ is a nonlocal operator called fractional d'Alembertian \cite{BGG,Mar91,Gia91,BGO,BG,doA92,Barci:1995ad,BBOR1,Barci:1996ny,BBOR2} that admits different representations \cite{Calcagni:2025wnn}. Here we update the above toy model to one with self-adjoint derivative operators:
\ba
&&S=\!\int\!\rmd^{4}x\left(\frac{1}{2}\phi\left\{c_1(\B-m^2)-\frac{c_2}{\lst^2}[\lst^4(m^2-\B)^2]^{\frac{\g}{2}}\right\}\phi-\phi^{\frac{N}{2}}\left[\la_0+\la_{\g-\a}(\lst^4\B^2)^{\frac{\g-\a}{2}}\right]\phi^{\frac{N}{2}}\right),\nn
&&\g>\a\geq 0\,.\label{eq:model}
\ea

For the fractional operators to dominate in the UV, it must be $c_1\neq 0$ and $\g>1$, so that the kinetic term tends asymptotically to $\sim \phi(\B^2)^{\g/2}\phi$ at large momenta, while in the IR one recovers the canonical kinetic term $\sim \phi\B\phi$. If $c_1\neq 0$ and $\g<1$, then the UV and IR limits are interchanged. The free-level\footnote{Free-level means in the absence of interactions. For the propagator $\rmi\tilde G$ in momentum space, this is a synonym for tree-level but, in general, it is not. For example, the K\"all\'en--Lehmann representation of the time-ordered two-point function contains information of the interactions even at the tree level \cite{Briscese:2024tvc}.} Green's function in momentum space from \Eq{eq:model} is
\be\label{propkh}
\tilde G_*(-k^2) =\frac{\lst^2}{c_1\bar k^2+c_2(\bar k^4)^{\frac{\g}{2}}}\,,\qquad \bar k^2\coloneqq \lst^2(k^2+m^2)\,.
\ee
Here we choose the convention $\B G_*=-\de$ for the Green's function.

The scalar models \Eq{eq:modelnh} and \Eq{eq:model} parametrize all the main sectors of fractional QFT with $N$ and $\a$. Apart from a scalar field with a nonlocal potential, they encode the following perturbative field content in a simplified way (i.e., with a trivial tensorial structure):
\begin{itemize}
\item $\a=\g$: scalar field with local $\phi^N$ potential, $N\geq 3$ \cite{Trinchero:2012zn,Trinchero:2017weu,Trinchero:2018gwe,Calcagni:2021ljs}.
\item $\a=1$, $N=4$, $m=0$: longitudinal mode of a non-Abelian gauge field \cite{Calcagni:2022shb}.
\item $\a=0$, $N=4$, $m=0$: graviton \cite{Calcagni:2022shb}.
\end{itemize}

In particular, the Green's function $\tilde G_{\mu\nu\s\t}$ for the graviton $h_{\mu\nu}=g_{\mu\nu}-\eta_{\mu\nu}$ can be found from the general expression valid for any (local or nonlocal) form factors $\cF_0(\B)$ and $\cF_2(\B)$ in $D$ dimensions \cite{Calcagni:2024xku}. From the generic action
\be\label{gravac2}
S=\frac{\Mpl^2}{2}\int\rmd^4x\,\sqrt{|g|}\,\left[R+ R\cF_0(\B)\,R+ G_{\mu\nu}\cF_2(\B)\,R^{\mu\nu}\right]\,,
\ee
in the case of \Eq{gravac} we have
\be
\cF_0(\B)=-c_0\lst^2 (\lst^4\B^2)^{\frac{\g}{2}-1}\,,\qquad \cF_2(\B)=-c_2\lst^2 (\lst^4\B^2)^{\frac{\g}{2}-1}\,,
\ee
and it is easy to see from \cite{Calcagni:2024xku} that the gauge-independent part of the Green's function in four dimensions is
\ba
\hspace{-1cm}\tilde G_{\mu\nu\s\t}(-k^2) &=& \frac{4}{\Mpl^{2}} \Bigg(\frac{{P}^{(2)}}{k^2 \left[1-k^2 \cF_2(-k^2) \right]}-\frac{{P}^{(0)}}{2k^2 \left\{1 + 2k^2 \left[3\cF_0(-k^2) + \cF_2(-k^2)\right]\right\}}\Bigg) \nn
&& + \textrm{gauge}\nn
&=&\frac{4\lst^2}{\Mpl^{2}} \left\{\frac{P^{(2)}_{\mu\nu\s\t}}{\lst^2k^2+c_2(\lst^4k^4)^{\frac{\g}{2}}}-\frac{P^{(0)}_{\mu\nu\s\t}}{2 \left[\lst^2k^2-2(3c_0+c_2)(\lst^4k^4)^{\frac{\g}{2}}\right]}\right\} + \textrm{gauge}\,,\label{gravprop}
\ea
where 
\ba
&& P^{(2)}_{\mu\nu\rho\s} \coloneqq \frac{1}{2} \left(\Theta_{\mu \rho} \Theta_{\nu \s} +  \Theta_{\mu\s} \Theta_{\nu\rho} \right)
- \frac{1}{3} \Theta_{\mu\nu} \Theta_{\rho\s} \,, \\
&&  P^{(0)}_{\mu\nu\rho\s} \coloneqq \frac{1}{3} \Theta_{\mu \nu} \, \Theta_{\rho \s}\,,\qquad \Theta_{\mu \nu} \coloneqq \eta_{\mu \nu} - \frac{k_{\mu} k_{\nu}}{k^2}\,,
\ea
are the Barnes--Rivers projectors in four dimensions in momentum space \cite{Riv64,Bar65,VanNieuwenhuizen:1973fi}.\footnote{In \cite{Calcagni:2021aap}, the propagator for $h_{\mu\nu}$ was found without decomposing it with the projectors.} The spin-2 part of \Eq{gravprop} has the same form as \Eq{propkh} with $c_1>0$. The spin-0 part has opposite sign but it can be gauged away and, therefore, does not constitute a problem in perturbation theory if the spin-2 term is not a ghost \cite{Calcagni:2024xku,Ordonez:1985kz,Ham09}.


\subsection{Renormalizability}\label{toyren} 

Let us recall that a QFT is non-renormalizable when it has divergences at all loop orders $L$ and their number grows with $L$ (Einstein gravity is an example); strictly renormalizable when it has divergences at all loop orders but their number does do not grow with $L$ and they can be reabsorbed by counter-terms of structurally the same form as terms present in the original action (Stelle gravity is an example); super-renormalizable when there are only a finite number of divergences up to some loop order $L_{\rm max}$, without counting divergent sub-diagrams; one-loop super-renormalizable when $L_{\rm max}=1$; finite if it has no divergences at any loop order. 

In the case of fractional QFT, an extension of Weinberg's theorem for local QFT \cite{Weinberg:1959nj} holds and the counter-terms to be added at one-loop level are local, finite in number, of dimensionality $\leq 4$ in four dimensions and identical to those for a standard local QFT \cite{Calcagni:2022shb}. In the scalar model \Eq{eq:model}, these counter-terms $\cO_{\b,i}$, $i=1,2,3,4$, are $\Lambda_{\rm cc}$ (cosmological constant), $\phi\B\phi$ (standard kinetic term), $m^2\phi^2$ (mass term) and $\phi^N$ (self-interaction). In the case of pure gravity, they are the curvature-squared terms $\cO_{\b,i}=R^2,\,R_{\mu\nu}R^{\mu\nu},\,R_{\mu\nu\s\t}R^{\mu\nu\s\t}$. Moreover, the Bogoliubov--Parasiuk--Hepp--Zimmermann (BPHZ) renormalization scheme \cite{Lowenstein:1975ug,Piguet:1995er} also holds \cite{Calcagni:2022shb}, which implies that all sub-divergences at any order in perturbation theory are accounted for by the above counter-terms found at one-loop order. Therefore, power counting is sufficient to establish when the theory is renormalizable as a function of $\g$. However, it is not sufficient to conclude when it is non-renormalizable, since there may be unforeseen cancellation mechanisms in action in Feynman diagrams. Hence, all the cases marked as ``non-renormalizable'' below should carry a question mark.

With this specification in mind, a power-counting analysis \cite{Calcagni:2021ljs,Calcagni:2022shb} shows that the theory is renormalizable when $\g\geq1$ and $\a\geq(N-2)/2$, when $1<\g<2$ and $(2-\g)(N-2)/2<\a<(N-2)/2$, or when $\g\geq 2$ and $0\leq \a<(N-2)/2$. More precisely:
\begin{itemize}
\item $\bm{\g\geq 1}$. The UV limit of the model is fractional for $\g>1$ and standard for $\g=1$ and we have:
\begin{itemize}
\item Fractional scalar theory ($\a=\g$) with $\phi^N$ potential for $N=3$ is
finite.
\item Fractional scalar theory ($\a=\g$) with $\phi^N$ potential for $N=4$ is
\bs\label{sreno}\ba
\hspace{-2.6cm}&&\text{strictly renormalizable:\hspace{2.55cm} $\g=1$}\,,\\
\hspace{-2.6cm}&&\text{super-renormalizable:\hspace{2.15cm} $1<\g\leq\frac{4}{3}$}\,,\\
\hspace{-2.6cm}&&\text{one-loop super-renormalizable:\hspace{.6cm} $\frac{4}{3}<\g\leq 2$}\,,\\
\hspace{-2.6cm}&&\text{finite:\hspace{4.9cm} $2<\g$}\,.
\ea\es	
\item Fractional scalar theory ($\a=\g$) with $\phi^N$ potential for $N\geq 5$ is
	\bs\label{sreno2}\ba
	\hspace{-1.4cm}&&\text{non-renormalizable:\hspace{4.55cm} $\g<\frac{2(N-2)}{N}$}\,,\\
	&&\text{strictly renormalizable:\hspace{4.cm} $\g=\frac{2(N-2)}{N}$}\,,\\
	&&\text{super-renormalizable:\hspace{2.15cm} $\frac{2(N-2)}{N}<\g\leq \frac{4(N-2)}{N+2}$}\,,\\
	&&\text{one-loop super-renormalizable:\hspace{.6cm} $\frac{4(N-2)}{N+2}<\g\leq N-2$}\,,\\
	\hspace{-2.6cm}&&\text{finite:\hspace{5.4cm} $N-2<\g$}\,.
	\ea\es	
\item Fractional gauge theory ($\a=1$, $N=4$, $m=0$) is
	\bs\label{greno}\ba
	\hspace{-2.7cm}&&\text{strictly renormalizable:\hspace{2.55cm} $\g=1$}\,,\\
	\hspace{-2.7cm}&&\text{super-renormalizable:\hspace{2.15cm} $1<\g\leq 2$}\,,\\
	\hspace{-2.7cm}&&\text{one-loop super-renormalizable:\hspace{.6cm} $2<\g$}\,.
	\ea\es
\item Fractional gravity ($\a=0$, $N=4$, $m=0$) is
	\bs\label{reno}\ba
	\hspace{-2.7cm}&&\text{non-renormalizable:\hspace{3.15cm} $\g<2$}\,,\\
	\hspace{-2.7cm}&&\text{strictly renormalizable:\hspace{2.6cm} $\g=2$}\,,\\
	\hspace{-2.7cm}&&\text{super-renormalizable:\hspace{2.15cm} $2<\g\leq 4$}\,,\\
	\hspace{-2.7cm}&&\text{one-loop super-renormalizable:\hspace{.6cm} $4<\g$}\,.\label{1lsu}
	\ea\es
\end{itemize}
\item $\bm{0<\g<1}$, $\bm{c_0=1}$. The UV limit of the model is standard and the renormalization properties are the above for $\g=1$. 
\item $\bm{0<\g<1}$, $\bm{c_1=0}$. The model is purely fractional (no term $\phi\B\phi$ in the actions \Eq{eq:modelnh} and \Eq{eq:model}) and we have:
\begin{itemize}
\item Fractional scalar theory ($\a=\g$) with $\phi^N$ potential for $N=3$ is
	\bs\label{sreno3}\ba
	\hspace{-2.6cm}&&\text{non-renormalizable:\hspace{3.2cm} $\g<\frac{2}{3}$}\,,\\
	\hspace{-2.6cm}&&\text{strictly renormalizable:\hspace{2.65cm} $\g=\frac{2}{3}$}\,,\\
	\hspace{-2.6cm}&&\text{super-renormalizable:\hspace{2.15cm} $\frac{2}{3}<\g\leq \frac{4}{5}$}\,,\\
	\hspace{-2.6cm}&&\text{one-loop super-renormalizable:\hspace{.6cm} $\frac{4}{5}<\g<1$}\,.
	\ea\es
\item Fractional scalar theory ($\a=\g$) with $\phi^N$ potential for $N\geq 4$ is non-renormal\-i\-za\-ble.
\item Fractional gauge theory ($\a=1$, $N=4$, $m=0$) is non-renormalizable.
\item Fractional gravity ($\a=0$, $N=4$, $m=0$) is non-renormalizable.
\end{itemize}
\end{itemize}
In all the renormalizable cases, there is a finite number of non-vanishing beta functions; for a single-field model, there are at most as many as the coefficients of the above counter-terms $\cO_{\b,i}$. One can make the theory finite by adding a finite set of operators $\cO_{{\rm K},i}$ called killers that do not change the propagator but make these beta functions vanish at all orders where divergences appear. These operators are local in asymptotically local quantum gravity \cite{Modesto:2014lga} and nonlocal in FQG \cite{Calcagni:2022shb}. Each beta function is a power series in $\hbar$ and has to be compensated to zero at any loop order by a new operator, always of the same functional form $\cO_{{\rm K},i}$ but with a different frontal coefficient adjusting the beta function at the corresponding loop level. The only difference between the strictly renormalizable and the super-renormalizable case is in the number of terms in the series, respectively, infinite and finite.

All these results were obtained for the non-hermitian fractional QFT \Eq{eq:modelnh} but they hold also for the model \Eq{eq:model} (hence for the theory \Eq{gravac}), since they rely on a power counting not affected by the property of hermiticity.
 
 
\subsection{Unitarity (so far)}\label{toyuni} 
 
Any Green's function can be expressed as a Cauchy integral in the $(\Re\,z,\Im\,z)$ complex plane. We used this general observation to study the propagator of non-hermitian fractional QFT, defined in Euclidean signature and where all amplitudes are analytically continued \emph{\`a la} Efimov  to Lorentzian signature \cite{Pius:2016jsl,Briscese:2018oyx,Chin:2018puw,Efimov:1967dpd,Koshelev:2021orf,Buoninfante:2022krn}. For simplicity, a purely fractional Green's function was considered ($c_1=0$, $c_2=1$):
\be\label{propzuv}
\tilde G(z) =\frac{1}{(-z)^\g}\,.
\ee
The Cauchy integral associated with this function is
\be\label{opt}
\tilde G(-\bar k^2)=\frac{1}{2\pi\rmi}\oint_{\G}\rmd z\,\frac{\tilde G(z)}{z+\bar k^2}\,,\qquad \tilde G(z)\coloneqq\frac{\tilde G_*(z)}{\lst^2}\,,
\ee
where $\G$ is any contour encircling only the pole of the integrand at $z=-\bar k^2$ and within and on top of which $\tilde G(z)$ is holomorphic. Calculated on the contour $\G_z$ shown in figure~\ref{fig1}, \Eq{opt} gives the spectral representation 
\be\label{spera1}
\frac{1}{(\bar k^2)^\g}=\int_0^{+\infty}\rmd s\,\frac{\rho(s)}{s+\bar k^2}\,,\qquad\rho(s) = \frac{1}{\pi}\frac{\sin(\pi\g)}{s^\g}\,.
\ee
It is possible to achieve tree-level unitarity but is technically complicated \cite{Calcagni:2022shb} and eventually requires to impose the fakeon prescription \cite{Anselmi:2021hab,Anselmi:2025uzj} projecting out ghost modes from the asymptotic spectrum order by order in perturbation theory.\footnote{The contributions of the circle at infinity and around the branch point $z=0$ vanish for $\g>0$ and $\g<1$, respectively. The interval $0<\g< 1$ guarantees that the tree-level spectral density $\rho(s)$ is positive-definite
for all $s$, provided $c_2>0$ as we assumed; the same condition giving a non-negative spectral density $\rho(s)\geq 0$ also holds at one loop \cite{Trinchero:2012zn,Trinchero:2017weu,Trinchero:2018gwe,Calcagni:2021ljs}. Therefore, one has to exclude the case $\g>1$, which would require $c_2<0$ to get $\rho(s)\geq 0$ but would make the contribution of the branch point divergent. However, for fractional gauge theories and gravity the interval $0<\g< 1$ corresponds either to a fractional operator dominating in the IR (in which case renormalizability is not improved with respect to standard QFT with $\g=1$) or to a purely fractional model (in which case only a $\phi^3$ scalar theory would be renormalizable for $\frac{2}{3}<\g<1$, while the other sectors would not), neither of which recovers Einstein's gravity in the IR. For this reason, a mass regularization or splitting procedure for the Green's function \Eq{propzuv} was proposed in \cite{Calcagni:2022shb}, where the branch cut is split into two cuts lying on the imaginary axis $\mathbb{I}$, so that each cut is characterized by an exponent $0<\g_i<1$ such that $\g_1+\g_2=\g$. Then, the cuts are rotated back to the original one. This regularization extends the interval $0<\g<1$ for tree-level unitarity to $0<\g<2$,
which covers the interval $1\leq\g<2$ and all its sub-intervals for the renormalizability of a $\phi^N$ scalar theory and of gauge theory. However, one does not yet reach the renormalizability range $\g>2$ for gravity, which then requires another iteration of the same splitting procedure plus the fakeon prescription on the propagator, leading to tree-level unitarity for any $\g>0$ \cite{Calcagni:2022shb}.}
\begin{figure}[ht]
	\bc
	\includegraphics[width=9cm]{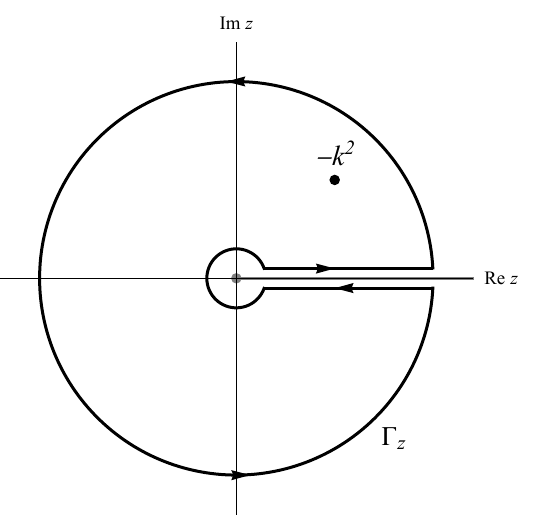}
	\ec
	\caption{\label{fig1} Contour of the Cauchy representation of the Green's function \Eq{propzuv}.}
\end{figure}  
\begin{figure}[ht]
	\bc
	\includegraphics[width=9cm]{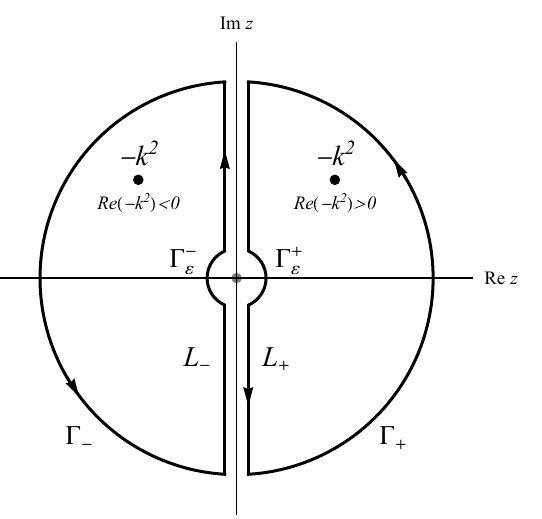}
	\ec
	\caption{\label{fig2} Contour of the Cauchy representation of the Green's functions \Eq{propzuv2} and \Eq{propz}. $\G_+$ and $\G_-$ are mutually disjoint pieces covering, respectively, the $\Re\,z>0$ and the $\Re\,z<0$ half-plane. The vertical lines $L_\pm$ run along the discontinuity at $\Re\,z=0$, while $\G_\ve^\pm$ circle around the branch point at $z=0$.}
\end{figure}  
\begin{figure}[ht]
	\bc
	\includegraphics[width=9cm]{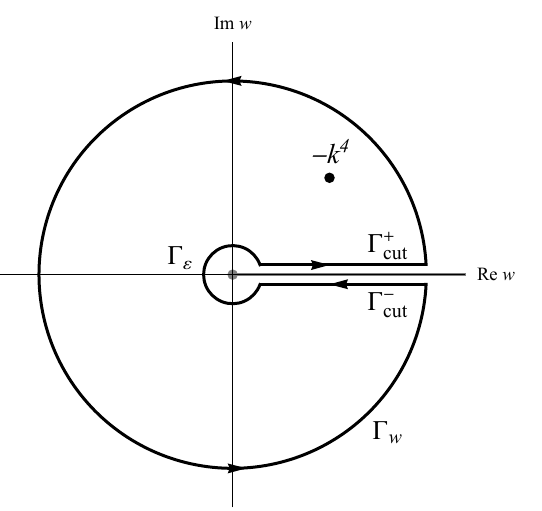}
	\ec
	\caption{\label{fig3} Contour of the Cauchy representation of the Green's functions \Eq{Gw} and \Eq{propw2} in the plane spanned by $w=-z^2$.}
\end{figure}  

The self-adjoint version of the theory is much more straightforward to deal with. Focusing again onto the UV limit of the propagator, 
\be\label{propzuv2}
\tilde G(z) =\frac{1}{c_2(z^2)^\frac{\g}{2}}\,,
\ee
it is not difficult to see that the propagator is represented by a non-analytic expression involving the Cauchy integral on two disjoint contours $\G_\pm$ \cite{Calcagni:2025wnn}:
\ba
\tilde G(-k^2)&=& \frac{\Theta[\Re(-\bar k^2)]}{2\pi\rmi}\int_{\G_+}\rmd z\,\frac{\tilde G_+(z)}{z+\bar k^2}+\frac{\Theta[\Re(\bar k^2)]}{2\pi\rmi}\int_{\G_-}\rmd z\,\frac{\tilde G_-(z)}{z+\bar k^2}\nn
&\eqqcolon& \Theta[\Re(-\bar k^2)]\,\tilde G_+(-k^2)+\Theta[\Re(\bar k^2)]\,\tilde G_-(-k^2)\,,\label{Gz2}
\ea
where $\tilde G_\pm(z) =c_2^{-1}(\pm z)^{-\g}$, $\Theta$ is Heaviside step function defined with the convention
\be\label{heavi}
\Theta(x)=\left\{\begin{matrix}
1\qquad(x> 0)\,,\\   
0\qquad(x\leq 0)\,,
\end{matrix}\right.
\ee
and the contours $\G_\pm$ are shown in figure~\ref{fig2}. One can calculate \Eq{Gz2} directly on this disconnected contour and then use an identity (unnumbered formula below eq.~(71) in \cite{{Calcagni:2025wnn}}) to get
\be\label{KLfin}
\frac{1}{(\bar k^4)^\frac{\g}{2}}=\int_0^{+\infty}\rmd s\,\frac{\rho(s)}{s^2+k^4}\,,\qquad \rho(s)=\frac{2}{c_2\pi}\frac{\sin\frac{\pi\g}{2}}{s^{\g-1}}\,.
\ee
An alternative and faster method to reach the same result, which we also employ in this paper, is to consider the Cauchy representation of $\tilde G(-k^2)$ in the complex plane of $w=-z^2$:
\be\label{Gw}
\tilde G(-k^2)=\frac{1}{2\pi\rmi}\oint_{\G_w}\rmd w\,\frac{\tilde G(w)}{w+k^4}\,,\qquad \tilde G(w)=\frac{1}{c_2(-w)^{\frac{\g}{2}}}\,,
\ee
where $\G_w$ is the same contour of figure~\ref{fig1} but in the $w$-plane and with pole $-k^4$ instead of $-k^2$ (figure~\ref{fig3}). The result is, of course, a trivial rewriting of \Eq{spera1}:
\ben\label{spera2}
\tilde G(-k^2)=\int_0^{+\infty}\rmd t\,\frac{\rho(t)}{t+\bar k^4}\,,\qquad\rho(t) = \frac{2}{c_2\pi}\frac{\sin\frac{\pi\g}{2}}{t^\frac{\g}{2}}\,.
\een
With the change of variables $t=s^2$, we recover \Eq{KLfin}. 

The trick, illustrated in figure~\ref{fig4}, is performed by the multi-valued change of variable $w=-z^2$. When passing from the $w$-plane to the $z$-plane, angles are halved and the $w$-plane is squeezed into a half-plane and mirrored across the imaginary $z$-axis. The contour $\G_w$ is opened like a book and rotated by 90 degrees counter-clockwise (respectively, clockwise) and mapped into $\G_+$ (respectively, $\G_-$) in the $z$-plane. Thus, the branch point $w=0$ is mapped in the origin $z=0$; the infinitesimal circle $\G_\ve$ around it is mapped into the half-circle $\G_\ve^+$ on the right half of the $z$-plane and on the half-circle $\G_\ve^-$ on the left half; the branch cut on $\Im\,w=0$, $\Re\,w>0$ is split open into the discontinuity on $\Im\,z=0$; and the straight paths $\G_{\rm cut}^+\cup \G_{\rm cut}^-$ parallel to the cut are mapped into $L_+$ on one half and on $L_-$ on the other.
\begin{figure}[ht]
	\bc
	\includegraphics[width=13cm]{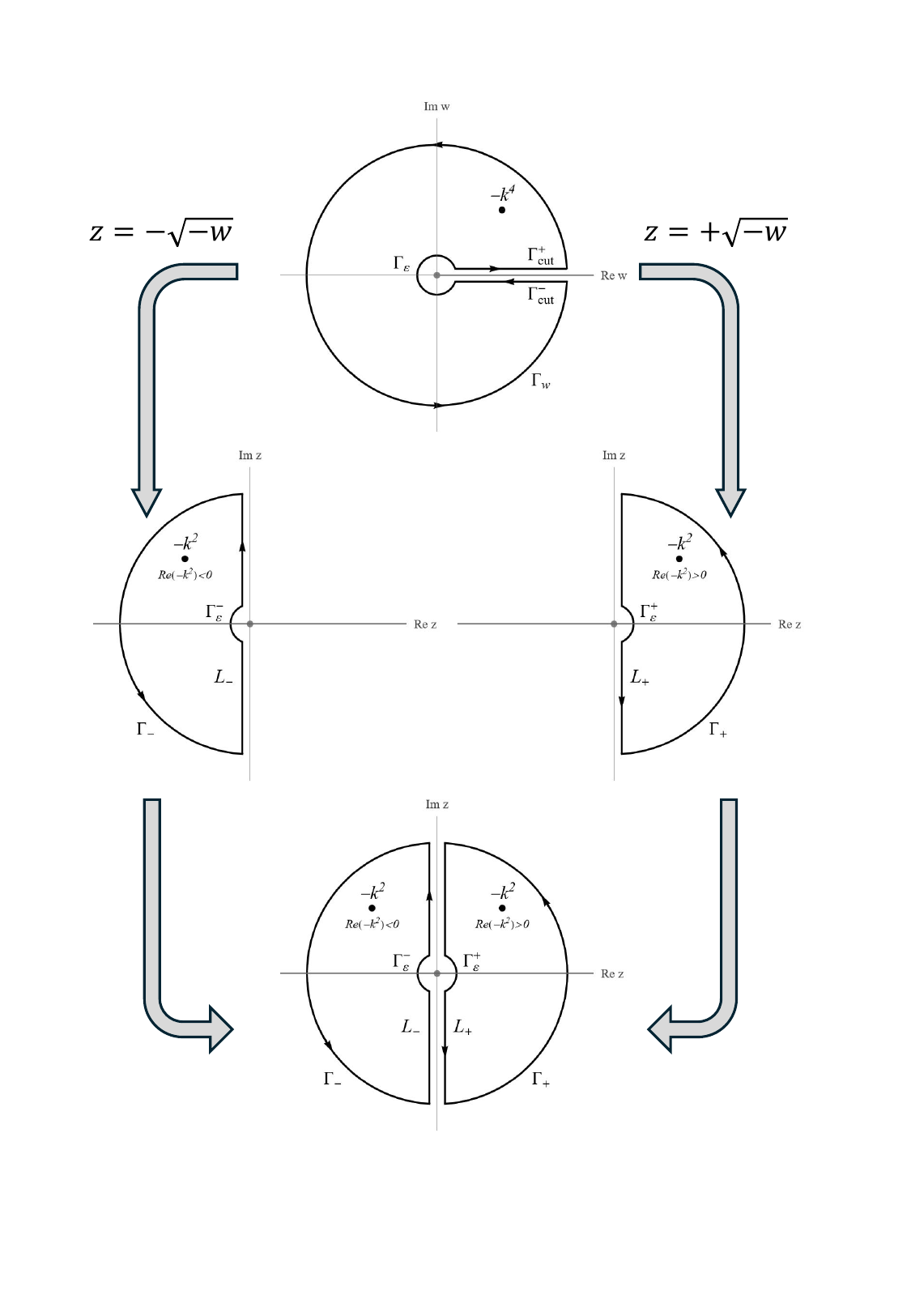}
	\ec
	\caption{\label{fig4} Mapping from the $w$-plane (figure~\ref{fig3}) to the $z$-planes with $z=\pm\sqrt{-w}$ and their combination into the disjoint contour of figure~\ref{fig2}.}
\end{figure}  

Unlike \Eq{spera1} where the spectral density $\rho(s)$ weighs standard propagators $1/(s+\bar k^2)$ with mass $s$, \Eqq{KLfin} is an infinite superposition of complex-conjugate pairs with quartic propagator $1/(s+\bar k^4)$. These modes are projected out via the fakeon procedure, which, contrary to other prescriptions, simultaneously preserves the optical theorem and Lorentz invariance \cite{Anselmi:2025uzj}. For this reason, the sign of $c_2$ in \Eq{propzuv2} is irrelevant and we do not need to worry about positivity of $\rho(s)$ in this model.

Since the relation between unitarity, complex poles and Lorentz invariance is very important for the self-consistency of FQG, we delve on it more in detail. The problem was first pointed out in Lee--Wick theories \cite{Lee:1969fy,Lee:1970iw,Lee:1969zze} by Nakanishi \cite{Nakanishi:1971jj} but is present in any theory with complex poles adopting the Lee--Wick--Nakanishi prescription \cite{Anselmi:2025uzj}. Although a sector with kinetic term $\B^2+M^4$ and all field indices properly contracted is clearly Lorentz invariant at the tree level, and the theory itself is invariant at any order under infinitesimal Lorentz transformations \cite{Lee:1969zze}, Lorentz symmetry is broken under finite transformations already at the one-loop level \cite{Nakanishi:1971jj}. The main cause, when poles at finite distance from the real axis are present, is Lee--Wick's prescription for complex poles: integration of $k^0\in\mathbb{C}$ on a path $\G$ in the complex plane $\mathbb{C}$ and integration of spatial momenta $\bm{k}\in\mathbb{R}^3$ on real values:
\be\label{ultra}
\int_\mathbb{R}\rmd k^0\int_{\mathbb{R}^3}\rmd \bm{k}\to \int_\G\rmd k^0\int_{\mathbb{R}^3}\rmd \bm{k}\,.
\ee
This setting is the same as in Efimov analytic continuation, which is used in nonlocal theories. Equation~\Eq{ultra} also characterizes the definition of ultra-distributions \cite{Seb58,Has61,Morimoto:1975vi}, which is the space of functionals on which fractional differential operators act \cite{Barci:1996br,Barci:1998wp,Calcagni:2025wnn}. Therefore, the same problem of Lee--Wick theories arises also in fractional QFT and, in particular, fractional gravity. 
Indeed, there are no symmetry violations in theories with entire form factors \cite{Briscese:2018oyx} but, unfortunately, fractional form factors are not entire. To restore Lorentz invariance, one can modify the prescription \Eq{ultra} and integrate also spatial momenta on complex paths \cite{Cutkosky:1969fq}:
\be\label{ultra2}
\int_\mathbb{R}\rmd k^0\int_{\mathbb{R}^3}\rmd \bm{k}\to \prod_{\mu=0}^3\int_{\G^{(\mu)}}\rmd k^\mu\,.
\ee
An explicit Lorentz-invariant procedure of this type is the fakeon prescription in its first version \cite{Anselmi:2017lia,Anselmi:2025uzj}, which is equivalent to the much simpler but non-analytic operation of replacing the Euclidean external momentum $\pe^2$ in the Euclidean amplitude with its Lorentzian analytic continuations $p^2\pm\rmi\e$ and then taking the average \cite{Anselmi:2025uzj}. For amplitudes with a mirror symmetry $\cM(\pe^2)=\cM(-\pe^2)$, the operation is even simpler and becomes the replacement $\pe^2\to p^2$ \cite{Anselmi:2025uzj,Liu:2022gun}.

The approximations \Eq{propzuv} and \Eq{propzuv2} were done with the idea of capturing the new physics coming from the fractional part of the propagator. However, while the renormalizability analysis depends on the dimension $D$ of spacetime and can be performed in the UV limit of the operators in the action, the unitarity analysis depends neither on $D$ nor on any particular energy scale limit. If present, ghosts can circulate in loops and affect the physics at all energies. Indeed, the analyses based on \Eq{propzuv} \cite{Calcagni:2022shb} and \Eq{propzuv2} \cite{Calcagni:2025wnn} are only preliminary and the propagator with $c_1=1$ must be used in order to fully understand the unitarity properties of the theory as a function of $\g$. We will do so directly in the self-adjoint version \Eq{gravac} of FQG.


\section{Tree-level Green's function}\label{sec3}

In this section, we explore the spectrum of a generalization of the self-adjoint toy model \Eq{eq:model}. The free-level Green's function in momentum space from \Eq{eq:model} is \Eq{propkh}. However, it turns out that this operator of type ``$\B+(\B^2)^{\g/2}$'' leads to some limitations to the allowed values of $\g$ that preclude one-loop super-renormalizability (see section~\ref{sec5}). For this reason, in this section we consider a kinetic term ``$\B[1+(\B^2)^{(\g_1-1)/2}+(\B^2)^{(\g_2-1)/2}+\dots+(\B^2)^{(\g-1)/2}$'' with some intermediate fractional operators of order $0<\g_1<\g_2<\dots<\g$, all positive because we want to avoid IR corrections to the dynamics. Despite the appearance, this kinetic term has a simpler structure than the other. Thus, we hereby study the singularities of the Green's function associated with this kinetic term, which we write in the complex plane with what we call the \emph{hermitian polynomial} (HP) definition:
\be\label{propz}
\boxd{\textrm{HP:}\qquad\tilde G(z) =\frac{1}{-z\big[1+c_2 (z^2)^{\frac{\om}{2}}\big]^\ups}\,,\qquad \g\equiv \ups\,\om+1\,,\quad \ups\in\mathbb{N}^+\,,}
\ee
where we recast the leading anomalous scaling $\g$ in terms of the integer power $\ups\geq 2$ (the $u=1$ case will be discussed in section~\ref{sec5}) of a fractional power $\om\notin \mathbb{N}$. The sign of $\om$ determines whether $\g\lessgtr 1$:
\bs\label{gom}\ba
&&-\frac12<-\frac{1}{\ups}<\om<0\,:\qquad 0<\g<1 \,,\label{gom1}\\
&&\hphantom{-\frac12<-\frac{1}{\ups}<}\,\om>0\,:\qquad\,\,\hphantom{0<}\g>1\,.\label{gom2}
\ea\es
The definition \Eq{propz} is called hermitian because such is the kinetic operator (which is actually not only hermitian but also self-adjoint) and polynomial because the denominator features a polynomial of $z^\om$. The singling out of $\om$ and $\ups$ in \Eq{propz} is arguably most natural, since fractional powers of operators are always defined as the composition of an integer power and a small exponent $|\om|<1$ \cite{Calcagni:2025wnn}.\footnote{Such composition can be of type $z^{-\om+\ups}$, $z^{\om+\ups}$ (see section~\ref{sec5} for a definition of $\tilde G(z)$ of this kind) or $z^{\om \ups}$ as in \Eq{propz} \cite{Calcagni:2025wnn}.}

The Green's function \Eq{propz} has a richer behaviour than its purely fractional counterpart \Eq{propzuv}. In fact, while the latter only has one branch point $z=0$ and a discontinuity on the whole imaginary axis $\mathbb{I}$, the former can have a certain number of poles with complex masses (infinitely many if $\g$ is irrational) distributed on a sequence of Riemann sheets. The rest of this section describes theses poles and sheets in a formal way and the trustful reader not interested in technicalities can skip in-text calculations and appendix~\ref{appB}.

To represent the Green's function \Eq{propz} directly on the $z$-plane of figure~\ref{fig2}, we first separate the $z=0$ singularity from the rest in order to avoid double dipping when combining the two halves. Writing
\be
\tilde G(z) =-\frac{1}{c_2z}+\left\{\frac{1}{c_2z}-\frac{1}{z\big[1+c_2 (z^2)^{\frac{\om}{2}}\big]^\ups}\right\}\eqqcolon -\frac{1}{c_2z}+\tilde\cG(z)\,,\label{propzt}
\ee
we define $\tilde\cG(z)$ in the $\Re\,z>0$ half-plane, take the branch $\sqrt{z^2}=+z$ and extend $\tilde\cG(z)$ to the $\Re\,z<0$ half-plane by picking the other branch $\sqrt{z^2}=-z$ therein: 
\bs\label{propz2split}\ba
\hspace{-1cm}\tilde G(z) &=& \Theta(\Re\,z)\,\tilde G_+(z)+\Theta(-\Re\,z)\,\tilde G_-(z)\nn
\hspace{-1cm}&=& -\frac{\Theta(\om)}{c_2z}+\Theta(\Re\,z)\,\cG_+(z)+\Theta(-\Re\,z)\,\cG_-(z)\,,\\
\hspace{-1cm}\tilde G_\pm(z) &=& -\frac{\Theta(\om)}{c_2z}+\cG_\pm(z)\,,\qquad
\cG_\pm(z)\coloneqq \pm\frac{\Theta(\om)}{c_2z}\mp\frac{1}{z\left[1+c_2(\pm z)^\om\right]^\ups}\,,\label{propz2split3}
\ea\es
where we used 
\be\label{heavi2}
\Theta(x)+\Theta(-x)=1\,,
\ee
and the fact that the $-1/z$ pole is absent for $\om<0$, as one can check. In terms of $\bar k^2$,
\ba
\tilde G(-\bar k^2) &=& \Theta[\Re(-\bar k^2)]\,\tilde G_+(-\bar k^2)+\Theta[\Re(\bar k^2)]\,\tilde G_-(-\bar k^2)\nn
&=& \frac{\Theta(\om)}{c_2\bar k^2}+\Theta[\Re(-\bar k^2)]\,\cG_+(-\bar k^2)+\Theta[\Re(\bar k^2)]\,\cG_-(-\bar k^2)\,.\label{TT1}
\ea
In the $w$-plane representation, which we use in section~\ref{brcu}, \Eqq{TT1} becomes
\be
\tilde G(-\bar k^2) =\frac{\Theta(\om)}{c_2\bar k^2}+\frac{1}{2\pi\rmi}\oint_{\G_w}\rmd w\,\frac{\cG(w)}{w+\bar k^4}\,,\label{propw1}
\ee
where
\be\label{propw2}
\boxd{\textrm{HP:}\qquad\cG(w) \coloneqq \frac{\Theta(\om)}{c_2(-w)^\frac12}-\frac{1}{(-w)^\frac12\big[1+ c_2(-w)^{\frac{\om}{2}}\big]^\ups}\,.}
\ee



\subsection{Branch point and discontinuity}

Contrary to the standard case (integer $\g$) where the point $z=0$ is a mass pole, for \Eq{propz} it is a branch point, as we prove in the next sub-section. 

Unless $\g$ is integer, this branch point and the associated branch cuts, which are the same as for the function $\tilde G(z)=(z^2)^{-\g/2}$ \Eq{propzuv2}, are always present in the theory. One branch cut runs on the positive imaginary semi-axis $\mathbb{I}^+$, while the other runs on the negative imaginary semi-axis $\mathbb{I}^-$. This creates a discontinuity between the $\Re\,z>0$ and the $\Re\,z<0$ half-planes which splits \Eq{propz} into two disconnected contributions shown in figure~\ref{fig2}. On each half, we assign one of the branches of $\sqrt{z^2}=\pm z$ \cite[Appendices A and B]{Calcagni:2025wnn}.


\subsection{Poles}\label{secpoles}

Recall that, for a complex number $z=x+\rmi y$ and $x\neq 0$, the principal value Arg of the argument is
\be\label{argxy}
{\rm Arg}(x + \rmi y) =
\left\{\begin{matrix}
	\arctan\left(\frac{y}{x}\right)\hphantom{+\pi}\qquad \text{if }\, x > 0\,,\hphantom{\, y \geq 0\,,} \\
	\arctan\left(\frac{y}{x}\right) + \pi\qquad \text{if }\, x < 0 \,,\, y \geq 0\,, \\
	\arctan\left(\frac{y}{x}\right) - \pi\qquad \text{if }\, x < 0 \,,\, y < 0\,.
\end{matrix}\right.
\ee
Writing $z=|z|\exp(\rmi\vp)$, the dispersion relation in the Green's function \Eq{propz2split} is
\ban
0&=& 1+c_2 (\pm z)^{\om}=1+c_2|z|^{\om}(\pm \rme^{\rmi\vp})^\om=1+{\rm sgn}(c_2)|c_2||z|^{\om}\rme^{\rmi\om(\vp+\de_\mp\pi)}\nn
&=& 1+|c_2||z|^{\om}\rme^{\rmi\om(\vp+\de_\mp\pi)+\rmi\frac{1-{\rm sgn}(c_2)}{2}\pi}\nn
&\eqqcolon& 1+|c_2||z|^{\om}\rme^{\rmi\psi}\,,
\ean
where we defined an angle $\psi$ and we chose to represent $\pm1=\exp(\rmi\pi\de_\mp)$ (this is a convention and any other choice would simply shift the position of the sheets along the Riemann surface). Then, to get a pole one requires $\psi=(2n+1)\pi$ and
\be
|z|=|c_2|^{-\frac{1}{\om}}\,,\qquad \vp=\vp_n^\pm\coloneqq\left[2n+1+\frac{{\rm sgn}(c_2)-1}{2}\right]\frac{\pi}{\om}-\de_\mp\pi\,,\qquad n\in\mathbb{Z}\,.
\ee
These singularities of the propagator, which we label as
\be\label{poles0}
z_n^\pm\coloneqq c_\om\,\rme^{\rmi\vp_n^\pm},\qquad c_\om\coloneqq |c_2|^{-\frac{1}{\om}}\,,\qquad n\in\mathbb{Z}\,,
\ee
are poles of order $\ups$, as one can see by parametrizing a small circle around each of them as $z=z_n^\pm+\ve\,\exp(\rmi\theta)$. Since $\pm z=[c_\om\cos(\vp_n^\pm+\de_\mp\pi)+\ve\cos(\theta+\de_\mp\pi)]+\rmi[c_\om\sin(\vp_n^\pm+\de_\mp\pi)+\ve\sin(\theta+\de_\mp\pi)]$, we have
\ben
f_n^\pm(\theta) \coloneqq \left\{1+c_2[c_\om^2+2c_\om\ve\cos(\vp_n^\pm-\theta)+\ve^2]^{\frac{\om}{2}}\rme^{\rmi\om\Arg(\pm z)}\right\}^\ups,
\een
where
\ben
\Arg(\pm z)=\arctan\left[\frac{c_\om\sin(\vp_n^\pm+\de_\mp\pi)+\ve\sin(\theta+\de_\mp\pi)}{c_\om\cos(\vp_n^\pm+\de_\mp\pi)+\ve\cos(\theta+\de_\mp\pi)}\right]+\de_\theta\pi\,,\qquad \de_\theta\coloneqq 0,\pm 1\,.
\een
If $c_2=0=c_\om$, then $\Arg(\pm z)=\theta+(\de_\mp+\de_\theta)\pi$ and $f_n(0)\neq f_n(2\pi)$, so that circling around $z_n=0$ changes Riemann sheet and this is a branch point. If instead $c_\om\neq 0$, we have that $f_n(0)=f_n(2\pi)$ and \Eq{poles0} is a pole. Also, the residue is (appendix~\ref{appA})
\be
{\rm Res}\big[\cG_\pm(z),z_n^\pm\big] = \pm\frac{1}{\left[-c_2\om(\pm z_n^\pm)^{\om}\right]^{\ups}}=\pm\frac{1}{\om^\ups}\,,\label{resi2}
\ee
which is finite ($\om\neq 0$, $z_n^\pm\neq 0$) and non-vanishing, hence $z_n^\pm$ are poles of order $\ups$.

A change of sign of $c_2$ results in a rotation of the phases $\vp_n^\pm$, a different positioning of the poles in the complex plane and the appearance of real poles. In turn, this affects the choice of physical Riemann sheet, since some unphysical poles that must be avoided migrate from one Riemann sheet to another. All of this pertains to the mathematical presentation of the theory but leaves the physics untouched. Moreover, since the length $\lst$ is another free parameter hidden in $z$, the absolute value of $c_2$ does not matter. Therefore, from now on we fix
\be
c_2=+1=c_\om\,,
\ee
and
\be\label{c2pha}
\boxd{z_n^\pm=\rme^{\rmi\vp_n^\pm}\,,\qquad \vp_n^\pm=
\frac{2n+1}{\om}\pi-\de_\mp\pi\,,\qquad n\in\mathbb{Z}\,.}
\ee
Hence
\be
\vp_n^+= \frac{2n+1}{\om}\pi\,,\qquad\vp_n^-= \frac{2n+1}{\om}\pi-\pi\,.
\ee

Each contour $\G_\pm$ in figures~\ref{fig1} and \ref{fig2} runs in the Riemann sheet $\tilde S_l^\pm$ spanned by a coordinate $z=|z|\exp(\rmi\tilde\tau_l^\pm)$. Thus, each $\tilde G_\pm(-\bar k^2)$ is defined on
\ben
\tilde S_l^\pm\,:\qquad \tilde\t_l^\pm\in\,\Big](2l-\de_\pm)\pi,(2l+1+\de_\mp)\pi\Big[\,,\qquad l\in\mathbb{Z}\,.
\een
However, since $\tilde G(-\bar k^2)$ has a discontinuity $\Re\,z=0$ dividing the complex plane into two disjoint parts, the Green's function \Eq{propz} is actually defined on two sheets $S_{l,l'}\coloneqq S_l^+\cup S^-_{l'}$, each half-plane $S_l^\pm$ being the domain of $\tilde G_\pm(z)$ except the origin. Note that $l'$ may differ from $l$, since one can glue any of the two halves of the plane available on the Riemann surface. Each Riemann sheet $S_l^\pm$, $l\in\mathbb{Z}$, is a connected one-dimensional complex surface spanned by a complex coordinate with phase $\t_l^\pm$:
\be\label{Smpm}
S_l^\pm\,:\qquad \t_l^\pm\in\,\Big]\left(2l\mp\tfrac12\right)\pi,\left(2l+1\mp\tfrac12\right)\pi\Big[\,,\qquad l\in\mathbb{Z}\,,
\ee
where the $z=0$ point is removed. In particular,
\be
S_0^+\,:\quad \t_l^+\in\left]-\frac{\pi}{2},\frac{\pi}{2}\vphantom{\frac{3\pi}{2}}\right[\,,\qquad\qquad S_0^-\,:\quad \t_l^+\in\left]\frac{\pi}{2},\frac{3\pi}{2}\right[\,.
\ee
Each pair of Riemann sheets defines a different theory with a different spectrum. This might seem to introduce a brand new level of ambiguity not experienced in ordinary QFT but, as we will see, super-selection rules dictated by the absence of tachyons and of unpaired modes with complex masses greatly reduce the possibilities. Those that survive only differ in the purely virtual sector and, if one changes $c_2\to -1$, in the presence or not of an extra real pole in the physical spectrum.

If $\g=p/q$ is rational ($p\in\mathbb{Z}\ni q$) and irreducible ($p$ and $q$ have no common divisors except $\pm 1$), then 
\be\label{ppqp}
\om=\frac{p-q}{q\ups}\eqqcolon \frac{p'}{q'}\,,
\ee
and there are $p'$ poles of order $\ups$:
\be\label{pqpoles}
\textrm{$\g=\frac{p}{q}$\, rational:}\qquad {\rm card}\big\{z_n^+,z_n^-\,:\,n \in\mathbb{Z}\big\}=p'\,,
\ee
with phases
\be\label{phasepq}
\vp_n^\pm=\frac{(2n+1)q\ups}{p-q}\pi-\de_\mp\pi\,.
\ee
The finite number of poles \Eq{pqpoles} are distributed on a sequence of $q'$ contiguous Riemann sheets which we can conventionally make it start from $S_0$:
\be\label{gooseq}
\{S_0^\pm,S_1^\pm,\dots,S_{q'-1}^\pm\}\,.
\ee

If $\g\in\mathbb{R}\setminus\mathbb{Q}$ is irrational, then there are infinitely many non-repeating poles in the Riemann surface:
\be\label{irrpoles}
\hspace{-.7cm}\textrm{$\g$\, irrational:}\qquad {\rm card}\big\{z_n^\pm\,:\,n \in\mathbb{Z}\big\}=\aleph_0\,.
\ee
The finite sequence of Riemann sheets \Eq{gooseq} becomes infinite. In fact, since the set $\mathbb{Q}$ is dense in the reals ($\overline{\mathbb{Q}}=\mathbb{R}$, where the bar denotes closure), one can approximate $\g$ with an arbitrarily close rational number $p/q$, such that $p,q\to\infty$ but the ratio $p/q$ stays finite. This leads to an infinite sequence of sheets:
\be\label{gooseq2}
\{\dots,S_{-1}^\pm,S_0^\pm,S_1^\pm,\dots\}\,.
\ee

In general, for $\g>0$ the Riemann sheets can be empty or host isolated complex poles and pairs of complex-conjugate poles of the Green's function \Eq{propz}.
\begin{itemize}
\item \textbf{Isolated complex poles.} Sheets with unpaired complex poles without their conjugate are physically inadmissible. If a conjugate pair is split across two Riemann sheets, then each of them has an unpaired complex pole leading to a mass spectrum with a non-zero imaginary part. 
\item \textbf{Complex-conjugate pairs.} If a conjugate pair is in the same sheet, then it does not affect unitarity by construction because their contribution is real-valued \cite{Lee:1969fy,Lee:1970iw,Lee:1969zze,Cutkosky:1969fq,Jansen:1993jj,Baulieu:2009ha,Modesto:2015ozb,Modesto:2016ofr,Mannheim:2018ljq,Mannheim:2020ryw} but Lorentz invariance is preserved only if one enforces the fakeon prescription \cite{Anselmi:2021hab,Liu:2022gun,Anselmi:2025uzj} (see section~\ref{cocopo} for more details). 
\end{itemize}
The fakeon prescription projects out the ghost modes carried by conjugate pairs but, as far as we know, it cannot fix the problems associated with 
 an isolated complex pole. Therefore, whenever a sheet contains 
  a complex unpaired pole it must be discarded, regardless of whether it also includes complex-conjugate pairs.

The existence and distribution of these poles on the Riemann surface is detailed in appendix~\ref{appB}. The most important result is Theorem~\ref{theo6-A}, stating that there always exists a Riemann sheet where the theory with $\g>0$ has no poles at all if
\be\label{omu2}
-\frac{1}{\ups}<\om<2\qquad\Longrightarrow\qquad 0<\g<2\ups+1\,.
\ee
This paves the way to a tremendous simplification of the theory. One always have the continuous spectrum of modes of the discontinuity, which is discussed below. But by defining the theory with an $\om$ in this interval, we remove all poles from $S_l^\pm$, healthy or unhealthy, and we do not have to worry about unwanted particle modes either spoiling the physical spectrum (coming from the $z=0$ point, which is outside $S_l^\pm$) or requiring the fakeon prescription. Since $\ups\geq 2$, the excellent news is that all of this happens at least for $0<\g<5$ and, practically, for an arbitrary range of $\g$ given that $\ups$ can be increased \emph{ad hoc}. This freedom is not needed, however, since \Eq{omu2} already encompasses the values in \Eq{sreno}--\Eq{reno} required for super-renormalizability. In particular, the minimal $\ups=2$ case is sufficient to remove real and complex poles from FQG while keeping one-loop super-renormalizability ($\g>4$ for gravity), according to \Eq{1lsu}.


\section{Tree-level spectrum and unitarity}\label{sec4}

In this section, we calculate the spectral representation of the Green's function \Eq{propz} on a generic Riemann sheet $S_{l}^+\cup S_{l'}^-$, where $l$ and $l'$ are not necessarily equal.

The contours on which one can integrate are those in figure~\ref{fig2} for, respectively, $\tilde G_-(z)$ and $\tilde G_+(z)$. Here $\G_\pm=\G_R^\pm\cup \G_\ve^\pm \cup L_\pm$, where $\G_R^\pm$ are the semi-circles at infinity, $\G_\ve^\pm$ are the semi-circles around the $z=0$ branch point and $L_\pm$ are the two pairs of straight integration lines parallel to $\mathbb{I}$. Of course, one can also calculate $\cG(w)$ on the contour $\G_w=\G_R\cup\G_\ve\cup\G_{\rm cut}^+\cup\G_{\rm cut}^-$ in the $w$-plane depicted in figure~\ref{fig3} and obtain the same result (in \cite{Briscese:2026jcf}, we take $w=+z^2$ and the branch cut is on the opposite semi-axis). This choice is actually preferred for the contribution of the discontinuity, since integration on $L_\pm$ in figure~\ref{fig2} does not lead to a clean spectral density $\rho(s)$ of the Green's function. The calculations are very similar to those in \cite{Calcagni:2022shb,Briscese:2024tvc,Calcagni:2025wnn}. There may also be infinitesimal circles $\G_n^\pm$ around poles (if present) inside these contours but they are not shown in the figures.


\subsection{Arcs at infinity}\label{sec42}

Parametrizing the coordinate along any of the $\G_R^\pm$ in figure~\ref{fig2} as $z=R\,\exp(\rmi\theta)$ with range $\theta_1\leq\theta<\theta_2$, we have
\ba
\frac{1}{2\pi\rmi}\int_{\G_R^\pm}\rmd z\,\frac{\tilde G_\pm(z)}{z+\bar k^2} &=& \frac{1}{2\pi}\int_{\theta_1}^{\theta_2}\rmd\theta\,\tilde G_\pm(R\,\rme^{\rmi\theta})+O\left(\frac{1}{R}\right)\nn
&=& \frac{1}{2\pi}\int_{\theta_1}^{\theta_2}\frac{\rmd\theta\,\rme^{-\rmi(\theta-\de_\pm\pi)}}{R[1+R^\om\rme^{\rmi\om(\theta+\de_\mp\pi)}]^\ups}+O\left(\frac{1}{R}\right)\nn
&\stackrel{R\to\infty}{\rightarrow}& 0\,,\label{gR2g2}
\ea
which is true for all $\om$ (hence for all $\g$) since $u>0$:
\be\label{gR2g2bis}
\forall\,\om\in\mathbb{R}\qquad\Longrightarrow\qquad \forall\,\g\in\mathbb{R}\,.
\ee
A calculation in the $w$-plane clearly gives the same result. Note the difference with respect to the purely fractional case $c_1=0$, for which $\g>0$ in order for this contribution to vanish.


\subsection{Complex poles}\label{cocopo}

Although we can always eliminate the poles, we calculate their contribution for completeness. On $S_l^\pm$,
\ba
\tilde G_n^\pm(-\bar k^2) &=& \lim_{\ve\to 0^+}\frac{1}{2\pi\rmi}\oint_{\G_n^\pm}\rmd z\,\frac{\tilde G_\pm(z)}{z+\bar k^2}\nn
&=& \lim_{\ve\to 0^+}\frac{\ve}{2\pi}\frac{1}{z_n^\pm+\bar k^2}\int_{2\pi}^0\rmd\theta \,\rme^{\rmi\theta}\tilde G_\pm(z_n^\pm+\ve\,\rme^{\rmi\theta})\nn
&=& -\frac{{\rm Res}[\tilde G_n^\pm(z),z_n^\pm]}{z_n^\pm+\bar k^2}\nn
&\stackrel{\textrm{\tiny \Eq{resi2}}}{=}& \frac{\mp\om^{-\ups}}{z_n^\pm+\bar k^2}\,,
\ea
where $\G_n^\pm$ is a circle of radius $\ve$ around $z_n^\pm=\exp(\rmi\bar\vp_n^\pm)$ parametrized as $z=z_n^\pm+\ve\exp(\rmi\theta)$. The penultimate step can be verified by an explicit calculation of the contour integral. The extra $-$ sign with respect to the residue is due to the fact that $\G_n^\pm$ is clock-wise.

For complex-conjugate pairs of poles $n=1,\dots, \bar n_\pm$, the contribution in each half-plane $S_l^\pm$ is
\be\label{Ipairs}
\tilde G_{\rm pairs}^\pm(-\bar k^2) =\sum_{n=1}^{\bar n_\pm}\left(\tilde G_n^\pm+\tilde G_n^{\pm*}\right)=\mp2\sum_{n=1}^{\bar n_\pm}\frac{\cos\vp_n^\pm+\bar k^2}{1+2\bar k^2\cos\vp_n^\pm+\bar k^4}\,,
\ee
which are then combined together as in \Eq{Gz2}.

Equation \Eq{Ipairs} is the Green's function of a complex-conjugate pair as a whole. By a field redefinition, one can see that the modes in a complex-conjugate pair always have opposite sign, which is the reason why they are usually described as complex ghosts \cite{Baulieu:2009ha,Asorey:2024mkb}. Since we cannot have such modes in the spectrum without violating unitarity, the issue is how to avoid them as asymptotic states. This is not granted \emph{a priori}. An isolated complex pole representing an on-shell particle would be physically observable as an imaginary component of the spectrum, which should not be the case because, by conservation of energy, incoming particles with real momenta cannot produce on-shell states with complex masses and \emph{vice versa}. This argument applies also to complex-conjugate pairs (these correspond to states with real total energy but only in some Lorentz frames, so that, in general, their energy may be complex-valued \cite{Jansen:1993jj}) but it does not prevent complex modes to enter as in-states.

Eventually, the degrees of freedom associated with complex-conjugate poles, called $\rmi$-particles in Yang--Mills theory and quantum chromodynamics \cite{Baulieu:2009ha}, do not appear in the physical spectrum as on-shell asymptotic in- or out-states and do not affect unitarity. Below, we recall how one reaches this conclusion. The off-shellness of $\rmi$-particles has been studied with a plethora of terminologies, arguments and techniques in past decades \cite{Lee:1969fy,Lee:1970iw,Cutkosky:1969fq,Jansen:1993jj,Modesto:2015ozb,Modesto:2016ofr,Anselmi:2017yux,Anselmi:2017lia,Liu:2022gun,Yamamoto:1970gw,Nakanishi:1972pt} but, because of this strong heterogeneity, it may be difficult to navigate through this sea of ideas and a compass has been provided only recently \cite{Anselmi:2022toe,Anselmi:2025uzj}. There are three ways in which $\rmi$-particles can be off-shell: as confined modes, as unstable modes, or as purely virtual modes.

\subsubsection{Confined modes}

One way to think of off-shellness is the interpretation of the pairing of complex-conjugate poles as confinement. For any given complex pair with phase $\vp_n^\pm=\vp$, calling $\rme^{\rmi\vp}=m_0^2+\rmi\, m_1^2$, where $m_0^2,m_1^2>0$, \Eq{Ipairs} can be written as (no $\pm\rmi\e$ prescription for the poles, since they are away from the real axis)
\be
\frac{1}{k^2+m_0^2+\rmi\, m_1^2}+\frac{1}{k^2+m_0^2-\rmi\,m_1^2}=\frac{2(k^2+m_0^2)}{(k^2+m_0^2)^2+m_1^4}\in\mathbb{R}\,.\label{ps2}
\ee
In gauge theories, the combined Green's function \Eq{ps2} is known as Gribov propagator and it describes a bound pair 
\cite{Baulieu:2009ha,Capri:2012hh,Dudal:2005na,Dudal:2007cw,Dudal:2008sp,Dudal:2011gd} (see \cite{Nakanishi:1975aq} for an early proposal). Circumstantial evidence of confinement is usually invoked by calculating the correlation function (bubble diagram) of the composite operator representing the pair and seeing that it has a pole with finite mass. Noting the striking similarity between the tree-level propagator for complex-conjugate pairs of ghosts in higher-derivative gravity and the non-perturbative propagator for gluons, in \cite{Liu:2022gun,Holdom:2015kbf,deBrito:2023pli} it was speculated that such pairs are confined states not appearing in the asymptotic spectrum. A similar conclusion was drawn in nonlocal systems in the presence of interactions with a gauge field \cite{Frasca:2022gdz}. Recently, in a toy model with a quartic potential it was demonstrated that these pairs are indeed bound states \cite{Asorey:2024mkb}. It is possible that the same could also happen in fractional theories, since the confining mechanism requires an interaction term for complex-conjugate modes, which is naturally present in gravity and can be introduced by hand in other cases. In ordinary QFT, a two-particle bound state is a pole at a mass between the single-particle state and the normal threshold $s=(2m)^2$, i.e., a state with a mass higher than $m$ but with not enough energy $2m$ to be torn apart. The same might hold in fractional QFT with conjugate pairs but, in the lack of a calculation along the lines of \cite{Capri:2012hh,Asorey:2024mkb}, here we remain agnostic about this point.

\subsubsection{Unstable modes}

Complex modes can also be regarded as off-shell by nature when interpreted as unstable particles. This topic is tightly related to how one defines scattering amplitudes and to several properties of the theory such as unitarity, the validity of the optical theorem, Lorentz invariance, hermiticity of the action and the Hamiltonian, the validity of power-counting renormalizability, the existence of a well-defined classical limit, and so on. In the presence of complex poles, there exist inequivalent definitions of Lorentzian scattering amplitudes, associated with inequivalent ways to project out complex modes from the asymptotic spectrum. Most procedures entail problems \cite{Anselmi:2022toe,Anselmi:2025uzj} such as a sign ambiguity in the amplitude (Cutkosky et al.'s (CLOP) prescription \cite{Cutkosky:1969fq,Anselmi:2017yux}), violation of the optical theorem (standard analytic continuation form Euclidean to Lorentzian amplitudes), of both power counting and the optical theorem (direct calculation on Minkowski spacetime \cite{Aglietti:2016pwz}) or of Lorentz invariance (Lee--Wick--Nakanishi prescription \cite{Lee:1969fy,Lee:1970iw,Nakanishi:1971jj}). In contrast, the fakeon prescription leads to a well-defined QFT with complex poles on Minkowski spacetime, respecting Lorentz invariance and the optical theorem \cite{Anselmi:2022toe,Anselmi:2025uzj}. (For the case of complex poles on curved backgrounds, see \cite{Koshelev:2020fok,Tokareva:2024sct}.)

The Lee--Wick--Nakanishi prescription is a particularly interesting illustration of how delicate is to keep $\rmi$-particles out of the spectrum while preserving vital properties of the theory. When quantized \emph{à la} Lee--Wick, complex modes are unstable and can thus be projected out via Veltman's procedure \cite{Veltman:1963th}. Unstable modes decay into other degrees of freedom, hence ideally they never appear asymptotically as in- or out-states in scattering amplitudes \cite{Veltman:1963th}.\footnote{In the literature of QFT with complex masses, these states have been characterized also as ``unphysical'' because they are not directly observable as single-particle asymptotic states in the physical Hilbert space, or even as a pair \cite{Jansen:1993jj,Modesto:2016ofr,Capri:2012hh}. We refrain from using the term ``unphysical'' because unstable particles can still leave an observable, indirect imprint in the way they affect scattering processes, or as resonances with real mass \cite{Jansen:1993jj}.} However, in realistic situations where the measurement of initial and final states is not performed in the infinite past and future, if the lifetime of such states is sufficiently large then they leave a physical imprint in the scattering process. This translates into a non-hermitian classical limit \cite{Anselmi:2022toe}. As an added problem, Lorentz symmetry is lost in this procedure \cite{Nakanishi:1971jj,Anselmi:2017lia,Anselmi:2025uzj}.

\subsubsection{Purely virtual modes}

Instead of relying on the instability or the confinement of modes with complex masses, another possibility with a very different physical meaning is to force them to be purely virtual. The fakeon prescription does precisely that. According to the optical theorem, the imaginary part of the amplitude can be written as a sum over intermediate states. If complex ghosts contributed to the imaginary part, then they would appear as intermediate states and would thus violate unitarity. The Lee--Wick--Nakanishi, CLOP and fakeon procedures guarantee that the net contribution of complex conjugate pairs is manifestly real and does not enter the imaginary part of scattering amplitudes. Expression \Eq{ps2} \ (and its equivalent \Eq{Ipairs} in fractional theories) lacks the singularity points in the imaginary part that typically dictate the on-shell dispersion relation. Since the imaginary part of the contribution of complex-conjugate pairs to scattering amplitudes is zero in any of the above prescriptions, the sum of states in the optical theorem is only on intermediate states with real masses and unitarity on Minkowski spacetime is thus unaffected by the presence of the pairs. The fakeon prescription also guarantees Lorentz invariance at the same time.
To understand intuitively how this procedure makes particles purely virtual, recall the Sokhotski--Plemelj formula for the standard Green's function with Feynman prescription:
\be
\frac{1}{x-\rmi\e}=\frac{1}{x-\rmi\e}\,\frac{x+\rmi\e}{x+\rmi\e}=\frac{x}{x^2+\e^2}+\rmi\,\frac{\e}{x^2+\e^2}\stackrel{\e\to 0^+}{=} {\rm PV}\!\left(\frac{1}{x}\right)+\rmi\pi\de(x)\,,\label{plemsok}
\ee
where PV is the principal value and $x=k^2+m^2$. The physical degrees of freedom $x=0$ are singled out by the imaginary part. At tree level, the fakeon prescription consists in taking only the PV part in \Eq{plemsok} or, in other words, the average continuation
\be
\tilde G_{\rm fake}(-k^2)=\frac{1}{2}\left[\frac{1}{k^2+m^2-\rmi\e}+\frac{1}{k^2+m^2+\rmi\e}\right]={\rm PV}\frac{1}{k^2+m^2}\,.
\ee

Therefore, in fractional QFT complex pairs do not appear as asymptotic states either because we choose a physical Riemann sheet empty of them (which is always possible thanks to Theorem~\ref{theo6-A}) or because they are purely virtual (fakeon prescription). We pick the first avenue for complex poles but we will have to take the second one for the modes of the discontinuity, as we will see shortly.


\subsection{Branch point}\label{sec43}

In each of the contours in figure~\ref{fig2} on, respectively, the complex planes $S_l^\pm$, parametrizing $z=\ve\,\exp(\rmi\theta)$ we get the contribution
\ban
\cG_0^\pm(-\bar k^2) &=& \lim_{\ve\to 0^+}\frac{1}{2\pi\rmi}\int_{\G_\ve^\pm}\rmd z\,\frac{\cG_\pm(z)}{z+\bar k^2}\nn
&=& \lim_{\ve\to 0^+}\frac{\ve}{2\pi}\frac{1}{\bar k^2}\int_{\frac{\pi}{2}+\de_\mp\pi+2\pi l}^{\mp\frac{\pi}{2}+2\pi l}\rmd\theta \,\rme^{\rmi\theta}\cG_\pm(\ve\,\rme^{\rmi\theta})\nn
&\stackrel{\textrm{\tiny \Eq{propz}}}{=}& 
\lim_{\ve\to 0^+}\frac{\pm 1}{2\pi}\frac{1}{\bar k^2}\int_{\frac{\pi}{2}+\de_\mp\pi+2\pi l}^{\mp\frac{\pi}{2}+2\pi l}\rmd\theta\left\{\Theta(\om)-\frac{1}{\left[1+\ve^\om\rme^{\rmi\om(\theta+\de_\mp\pi)}\right]^\ups}\right\}.
\ean
The limit gives zero for any non-integer $\om$. From \Eq{TT1},
\be\label{Ive}
\tilde G_0(-\bar k^2) =\left\{\begin{matrix}
	0 &\qquad (\g<1)\,,\\
	\\
	\dfrac{1}{\bar k^2}&\qquad (\g>1)\,.
\end{matrix}\right.
\ee
This contribution 
is excellent news because it is the key piece to obtain a standard IR limit.

In the $w$-plane, there is no contribution from the branch point because it goes as $\sim\sqrt{\ve}$ for any $\om$. Indeed, \Eq{Ive} is fully contained in the separate term $\Theta(\om)/\bar k^2$ in \Eq{propw1}.


\subsection{Discontinuity}\label{brcu}

In this sub-section, we use the $w$-plane trick of figure~\ref{fig4} since the direct calculation on the $z$-plane is much more involved and suffers from technical obstructions such as the absence of a useful identity employed in the purely fractional case (unnumbered formula below eq.~(71) in \cite{{Calcagni:2025wnn}}). From \Eq{propw1}, it is easy to see that the contribution to the cut is
\ba
\tilde G_{\rm cut}(-\bar k^2) &=&\frac{1}{2\pi\rmi}\int_{\G_{\rm cut}^+\cup\G_{\rm cut}^-}\rmd w\,\frac{\cG(w)}{w+\bar k^4}\nn
&=&\lim_{\ve\to 0^+} \frac{1}{2\pi\rmi}\left[\int_0^{+\infty}\rmd t\,\frac{\cG(t+\rmi\ve)}{t+\rmi\ve+\bar k^4}+\int^0_{+\infty}\rmd t\,\frac{\cG(t-\rmi\ve)}{t-\rmi\ve+\bar k^4}\right]\nn
&=&\lim_{\ve\to 0^+} \frac{1}{2\pi\rmi}\int_0^{+\infty}\rmd t\,\frac{\cG(t+\rmi\ve)-\cG(t-\rmi\ve)}{t+\bar k^4}\,.\label{propwcut}
\ea
Noting that
\ben
\Arg(-t\mp\rmi\ve)=\arctan\frac{\mp\ve}{-t}\mp\pi\,\stackrel{\ve\to 0^+}{=}\, \mp\pi\,,
\een
from \Eq{propw2} we get
\ban
\cG(t\pm\rmi\ve)&=&\frac{\Theta(\om)}{(-t\mp\rmi\ve)^\frac12}-\frac{1}{(-t\mp\rmi\ve)^\frac12\big[1+ (-t\mp\rmi\ve)^{\frac{\om}{2}}\big]^\ups}\nn
&=&\mp\frac{\Theta(\om)}{\rmi\sqrt{t}}\pm\frac{1}{\rmi\sqrt{t}\left(1+t^{\frac{\om}{2}} \rme^{\mp\rmi\frac{\pi\om}{2}}\right)^\ups}\nn
&\stackrel{\ve\to 0^+}{=}&\mp\frac{\Theta(\om)}{\rmi\sqrt{t}}\pm\frac{\left(1+t^{\frac{\om}{2}} \rme^{\pm\rmi\frac{\pi\om}{2}}\right)^\ups}{\rmi\sqrt{t}\left(1+2t^{\frac{\om}{2}}\cos\frac{\pi\om}{2}+t^{\om}\right)^\ups}\,.\label{GG}
\ean
After renaming $t=s^2$ with $s>0$, we obtain
\ban
\tilde G_{\rm cut}(-\bar k^2) &=& \frac{1}{2\pi\rmi}\int_0^{+\infty}\rmd s\,2s\,\frac{\cG(s^2+\rmi\ve)-\cG(s^2-\rmi\ve)}{s^2+\bar k^4}\nn
&=& \frac{1}{\pi}\int_0^{+\infty}\rmd s\,\frac{1}{s^2+\bar k^4}\left[2\Theta(\om)-\frac{\left(1+s^\om\rme^{\rmi\frac{\pi\om}{2}}\right)^\ups
+\left(1+s^\om\rme^{-\rmi\frac{\pi\om}{2}}\right)^\ups}{\left(1+2s^\om\cos\frac{\pi\om}{2}+s^{2\om}\right)^\ups}\right],
\ean
and using the binomial series we reach the final expression 
\bs\label{profin}\ba
\tilde G_{\rm cut}(-\bar k^2) &=& \int_0^{+\infty}\rmd s\,\frac{\rho(s)}{s^2+\bar k^4}\,,\\
\rho(s) &\coloneqq& \frac{2}{\pi}\left[\Theta(\om)-\frac{1}{\left(1+2s^\om\cos\frac{\pi\om}{2}+s^{2\om}\right)^\ups}\sum_{j=0}^\ups\binom{u}{j}s^{\om j}\cos\frac{\pi\om j}{2}\right].
\ea
\es
This is a superposition of complex-conjugate modes with a continuum of masses $s$. Since these pairs are purely virtual, unitarity holds at the tree level \emph{regardless of the sign of $\rho(s)$}. Note some interesting limits:
\begin{itemize}
\item For small $s$, the spectral density tends to zero: $\rho(s)\simeq -(2/\pi)s^{-\om\ups}\cos(\pi\om\ups/2)=-(2/\pi)s^{1-\g}\sin(\pi\g/2)$ for $\om<0$ and $\rho(s)\simeq (2/\pi)\ups s^\om\cos(\pi\om/2)$ for $\om>0$.
\item For large $s$, the spectral density tends to ${\rm sgn}(\om)(2/\pi)$: $\rho(s)\simeq (2/\pi)[\ups s^\om\cos(\pi\om/2)-1]$ for $\om<0$ and $\rho(s)\simeq-(2/\pi)[s^{1-\g}\sin(\pi\g/2)-1]$ for $\om>0$.
\item The purely fractional case is recovered by restoring $\Theta(\om)\to \Theta(\om)/c_2$ and $s^\om\to c_2s^\om$ in \Eq{profin} and then taking $c_2\gg (c_2s^\om)^\ups\gg 1$. This is indeed \Eq{KLfin} with $c_2=-1$ (compare \Eq{propzuv2} and \Eq{propz}).
\end{itemize}

Usually, the physical interpretation of a branch cut is that of a gas of particles with a continuous mass spectrum. In standard QFT, a typical branch cut starts at some multi-particle threshold $s=\left(\sum_im_i\right)^2$. This can be immediately understood as an interactive process occurring at high-enough energies to produce unbounded, intermediate, multi-particle states which are off-shell \cite{PeSc}. In our case, we have a discontinuity which is present already at the tree level in the absence of interactions, integration starts at $s=0$, i.e., at the mass $-k^2=m^2$ of a single particle, and it is difficult to assign the same physical meaning as for normal thresholds. Even if one could still interpret this as a continuum of off-shell multi-particle states not belonging to the physical spectrum, one must apply the fakeon prescription anyway because \Eq{profin} is the continuous superposition of complex pairs, which break Lorentz invariance in other prescriptions (sections~\ref{toyuni} and \ref{cocopo}).



\section{Uniqueness of fractional QFT}\label{sec5}

In general, \emph{any} fundamental perturbative QFT, even local, may be based upon a non-unique choice of kinetic term which is subsequently limited by requirements of self- and empirical consistency. It must: (i) reproduce observable phenomena, including symmetries; (ii) allow for a well-defined classical problem of initial conditions; (iii) preserve unitarity; (iv) be renormalizable. The same applies to nonlocal theories including FQG and, indeed, these requirements severely reduce the available choices. The reader can appreciate the strength of this combined super-selection by comparing the initial proposal \cite{Calcagni:2009kc} with the latest formulation with the self-adjoint extension of $\B^\g$ \cite{Calcagni:2025wnn}, passing through a number of alternative models \cite{Calcagni:2021ipd}.

However, contrary to local QFT, this super-selection is not enough. On one hand, for the HP definition we have seen that different Riemann sheets carry different poles and that one can always tune $\ups$ and $\g$ to get a specific pole distribution in any given sheet (this is nothing but the ambiguity in parametrizing the HP definition with generic $\ups\in\mathbb{N}^+$ and $\g>1$) or, conversely, a desired class of pole distributions (e.g., without complex poles) in more than one sheet. In ordinary QFT, at the tree level there is only one Riemann sheet, while at higher orders in the perturbative expansion one can single out what is called the ``physical sheet'' in the language of S-matrix theory, whereon the spectrum of the theory is well-defined \cite{ELOP}. In fractional QFT, we can select a physical sheet once we require the absence of complex poles. However, we can get equally viable, different versions of the theory if we allow for complex-conjugate pairs, even if we fakeonize them and they are not observable directly.

On the other hand, the definition of the multi-fractional denominator in \Eq{propz} is not the only one and there are various ways to express the $\g$ power of $\B$ without altering the overall scaling $\g$ responsible for the renormalizability of the theory. This raises a question often asked in nonlocal QFT: If there is no unique way to select the form factor and one has many choices available, how can such theories be predictive?

The problem of having a different content of complex modes in different sheets is solved by the fakeon prescription, since the choice of sheet becomes a matter of convention. The problem of having different possible definitions of the kinetic term requires more and immediate attention. Below, we study other definitions which essentially cover all main possibilities. We show that the physical spectrum is the same as for \Eq{propz} and that the only differences emerge in the mass distribution of the fakeons. This only affects the purely virtual sector, which is not observable directly with the possible exception of some transient phenomena at the scale $\lst$ (see section~\ref{sec7}).



\subsection{Other versions of the fractional Green's function}\label{othrep}

We begin with the \emph{hermitian simple} (HS) definition
\be\label{hsim}
\boxd{\textrm{HS:}\qquad\tilde G(z) = \frac{1}{-z+c_2 (z^2)^\frac{\g}{2}} \,.}
\ee
This is called simple because it reproduces the intuitive toy model \Eq{eq:model} and its Green's function \Eq{propkh}. Equation~\Eq{hsim} is a special case of \Eq{propz} for $u=1$ and $c_2\to-c_2$.

Another operator has the \emph{hermitian asymmetric} (HA) definition
\be\label{hasy}
\boxd{\textrm{HA:}\qquad \tilde G(z) =\frac{1}{-z+c_2 (-z)^{\Upsilon}(z^2)^\frac{\g-\Upsilon}{2}}\,,\qquad \Upsilon=\lfloor\g\rfloor\,.}
\ee
It is called asymmetric because the exponent $\g$ has been redistributed between its integer floor $\Upsilon$ and the fractional or irrational part $\g-\Upsilon$. This arrangement is justified by the basic Balakrishnan--Komatsu representation of fractional powers of operators, which starts from a negative (here positive) exponent and then augments it with an integer power \cite{Calcagni:2025wnn}. Also, when $\Upsilon=1$, this is a special case of the HP definition \Eq{propz} with $\ups=1$, while when $\Upsilon$ is even one recovers the HS case.

Finally, we also consider the \emph{non-hermitian} (NH) operator that defined the theory in \cite{Calcagni:2021ljs,Calcagni:2021aap,Calcagni:2022shb}:
\be\label{nhrep}
\boxd{\textrm{NH:}\qquad\tilde G(z) =\frac{1}{-z+c_2 (-z)^\g}\,.}
\ee
The non-self-adjoint case is interesting because, perhaps surprisingly, it is as unitary as the others despite not having a well-defined classical limit.

\subsubsection{Arcs at infinity, branch point and discontinuity}

The result of section~\ref{sec42} can be adapted to the other models. The arcs at infinity give 0 for any $\g\in\mathbb{R}$. It is also not difficult to see that $z=0$ is a branch point for all these cases, for which we get \Eq{Ive}. 

HS and HA have a discontinuity on the whole imaginary axis $\mathbb{I}$ and they are represented by \Eqq{propz2split} with $\tilde G_\pm(z) =-\Theta(\g-1)/(c_2z)+\cG_\pm(z)$ given by
\ba
&&\textrm{HS:}\qquad \cG_\pm(z) =\pm\frac{\Theta(\g-1)}{c_2z}\pm\frac{1}{z\left[c_2(\pm z)^{\g-1}-1\right]}\,,\label{hsim2}\\
&&\textrm{HA:}\qquad \cG_\pm(z) =\pm\frac{\Theta(\g-1)}{c_2z}\pm\frac{1}{z\left[c_2(-1)^\Upsilon(\pm z)^{\g-1}-1\right]}\,.\label{hasy2}
\ea
Each of the $\tilde G_\pm(z)$ is defined on one of the half-planes $S_l^\pm$ given by \Eq{Smpm}.
In alternative, the Green's function can be written in the $w$-plane as \Eq{propw1} with
\ba
&&\textrm{HS:}\qquad \cG(w) =\frac{\Theta(\g-1)}{c_2(-w)^\frac12}+\frac{1}{(-w)^\frac12\left[c_2(-w)^{\frac{\g-1}{2}}-1\right]}\,,\label{hsim2w}\\
&&\textrm{HA:}\qquad\cG(w) =\frac{\Theta(\g-1)}{c_2(-w)^\frac12}+\frac{1}{(-w)^\frac12\left[c_2(-1)^\Upsilon(-w)^{\frac{\g-1}{2}}-1\right]}\,.\label{hasy2w}
\ea

In the NH case, there is only a branch cut on the positive real semi-axis. The Green's function $\tilde G(z)$ is defined on a connected one-dimensional complex surface $S_l$ spanned by a complex coordinate with phase $\t_l$:\footnote{If one chooses the convention $z\to -z$, then the branch cut is on the negative semi-axis and the Riemann sheets are parametrized by $\t_l\in\,]-\pi(1-2l),\pi(1+2l)]$. Then, the whole Riemann surface is symmetric with respect to the $\Im\,z=0$ axis.}
\be\label{Smnh}
\textrm{NH:}\qquad S_l\,:\qquad \t_l\in\,\big[2\pi l,2\pi(l+1)\big[\,,\qquad l\in\mathbb{Z}\,.
\ee
From now on, we set $c_2=1$ without loss of generality.

\subsubsection{Poles}

Contrary to the poles $z_n=\exp(\rmi\vp_n)$ of the HP case, those of HS, HA and NH are simple. The calculation of the phases $\vp_n$ follows the same steps taken in section~\ref{secpoles} and yields
\ba
&&\textrm{HS:}\qquad \vp_n^\pm= \frac{2n+\de_\mp}{\g-1}\pi-\de_\mp\pi
\,,\label{c2pha1}\\
&&\textrm{HA:}\qquad \vp_n^\pm=
\frac{2n+1+(\Upsilon-1)(\de_\mp-1)}{\g-1}\pi-\de_\mp\pi\,,\label{c2pha2}\\
&&\textrm{NH:}\qquad \vp_n= \frac{2n+1}{\g-1}\pi+\pi\,.\label{c2pha3}
\ea
Note that the phase $\vp_n^-$ is identical in the HS and HA cases for any $\g$ and so is also the phase $\vp_n^+$ when $\Upsilon=2$ ($2<\g<3$). Therefore, the HS and HA analysis is the same in both half-planes when $2<\g<3$.

Just like in the HP case, one can always define the theory on sheets with no poles. There are infinitely many simple poles distributed on an infinite number of Riemann sheets as in \Eq{gooseq2} when $\g$ is irrational and $|p-q|$ simple poles on $q$ contiguous sheets when $\g=p/q$ is rational:
\be\label{pqpoles2}
\textrm{$\g=\frac{p}{q}$\, rational:}\qquad {\rm card}\big\{z_n^+,z_n^-\,:\,n \in\mathbb{Z}\big\}=|p-q|\,.
\ee
In the latter case, the main sequence \Eq{gooseq} is replaced by
\be
\textrm{HS, HA:}\qquad \{S_0^\pm,S_1^\pm,\dots,S_{q-1}^\pm\}\,,\label{gooseq'1}
\ee
where $S_l^\pm$ are given by \Eq{Smpm}, while for NH
\be
\textrm{NH:}\hphantom{, HA}\qquad \{S_0,S_1,\dots,S_{q-1}\}\,,\label{gooseq'2}
\ee
where $S_l$ is given by \Eq{Smnh}.

A notable difference with respect to the HP theory is that the HS, HA and NH dispersion relations also have real roots. This is because we choose $c_2=1$ in all cases but \Eq{propz} recovers the UV limit \Eq{propzuv2} with the opposite sign. This discrepancy was left on purpose to highlight the freedom one has to define fractional QFT, and the convergence to only one physics once the range of $\g$ is restricted.
\begin{itemize}
\item \textbf{Real pole with $\bm{M^2>0}$.} The heavy mode
\be
z=1\in S_l^+\,,\qquad\quad\! M^2=m^2+\mst^2\,,\label{pole+}
\ee
where $\mst=\lst^{-1}$, always has a positive squared mass $M^2\ll m^2$ and can be considered as physical. If we choose a Riemann sheet with this pole, then the spectrum of the theory deviates at high energies with respect to the standard spectrum and, in principle, to new particle-physics phenomenology at scales $\sim M$. However, the possibility to test this is extremely remote given that $\mst$ is expected to be of order of the grand-unification scale or higher.
\item \textbf{Tachyon.} The pole
\be
z=-1\in S_l^-\,,\qquad M^2=m^2-\mst^2\,,\label{pole-}\\
\ee
is unphysical, since it corresponds to a particle with negative squared mass $M^2<0$. Indeed, for this mode to be standard (non-tachyonic) it must be $(m/\mst)^2>1$. A massive theory like this would have masses larger than the fundamental mass scale $\mst$. In a massless sector such as gravity, one would have $M^2=-\mst^2<0$. Since a complete QFT describing Nature would most likely have fundamental massless sectors (\Eqq{eq:model} with $m^2=0$), and since it is preferable to avoid tachyons and super-luminal modes, we regard \Eq{pole-} as unviable. 
\end{itemize}

The distribution on the Riemann sheets is detailed in appendices~\ref{appC1a}, \ref{appC2a} and \ref{appC3a} for real poles and in appendices~\ref{appC1b}, \ref{appC2b} and \ref{appC3b} for complex poles. However, we do not need to know these details thanks to Theorems~\ref{theo6-B1}, \ref{theo6-B2} and \ref{theo8-B3} (for, respectively, HS, HA and NH), stating that there are no poles for $0<\g<3$ once the Riemann sheet is chosen appropriately. Crucially, this interval covers the range $2<\g<3$ where fractional gauge theories are one-loop super-renormalizable and FQG is super-renormalizable with divergences also beyond one loop (section~\ref{toyren}). This range is also the one where HS and HA are identical. We conclude that the models HS$=$HA and NH can accommodate both renormalizability and unitarity, albeit with a restricted range of values $2<\g<3$ with respect to HP (the main model of this paper) and with a weaker renormalizability result in the case of FQG.

The contributions of the complex poles to the Cauchy representation of the Green's function can be calculated along the same lines as in section~\ref{cocopo}. The only other poles of interest, absent in the HP theory, are $z=\pm 1$. Although we can always remove these poles by an appropriate choice of the physical sheet, we briefly treat them for completeness. In the HS and HA models, the integration on the circle $\G_\ve$ parametrized by $z=\pm1+\ve\,\exp(\rmi\theta)$ gives
\ba
\tilde G_{*\ve} &=& \lim_{\ve\to 0^+}\frac{\lst^2}{2\pi\rmi}\int_{\G_\ve}\rmd z\,\frac{\tilde G_\pm(z)}{z+\bar k^2}\nn
&=& \lim_{\ve\to 0^+}\frac{\ve}{2\pi}\frac{\lst^2}{\bar k^2\pm 1}\int_{2\pi}^0\rmd\theta \,\rme^{\rmi\theta}\tilde G_\pm(\pm 1+\ve\,\rme^{\rmi\theta})\nn
&=& \frac{\lst^2}{\bar k^2\pm 1} = \frac{1}{k^2+m^2\pm\mst^2}\,.
\ea
In the NH theory, only $z=1$ is present.

\subsubsection{Continuum of modes}

The HP spectral representation \Eq{profin} can be recalculated for HS, HA and NH. Let us show the steps for HA, from which HS is obtained by setting $(-1)^\Upsilon=1$ everywhere. From \Eq{hasy2w} in the $w$-plane, we get
\ban
\cG(t\pm\rmi\ve)&=& \frac{\Theta(\g-1)}{(-t\mp\rmi\ve)^\frac12}+\frac{1}{(-t\mp\rmi\ve)^\frac12\left[(-1)^\Upsilon(-t\mp\rmi\ve)^\frac{\g-1}{2}-1\right]}\nn
&\stackrel{t=s^2}{=}&\mp\frac{\Theta(\g-1)}{\rmi s}\mp\frac{1}{\rmi s\left[(-1)^\Upsilon s^{\g-1}\rme^{\mp\rmi\frac{\pi(\g-1)}{2}}-1\right]}\nn
&=&\mp\frac{\Theta(\g-1)}{\rmi s}+\frac{\rmi (-1)^\Upsilon s^{\g-1}\rme^{\pm\rmi\frac{\pi\g}{2}}\pm1}{\rmi s\left[1-2(-1)^\Upsilon s^{\g-1} \sin\frac{\pi\g}{2}+s^{2(\g-1)}\right]}\,,\label{GGHA}
\ean
so that the spectral density in \Eq{profin} is replaced by
\ba
\rho_{\textsc{ha}}(s) &=& \frac{2}{\pi}\left[
\Theta(\g-1)+\frac{(-1)^\Upsilon s^{\g-1}\sin\frac{\pi\g}{2}-1}{1-2(-1)^\Upsilon s^{\g-1} \sin\frac{\pi\g}{2}+s^{2(\g-1)}}\right],\label{profinHA}\\
\rho_{\textsc{hs}}(s) &=&\frac{2}{\pi}\left[
\Theta(\g-1)+\frac{s^{\g-1}\sin\frac{\pi\g}{2}-1}{1-2 s^{\g-1} \sin\frac{\pi\g}{2}+s^{2(\g-1)}}\right].\label{profinHS}
\ea
We can reinstate $c_2$ factors with the replacements $\Theta(\g-1)\to \Theta(\g-1)/c_2$ and $s^{\g-1}\to c_2s^{\g-1}$. For $\g>1$, the spectral density tends to a constant (which vanishes if $c_2=1$) at $s=0$. The UV limit \Eq{KLfin} is recovered by taking the double limit $c_2\gg c_2s^{\g-1}\gg 1$. 

In the theory NH, instead of the discontinuity we have the contribution of the branch cut on the real positive semi-axis. Calling $\G_{\rm cut}^\pm$ the lines of the contour in figure~\ref{fig1} respectively above and below the cut, we have
\be
\tilde G_{\rm cut}(-\bar k^2) = \frac{1}{2\pi\rmi}\int_{\G_{\rm cut}^+\cup\G_{\rm cut}^-}\rmd z\,\frac{\tilde G(z)}{z+\bar k^2}=\lim_{\ve\to 0^+} \frac{1}{2\pi\rmi}\int_0^{+\infty}\rmd s\,\frac{\tilde G(s+\rmi\ve)-\tilde G(s-\rmi\ve)}{s+\bar k^2}\,,
\ee
where we have parametrized $z=s\pm\rmi\ve$ on $\G_{\rm cut}^\pm$. Since ($c_2=1$ from now on)
\ben
\tilde G(s\pm\rmi\ve) =\frac{1}{-s\mp\rmi\ve+(-s\mp\rmi\ve)^\g}\nn
\stackrel{\ve\to 0^+}{=} \frac{1}{-s+ s^\g\rme^{\mp\rmi\pi\g}},\label{GG-B3}
\een
we get
\be\label{Icut}
\tilde G_{\rm cut}(-\bar k^2) = \int_0^{+\infty}\rmd s\,\frac{\rho_{\textsc{nh}}(s)}{s+\bar k^2}\,,\qquad \rho_{\textsc{nh}}(s)\coloneqq \frac{1}{\pi}\frac{s^{\g-2}\sin(\pi\g)}{1-2 s^{\g-1}\cos(\pi\g)+s^{2(\g-1)}}\,.
\ee
For small $s$ and $\g<1$ or for large $s$ and $\g>1$ (UV limit), we recover the purely fractional case \Eq{spera1} with $c_2=1$. For small $s$ and $\g>1$ or for large $s$ and $\g<1$, $\rho_{\textsc{nh}}(s)\sim s^{\g-2}$ and the spectral density vanishes only in the super-renormalizable range $\g>2$.

While the same physical interpretation applies to HP, HS and HA, the one for NH is different because, once the Feynman prescription is restored, the imaginary part of \Eq{Icut} is non-zero, it contributes to scattering amplitudes with in- and out-states and can affect unitarity. Then, we have to make sure that the spectral density be positive semi-definite. The denominator is always positive, while the numerator is positive for non-integer $\g$ if, and only if, $\sin(\pi\g)>0$, i.e., when 
\be\label{unitytree}
\textrm{unitarity condition:}\qquad 2k<\g<2k+1\,,\qquad k\in\mathbb{N}\,.
\ee
Thus, in the range $0<\g<3$ where the Riemann sheets have no poles, the cut gives a unitary contribution at the tree level when
\be
0<\g<1\,,\qquad 2<\g<3\,.
\ee


\subsection{Comparison with nonlocal QFT with entire form factors}\label{compa}

FQG is not the only proposal for quantum gravity with nonlocal operators. Nonlocal QFTs where form factors are entire functions have been studied much longer and more extensively. In particular, the gravitational sector has received much attention and is known with the general name of ``nonlocal quantum gravity'' (NLQG) or, more specifically, asymptotically local quantum gravity (ALQG). Reviews on several classical and quantum aspects of this class of theories are \cite{Modesto:2017sdr,Buoninfante:2022ild,BasiBeneito:2022wux,Koshelev:2023elc}. Omitting the matter sector, the action \Eq{gravac} is replaced by
\be\label{nlqg}
S_\textsc{nlqg}=\frac{\Mpl^2}{2}\int\rmd^4x\,\sqrt{|g|}\,\left[R-c_2 G_{\mu\nu}\frac{\rme^{\H_2(\B)}-1}{\B}\,R^{\mu\nu}-c_0R\frac{\rme^{\H_0(\B)}-1}{\B}\,R\right]+\dots,
\ee
where $c_2\neq 0$ and $\H_{0,2}(z)$ are entire functions. Popular choices that work for a scalar field are $\H_{0,2}(z)=-z$ (Wataghin form factor) or $\H_{0,2}(z)=z^2$ (Krasnikov form factor). For gauge theories and gravity, one must take something different, in particular, the complementary exponential integral $\H_{0,2}(z)={\rm Ein}(z)$; this is the form factor of ALQG.

Since the action of NLQG and FQG is formally the same up to the choice of form factors, the path integral can be treated with exactly the same tools \cite{Calcagni:2024xku}. However, compared with FQG, entire form factors make the study of perturbative unitarity at all orders more accessible than in the present case \cite{Pius:2016jsl,Briscese:2018oyx,Chin:2018puw}. NLQG also has a simpler treatment of the perturbative spectrum because the form factors $\exp\H_{0,2}(z)$ do not introduce any extra pole. In exchange, they have an essential singularity at infinity, which makes arcs at infinity in their Cauchy representation and in the associated propagator vanish only in a certain conical domain $\cC$ in the complex plane. In contrast, the region of convergence of fractional form factors coincides with the complex plane (see \Eqqs{gR2g2}--\Eq{gR2g2bis}),
\be\label{coni}
\cC=\mathbb{C}\quad \forall \g\in\mathbb{R}\,,
\ee
but the essential singularity at infinity is traded for a continuum of complex-conjugate modes plus, if allowed, complex-conjugate poles at finite distance from the origin. Thus, neither in NLQG nor in FQG do the form factors introduce extra fundamental degrees of freedom\footnote{Of course, the quadratic curvature terms carry anyway more degrees of freedom than in Einstein's gravity, none of them being ghosts \cite{Calcagni:2018gke}. We do not discussed them here.} but for different reasons. In NLQG, because they are entire; in FQG, because the complex-conjugate extra modes associated with the discontinuity of the Green's function are purely virtual and they do not contribute to the imaginary part of scattering amplitudes. Complex-conjugate pairs could manifest themselves as intermediate bound states (see section~\ref{sec7}), a feature not present in NLQG.

The extension of the convergence region \Eq{coni} of fractional form factors to the whole complex plane has an important technical consequence for the basic formulation of the theory. A typical problem of nonlocal QFT is that Wick rotation is not possible because $\cC$ does not cover the whole plane. This poses a problem at the time of establishing the spacetime signature where to start when defining the theory. In the case of NLQG, the path integral is defined on metrics in Lorentzian signature but Feynman diagrams and scattering amplitudes must be calculated in Euclidean signature and external momenta are later promoted to Lorentzian momenta via Efimov analytic continuation \cite{Pius:2016jsl,Briscese:2018oyx,Chin:2018puw,Efimov:1967dpd,Koshelev:2021orf,Buoninfante:2022krn}. On the other hand, thanks to \Eq{coni}, in FQG one can pass to and from the Euclidean formulation from and to the Lorentzian one and integrations in Feynman integrals on the real line are equivalent to integrations on the imaginary axis $\mathbb{I}$. This is a signal that the Osterwalder--Schrader conditions \cite{Osterwalder:1973dx,Osterwalder:1974tc} hold in fractional QFT. In particular, reflection positivity allows one to carry out unitarity calculations in alternative ways \cite{Trinchero:2017weu,Calcagni:2021ljs}. Therefore, Efimov's continuation is mandatory in NLQG but optional in FQG. 

Finally, NLQG is manifestly Lorentz invariant at all orders in perturbation theory \cite{Briscese:2018oyx} while, as discussed in section~\ref{brcu}, fractional QFT is such only if  one employs the fakeon prescription \cite{Anselmi:2025uzj} for the complex-conjugate modes (even if one considers the pole-free version of the theory).


\section{Perturbative unitarity}\label{sec6}

To be consistent, the theory should be unitarity at all orders in the perturbative expansion. This check was done at one-loop level for a purely fractional non-hermitian model in \cite{Calcagni:2021ljs}. From now on, $\g>1$ and $c_2=1$. 

We have been able to write the tree-level propagator in the HP, HS and HA theories in the same form:
\be
\rmi\tilde G_*(-\bar k^2) =\frac{\rmi\lst^2}{\bar k^2}+\rmi\lst^2\int_0^{+\infty}\rmd s\,\dfrac{\rho(s)}{s^2+\bar k^4}\,.\label{propkfin}
\ee
In this way, we can always isolate diagrams with the continuum of complex-conjugate pairs with in their internal lines, which either violate the optical theorem if the associated amplitude is analytically continued as in standard textbooks (Euclidean external and internal momenta, analytic continuation $\pe^2\to p^2-\rmi\e$ after integration) or break Lorentz invariance if the amplitude is prescribed \emph{\`a la} Lee--Wick--Nakanishi (Lorentzian external and internal momenta, internal $k^0$ integrated over a certain complex path, real spatial $\bm{k}$) \cite{Anselmi:2025uzj}. These diagrams can be rendered harmless with the fakeon prescription (average analytic continuation to $\pe^2\to p^2\pm\rmi\e$), in which case these pairs never appear as asymptotic states in the optical theorem and Lorentz invariance is preserved \cite{Anselmi:2025uzj}, regardless of the sign of $\rho(s)$. The other contributions are all standard and they respect unitarity at any loop order.

We can see this with 
a couple of one-loop examples based on the HS model \Eq{eq:model}, first a scalar $\phi^3$ theory ($N=3$, $\a=\g$) and then a model with derivative vertices with $N=4$ simulating a gauge theory ($\a=1$) or gravity ($\a=0$). We start from a Euclidean amplitude with both internal and external momenta in Euclidean signature (respectively, $k^\mu=(k_4,\bm{k})$ and $p^\mu=(p_4,\bm{p})$). We set $\lst=1$ 
and consider $1<\g<3$ in order to avoid the contribution of complex pairs with fixed masses. The HP and HA cases are very similar. The NH case (model \Eq{eq:modelnh}) is even simpler and is studied below.


\subsection{Self-adjoint scalar models}

For $\la_{\g-\a}=0$ (no derivatives in vertices) and $N=3$, the Euclidean one-loop bubble in four dimensions is the same for all the self-adjoint theories:
\be\label{self1}
\feynmandiagram[layered layout, baseline=(i.base), horizontal=i to b]{
i -- [scalar, edge label=\(p\)] b -- [scalar, half left, edge label=\(k\)] 
c -- [scalar, edge label=\(p\prime\)]] f,
c -- [scalar, half left, edge label=\(k+p\)] b,
};
 = \rmi\de^D(p-p')\,\cM(p^2)\,,
\ee
where $\cM$ is the amplitude. The latter is first calculated in Euclidean signature:
\ba
\cM_\textsc{e}(\pe^2)&=&\frac{\la_0^2}{2}\int_{\mathbb{R}^4}\frac{\rmd^4\ke}{(2\pi)^4}\,\tilde G(-\bke^2)\,\tilde G[-(\bke+\bpe)^2]\nn
&=& \frac{\la_0^2}{2}\int_{\mathbb{R}^4}\frac{\rmd^4\ke}{(2\pi)^4}\,\frac{1}{\bke^2}\frac{1}{(\bke+\bpe)^2}\nn
&&+\frac{\la_0^2}{2}\int_0^{+\infty}\rmd s\,\rho(s)\int_0^{+\infty}\rmd s'\,\rho(s')\int_{\cD}\frac{\rmd^4\ke}{(2\pi)^4}\,\frac{1}{s^2+\bke^4}\frac{1}{s^{\prime 2}+(\bke+\bpe)^4}\nn
&&+\frac{\la_0^2}{2}\int_0^{+\infty}\rmd s\,\rho(s)\int_{\cD}\frac{\rmd^4\ke}{(2\pi)^4}\left[\frac{1}{\bke^2}\frac{1}{s^{2}+(\bke+\bpe)^4}+\frac{1}{s^2+\bke^4}\frac{1}{(\bke+\bpe)^2}\right]\nn
&\eqqcolon&\cM_{1\phi}^\textsc{e}(\pe^2)+\frac{\la_0^2}{2}\int_0^{+\infty}\rmd s\,\rho(s)\,\widetilde\cM_{\rm pairs}^\textsc{e}(s;\pe^2)\nn
&\eqqcolon&\cM_{1\phi}^\textsc{e}(\pe^2)+\cM_{\rm pairs}^\textsc{e}(\pe^2)\,,\label{M1Lfin}
\ea
where 
\be
\rho=\rho_{\textsc{hp}},\,\rho_{\textsc{hs}},\,\rho_{\textsc{ha}}\,,
\ee
we used \Eq{propkfin}, exchanged momentum and mass integrations, kept the integration domain $\cD$ unspecified and separated the standard Euclidean amplitude $\cM_{1\phi}^\textsc{e}(\pe^2)$ (coming from the factor 1 in \Eq{propkfin}) from the rest.

The quantity $\widetilde\cM_{\rm pairs}(s,s';\pe^2)$ is the bubble amplitude of two complex-conjugate pairs with squared masses $\pm\rmi s$ and $\pm\rmi s'$. In turn, one can decompose this into four bubble diagrams with the particles $\cP$, $\cP'$ and their conjugate $\cP^*$ and $\cP^{\prime *}$ in the internal lines, through the combinations
$\cP$-$\cP'$, $\cP$-$\cP^{\prime *}$, $\cP^*$-$\cP'$ and $\cP^*$-$\cP^{\prime *}$, plus their symmetric counter-parts under $s\leftrightarrow s'$. These diagrams have been calculated for fixed masses in, e.g., \cite{Anselmi:2025uzj}.

Now we want to define the Lorentzian amplitude from the Euclidean one with the constraint that the optical theorem and Lorentz invariance be preserved. This requires to impose two different prescriptions on the two terms in \Eq{M1Lfin}. Since $\cM_{1\phi}^\textsc{e}(\pe^2)$ is the standard bubble diagram, we can analytically continue it as usual:
\be\label{Msta}
\cM_{1\phi}(p^2)\coloneqq\cM_{1\phi}^\textsc{e}(p^2-\rmi\e)\,,
\ee 
where we gave the Feynman prescription. However, we cannot do the same for $\cM^\textsc{e}_{\rm pairs}$ because the analytic continuation would violate the optical theorem, which is the reason why Lee and Wick originally introduced their prescription in the presence of complex poles \cite{Lee:1969fy,Lee:1970iw,Lee:1969zze}. This prescription breaks Lorentz symmetry for fixed masses $s$ and $s'$ \cite{Nakanishi:1971jj,Anselmi:2025uzj} and integration on such masses cannot, in general, restore Lorentz invariance miraculously.

For this reason, we have to resort to the fakeon prescription and define the Lorentzian amplitude as the average analytic continuation
\be\label{avgcont}
\cM_{\rm pairs}(p^2)\coloneqq \frac12[\cM^\textsc{e}_{\rm pairs}(p^2-\rmi\e)+\cM^\textsc{e}_{\rm pairs}(p^2+\rmi\e)]\,,
\ee
which is equivalent to a choice of domain $\cD$ in the complex hyperplane $(k^0,\bm{k})$ allowing for complex values of spatial momenta. At this point, since $\cM_{\rm pairs}(p^2)$ is real, it gives no contribution to the sum over asymptotic states in the optical theorem.
Overall, the Lorentzian amplitude is
\be
\cM(p^2)=\cM_{1\phi}^\textsc{e}(p^2-\rmi\e)+\cM^\textsc{e}_{\rm pairs}(p^2)\,,
\ee
which is made of a standard term plus a piece not contributing to the physical spectrum. It is an exercise to adapt the Cutkosky rules to this theory but it is already apparent that unitarity is preserved because the imaginary part of the amplitude does not contain the contribution of the virtual pairs. (The fakeon prescription is still needed to maintain Lorentz invariance.) 

The generalization to $L$ loops and generic potential $\phi^N$ is straightforward and one can conclude that the fractional scalar field theory with a non-derivative monomial potential is perturbatively unitary. Generalization to the HP and HA models is immediate. 


\subsection{NH scalar model}

Curiously, NH is an example of a theory that maintains unitarity despite being non-hermitian, without the need of the fakeon prescription. Indeed,
\ba
\cM_\textsc{e}(\pe^2)&=&\frac{\la_0^2}{2}\int_{\mathbb{R}^4}\frac{\rmd^4\ke}{(2\pi)^4}\,\tilde G(-\bke^2)\,\tilde G[-(\bke+\bpe)^2]\nn
&=& \frac{\la_0^2}{2}\int_0^{+\infty}\rmd s\,\mu(s)\int_0^{+\infty}\rmd s'\,\mu(s')\int_{\cD}\frac{\rmd^4\ke}{(2\pi)^4}\,\frac{1}{s+\bke^2}\frac{1}{s'+(\bke+\bpe)^2}\nn
&\eqqcolon&\frac{\la_0^2}{2}\int_0^{+\infty}\rmd s\,\mu(s)\int_0^{+\infty}\rmd s'\,\mu(s')\int_{\mathbb{R}^4}\frac{\rmd^4\ke}{(2\pi)^4}\,\frac{1}{s+\bke^2}\frac{1}{s'+(\bke+\bpe)^2}\nn
&\eqqcolon&\frac{\la_0^2}{2}\int_0^{+\infty}\rmd s\,\mu(s)\int_0^{+\infty}\rmd s'\,\mu(s')\,\widetilde\cM_{1\phi}(s,s';\pe^2)\,,\label{M1LfinNH}
\ea
where
\be
\mu(s)\coloneqq \rho_{\textsc{nh}}(s)+\de(s)\,,
\ee
and
\ba
\widetilde\cM_{1\phi}(s,s';\pe^2)&=&\int_{\mathbb{R}^4}\frac{\rmd^4\ke}{(2\pi)^4}\,\frac{1}{s+\bke^2}\frac{1}{s'+(\bke+\bpe)^2}\nn
&=&\int_{\mathbb{R}^4} \frac{\rmd^4\ke}{(2\pi)^4}\int_0^1\,\frac{\rmd x}{\{(1-x)(\bke^2+s)+x[(\bke+\bpe)^2+s']\}^2}\nn
&=&\int_{\mathbb{R}^4} \frac{\rmd^4\ke}{(2\pi)^4}\int_0^1\,\frac{\rmd x}{[(\bke+x \bpe)^2+x(1-x)\bpe^2+(1-x)s+xs']^2}\nn
&\stackrel{\text{\tiny $k'=\bke+x\bpe$}}{=}&\int_{\mathbb{R}^4} \frac{\rmd^4 k'}{(2\pi)^4}\int_0^1\,\frac{\rmd x}{[k^{'2}+x(1-x)\bpe^2+(1-x)s+xs']^2}\nn
&=&\frac{1}{(2\pi)^4}\int_0^1\,\rmd x\int_{\mathbb{R}^4} \frac{\rmd^4 k'}{[k^{'2}+x(1-x)\bpe^2+(1-x)s+xs']^2}\,.\label{appate}
\ea
Given that $\bke,\bpe\in\mathbb{R}$ and $s,s'>0$, the conditions for the second equality are
\ben
(\bpe+\bke)^2-(\bke^2+s-s')>0\quad {\rm or}\quad -1<\frac{(\bpe+\bke)^2-(\bke^2+s-s')}{\bke^2+s}<0\,,
\een
but the second is equivalent to $(\bpe+\bke)^2-(\bke^2+s-s')<0$ and one can also analytically continue to the special case $(\bpe+\bke)^2-(\bke^2+s-s')=0$. This type of integral, which has no poles since $k'$ and $\bpe$ are real, is well-known in dimensional regularization:
\ba
I_{r,w}& \coloneqq & \int \rmd^{D} k'\, \frac{(k^{'2})^r}{(k^{'2} +{\rm C})^w}= \frac{2\pi^{\frac{D}{2}}}{\G\left(\frac{D}{2}\right)} \int_0^{+\infty}\rmd k'\,  \frac{k^{'2r+D-1}}{(k^{'2} +{\rm C})^w}\nn
&=& \pi^{\frac{D}{2}} \frac{\Gamma\left(r+\frac{D}{2}\right) \Gamma\left(w-r-\frac{D}{2}\right)}{\Gamma\left(\frac{D}{2}\right)\G(s)}\,{\rm C}^{\frac{D}{2}+r-w}\nn
&\stackrel[\text{\tiny $w=2$}]{\text{\tiny $r=0$}}{=}& \pi^{\frac{D}{2}}\Gamma\left(2-\frac{D}{2}\right)\,{\rm C}^{\frac{D}{2}-2}\,\,\stackrel{\text{\tiny $D=4-2\ve$}}{=}\,\,\pi^2\Gamma\left(\ve\right)\,{\rm C}^{-\ve}\nn
&=&\pi^2\left(\frac{1}{\ve}-\g_\textsc{em}-\ln {\rm C}\right)+O(\ve)\,,
\ea
where $\g_\textsc{em}$ is the Euler--Mascheroni constant. Plugging ${\rm C}=x(1-x)\pe^2+(1-x)s+xs'$, \Eqq{appate} becomes
\be
\widetilde\cM_{1\phi}(s,s';\pe^2) = \lim_{\ve\to 0} \int_0^1\,\frac{\rmd x}{(4\pi)^2}\left\{\frac{1}{\ve}-\g_\textsc{em}-\ln\left[x(1-x)\bpe^2+(1-x)s+xs'\right]\right\}+O(\ve)
\,,\label{Mpp}
\ee
which is inserted into \Eq{M1LfinNH}. This is a superposition of ordinary modes that can be analytically continued as usual. The standard Cutkosky rule applies but now it acts as a convolution over the spectral densities. Unitarity of the Lorentzian amplitude \Eq{Msta} then follows suit, with a notable physical difference: the decay rate (or cross-section) of the incoming state is modified and, instead of a sharp turn-on at a specific mass threshold, the phase-space volume is smeared by the density function $\rho_{\textsc{nh}}(s)$.

Note that there are no new divergences arising due to this smearing. On one hand, the weight $\int_0^{+\infty}\rmd s\,\rho_{\textsc{nh}}(s)$ is finite and the $1/\ve$ pole keeps a finite residue. On the other hand, the remaining finite part contains an integral $\int_0^{+\infty}\rmd\ln s\,\rho_{\textsc{nh}}(s)$, which is also finite because $\rho_{\textsc{nh}}(s)$ drops off faster that $1/(s\ln s)$ as $s\to+\infty$.

The generalization to arbitrary loop orders follows the same scheme.


\subsection{HS gauge and gravity models}

When $\a\neq\g$ in \Eqq{eq:model}, the model mimics the action for a gauge theory ($m=0$, $N=4$, $\a=1$) or the graviton ($m=0$, $N=4$, $\a=0$) in fractional QFT. In this case, vertices carry extra factors $\sim(k^4)^{(\g-\a)/2}=|k^2|^{\g-\a}$, similarly to the non-hermitian case below. To see how they affect the above conclusions on unitarity, we can take the simple example of the $N=4$, $m=0$ one-loop diagrams \cite{Calcagni:2022shb}, generalizing the $\a=1$ results therein. The general four-point vertex in Lorentzian signature is
\ba
\tilde{\rm V}(k,p,q,r) &=&-4\rmi\la_0-4\rmi\la_{\g-\a}\left[|(p+k)^2|^{\g-\a}+|(p+r)^2|^{\g-\a}+|(p+q)^2|^{\g-\a}\right.\nn
&&\left.+|(k+r)^2|^{\g-\a}+|(k+q)^2|^{\g-\a}+|(r+q)^2|^{\g-\a}\right].\label{vertgen}
\ea
For the self-energy diagram
\be\label{self2}
\begin{tikzpicture}[baseline=(a1.base)]
  \begin{feynman}
    \vertex (a1) {};
    \vertex[right=1.9cm of a1] (a2);
    \vertex[right=1.9cm of a2] (a3) {};
    \vertex[above=1.5cm of a2] (b);

    \diagram* {
      (a1) --[scalar, edge label'=\(p\)] (a2) --[scalar, half left] (b) --[scalar, half left] (a2) --[scalar, edge label'=\(p\prime\)] (a3),
    };
  \end{feynman}
\end{tikzpicture}
 = \rmi\de^4(p-p')\,\cM(p^2)\,,
\ee
we can set $r=-k$, $q=-p'=-p$, so that
\be
\tilde{\rm V}(k,p,-p,-k) =-4\rmi\la_0-8\rmi\la_{\g-\a}\left[|(p+k)^2|^{\g-\a}+|(p-k)^2|^{\g-\a}\right].
\ee
Then, the Euclidean amplitude
\be
\cM_\textsc{e}(\pe^2)=\int_{\cD}\frac{\rmd^4\ke}{(2\pi)^4}\,\tilde{\rm V}(\ke,\pe,-\pe,-\ke)\,\rmi\tilde G(-\ke^2)\propto \la_{\g-\a} k_{\textsc{uv}}^{2(2-\a)}
\ee
is a power divergence, where $k_{\textsc{uv}}$ is a UV cut-off and we have dropped $O(\la_0)$ terms. Just like in the standard $\phi^4$ local theory, this contribution is removed by dimensional regularization and does not say much about unitarity. Note that the counter-term to be added is local, since $\a=0,1$ for gauge and gravity theories.

The one-loop four-point amplitude is more interesting:
\be
\parbox{4.1cm}{\includegraphics[width=4cm]{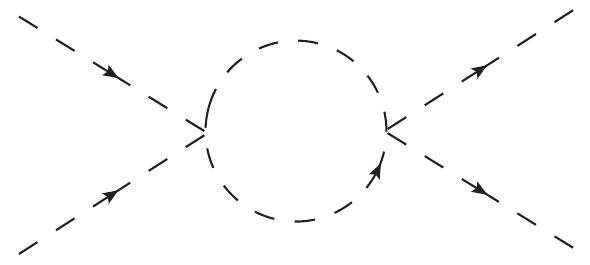}} + \parbox{4.1cm}{\includegraphics[height=3.2cm]{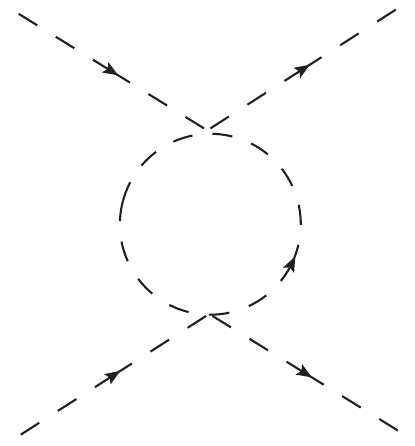}} + \parbox{4.1cm}{\includegraphics[height=3.2cm]{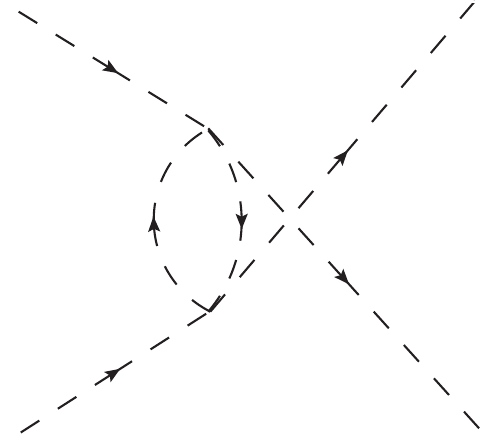}}.
\ee
Setting $p=0=q$ and $r=-k$, we have
\be\label{bonbon}
\tilde{\rm V}(k,0,0,-k) =-4\rmi\la_0-16\rmi\la_{\g-\a}|k^2|^{\g-\a}\,,
\ee
and the Euclidean amplitude for $\pe=0$ reads
\ba
\cM_\textsc{e}&=&\int_{\cD}\frac{\rmd^4\ke}{(2\pi)^4}\,[\tilde{\rm V}(\ke,0,0,-\ke)\,\rmi\tilde G(-\ke^2)]^2\nn
&=& \int_{\mathbb{R}^4} \frac{\rmd^4\ke}{(2\pi)^4}\frac{16[\la_0+4\la_{\g-\a}\ke^{2(\g-\a)}]^2}{\ke^2\ke^2}\nn
&&+\int_0^{+\infty}\rmd s\,\rho_{\textsc{hs}}(s)\int_0^{+\infty}\rmd s'\,\rho_{\textsc{hs}}(s')\int_{\cD} \frac{\rmd^4\ke}{(2\pi)^4}\frac{16[\la_0+4\la_{\g-\a}\ke^{2(\g-\a)}]^2}{(s^2+\ke^4)(s^{\prime 2}+\ke^4)}\nn
&&+\int_0^{+\infty}\rmd s\,\rho_{\textsc{hs}}(s)
\int_{\cD} \frac{\rmd^4\ke}{(2\pi)^4}\frac{32[\la_0+4\la_{\g-\a}\ke^{2(\g-\a)}]^2}{\ke^2(s^{2}+\ke^4)}\nn
&\eqqcolon& \cM^\textsc{e}_{1g}+\int_0^{+\infty}\rmd s\,\rho_{\textsc{hs}}(s)\,\widetilde\cM^\textsc{e}_{\rm pairs}(s)\,.
\ea
The contribution $\cM^\textsc{e}_{\rm pairs}$ of the complex conjugate pairs can be fakeonized as before.


\subsection{NH gauge and gravity models}

The non-hermitian case for gauge and gravity fields is like the above one but without absolute values:
\ba
\tilde{\rm V}(k,p,q,r) &=&-4\rmi\la_0-4\rmi\la_{\g-\a}\left\{[(p+k)^2]^{\g-\a}+[(p+r)^2]^{\g-\a}+[(p+q)^2]^{\g-\a}\right.\nn
&&\left.+[(k+r)^2]^{\g-\a}+[(k+q)^2]^{\g-\a}+[(r+q)^2]^{\g-\a}\right\}.\label{vertgennh}
\ea

The one-loop four-point Euclidean amplitude with $\pe=0=q$ and $r=-\ke$
is
\ba
\cM_\textsc{e}&=&\int_{\cD}\frac{\rmd^4\ke}{(2\pi)^4}\,[\tilde{\rm V}(\ke,0,0,-\ke)\,\rmi\tilde G(-\ke^2)]^2\nn
&\eqqcolon&\int_0^{+\infty}\rmd s\,\mu(s)\int_0^{+\infty}\rmd s'\,\mu(s')\,\widetilde\cM_{1g}(s,s')\,,
\ea
where
\ba
\widetilde\cM_{1g}(s,s')&=&16\int_{\mathbb{R}^4} \frac{\rmd^4\ke}{(2\pi)^4}\frac{[\la_0+4\la_{\g-\a}\ke^{2(\g-\a)}]^2}{(\ke^2+s)(\ke^2+s')}\nn
&=& \lim_{\ve\to 0} \frac{\la_0^2}{\pi}\frac{1}{\sin\left(2\pi\frac{D}{4}\right)}\frac{{s'}^{\frac{D}{2}-1}-s^{\frac{D}{2}-1}}{s-s'}\Bigg|_{D=4-2\ve}\nn
&&+\frac{2\la_0\la_{\g-\a}}{\pi}\frac{1}{\sin\left[\pi\left(\g-\a\right)\right]}\frac{{s'}^{1+\g-\a}-s^{1+\g-\a}}{s-s'}\nn
&&++\frac{\la_{\g-\a}^2}{\pi}\frac{1}{\sin\left[2\pi\left(\g-\a\right)\right]}\frac{{s'}^{1+2(\g-\a)}-s^{1+2(\g-\a)}}{s-s'}\nn
&=& \lim_{\ve\to 0} \frac{\la_0^2}{\pi^2\ve}-\frac{\la_0^2}{\pi^2}\frac{s\ln s-s'\ln s'}{s-s'}\nn
&&+\frac{2\la_0\la_{\g-\a}}{\pi}\frac{1}{\sin\left[\pi\left(\g-\a\right)\right]}\frac{{s'}^{1+\g-\a}-s^{1+\g-\a}}{s-s'}\nn
&&++\frac{\la_{\g-\a}^2}{\pi}\frac{1}{\sin(2\pi\g)}\frac{{s'}^{1+2(\g-\a)}-s^{1+2(\g-\a)}}{s-s'}\,.\label{self4}
\ea
The first $O(\la_0^2)$ term is calculated in dimensional regularization and corresponds to the usual logarithmic divergence (local counter-term) plus a finite part. The other three contributions in \Eq{self4} must be integrated in $s$ and $s'$; we do not perform the rest of the calculation here.

To prove one-loop unitarity, we must verify that the Lorentzian amplitude
\be\label{Msta2}
\cM(p^2)\coloneqq\cM_\textsc{e}(p^2-\rmi\e)
\ee 
satisfies the optical theorem, specifically that its imaginary part is non-negative and corresponds to the physical phase space of the intermediate states. For fixed internal masses $s$ and $s'$, the imaginary part of the bubble amplitude $\widetilde{\mathcal{M}}_{1g}$ is found by replacing the propagators with on-shell delta functions, which is the ordinary Cutkosky rule:
\be
\frac{1}{k^2 + s} \to -2\pi\rmi\,\de(k^2 - s)\,\Theta(k^0)\,.
\ee
For the nonlocal vertex $\tilde{\rm V}$, restoring momentum dependence the Cutkosky rule yields
\be
\Im \left[ \widetilde{\mathcal{M}}_{1g}(s, s'; p^2) \right] = \frac{1}{16\pi} \frac{\sqrt{\lambda(s,s',p^2)}}{p^2} \left|\tilde{\rm V}(s,s') \right|^2 \Theta\left[p^2 - (\sqrt{s} + \sqrt{s'})^2\right]\,,
\ee
where $\lambda(a,b,c)=a^2+b^2+c^2-2ab-2ac-2bc$ is the Källén function. Integrating in $s$ and $s'$, since the spectral density is real we get
\be
\Im\,\cM(p^2) = \frac{1}{16\pi p^2} \int_0^{p^2} \rmd s \int_0^{(\sqrt{p^2}-\sqrt{s})^2} \rmd s' \, \mu(s) \mu(s') \left| \tilde{\rm V}(s,s') \right|^2 \sqrt{\lambda(p^2, s, s')}\,.
\ee
To satisfy unitarity, we require $\Im\,\cM(p^2) \geq 0$ for all physical $p^2$. The phase-space factor is $\sqrt{\lambda(s,s',p^2)} \geq 0$ within the integration limits defined by the $\Theta$-function. Since $\mu(s)=\rho(s)+\de(s)>0$ for the ranges \Eq{unitytree}, the integrand is a product of non-negative functions and the integral is non-negative.

This proves that the one-loop amplitude preserves unitarity. The imaginary part exactly sums the probabilities of producing all possible combinations of mass-continuum states allowed by the energy $p^2$. Generalization to higher orders is immediate as long as \Eq{unitytree} is enforced.


\section{Conclusions}\label{sec7}

In this paper, we have studied the physical spectrum of asymptotic states of QFTs with fractional kinetic terms. Choosing the Riemann sheet whereupon to define the theory, we have removed a massive real extra pole, a tachyonic mode, unpaired modes with complex masses and pairs of complex-conjugate modes. Using the fakeon presciption, we have virtualized a continuum of complex-conjugate modes carrying ghosts while preserving Lorentz invariance. We mainly worked with the hermitian polynomial definition \Eq{propz} (HP) but we have duplicated all the results also for the hermitian simple definition \Eq{hsim} (HS), the hermitian asymmetric definition \Eq{hasy} (HA) and the non-hermitian definition \Eq{nhrep} (NH). The only difference among all these definitions is in the distribution of the poles at any given Riemann sheet and in the range of $\g$ for which all poles are removed; this range is $2<\g<3$ for NH and HS$=$HA (the two collapse into the same theory for these $\g$'s) and is unrestricted only in the HP theory.

In all the theories considered here, for $\g<1$ the vacuum is the only asymptotic state; this generalizes and is in agreement with the $\g=1/2$ case treated in \cite{Barci:1996ny} with the Hamiltonian formalism. For $\g>1$, the spectrum is a particle that can go on-shell with dispersion relation $k^2+m^2=0$. The continuum-mass part of the propagator corresponds to physical degrees of freedom that never go on-shell. At all perturbative orders, neither the continuum of particles nor the conjugate pairs (if present) disturb unitarity, provided that one prescribes the amplitudes \emph{à la} Anselmi--Piva for the self-adjoint theories or, for the non-hermitian one, $\g$ takes values specified by the unitarity condition \Eq{unitytree}.

In section~\ref{sec5}, we have mentioned that differences in the spectral density $\rho(s)$ does not imply that these theories are physically different because the fakeon prescription confines all the complex-conjugate modes into a virtual sector never feeding the pool of asymptotic states in scattering amplitudes. However, there is the possibility that fakeons be observable indirectly as resonance-like double bumps in the real part of the dressed propagator \cite{Anselmi:2023wjx}. If this were the case, then it would be possible, at least in principle, to discriminate empirically between different choices of self-adjoint kinetic term. Moreover, scenarios with the extra real pole \Eq{pole+} could also be considered as alternative to the present ones where such pole was either absent (HP theory) or removed by a suitable choice of Riemann sheet (HS, HA and NH theories). Having an extra heavy mode in the spectrum can have interesting phenomenological consequences, although not in present particle accelerators where one cannot reach the near-Planckian or near-grand-unification energies $\mst=\lst^{-1}$. Therefore, strictly speaking, the issue of uniqueness of fractional QFTs is mitigated but not completely closed and it will deserve further attention in the future.

These results put fractional QFTs in general, and FQG in particular, on a much firmer footing. They drastically simplify the theoretical construction of the propagator and remove the necessity of the bulky regularization procedure of \cite{Calcagni:2022shb}, which is an artifact of purely fractional QFTs with kinetic term $\sim\B^\g$. They also redefine the use of the fakeon prescription in fractional QFT, adopted in \cite{Calcagni:2022shb} to remove branch-cut modes from the spectrum but limited here to a sector of complex-conjugate particles. Although the prescription itself has a solid motivation, it adds a complication for the analysis of the classical dynamics, since the classical equations of motion in that case are not those from the bare action but must be derived through a non-trivial procedure called classicization \cite{Anselmi:2018bra,Anselmi:2019rxg,Anselmi:2025uda}. The equations of motion for the action \Eq{gravac} with $c_0=0$ and $c_0\neq 0$ have been derived in, respectively, \cite{Calcagni:2021aap} and \cite{SGC}; the cosmological dynamics on a homogeneous and isotropic background is currently under study \cite{SGC}. However, these equations of motion should be classicalized, since the continuum of complex-conjugate modes of the propagator discontinuity have been fakeonized, i.e., made purely virtual. The classicized equations of motion can be derived only approximately, since they are perturbative in the interaction coupling \cite{Anselmi:2025uda}. This is not a problem but, rather, a complication, due to the highly nonlinear structure of the gravitational sector. The derivation of the classicalized equations is also work in progress.

We conclude with a remark on causality. It is well-known that nonlocal operators, and in particular non-integer powers of the d'Alembertian \cite{BG,Belenchia:2014fda},  preserve macro-causality but smear the light cone at distances of order of the fundamental scale $\lst$ of the theory, thus violating micro-causality.\footnote{Micro- and macro-causality are defined as follows \cite{Yamamoto:1969vb,Yamamoto:1970gw}. \emph{Micro-causality} is the property that any physical quantity or event at any given point $P$ must be determined and determines only by the quantities on an arbitrary space-like surface inside, respectively, its past and future light-cone. By definition, such property holds for arbitrarily small time and spatial differences. \emph{Macro-causality} is such that any physical quantity or event at $P$ does not affect the outside of the future light-cone. This implies that the effects of $P$ are not necessarily manifest in all space-like surfaces in the future light-cone. In other words, macro-causality is limited to large times and distances determined by the experiment.} At these distances, however, one is in a full quantum regime and such violations in the classical theory bear no harmful physical consequence. In the case of fractional QFT, we have decomposed the tree-level propagator into a genuine nonlocal part (the continuum of modes in the spectral density $\rho(s)$) and a local part made of complex-conjugate poles and one may wonder whether such decomposition is consistent with the above result on causality. This is indeed the case even when we keep the complex-conjugate poles, since a propagator with complex conjugate masses respects macro-causality \cite{Yamamoto:1969vb,Yamamoto:1970gw}.

\bigskip

\noindent{\bf Note added.} During the completion of the manuscript, we became aware of a work by D.~Anselmi focused on the application of the fakeon prescription to fractional models \cite{Ans26}. The reader can find a valuable insight on this topic therein.


\section*{Acknowledgments}

We thank D.~Anselmi and L.~Rachwa\l\ for useful discussions. The authors (G.C.\ as PI) are supported by grant PID2023-149018NB-C41 funded by the Spanish Ministry of Science, Innovation and Universities MCIN/AEI/10.13039/501100011\-033. The work of G.C.\ was made possible also through the support of the WOST,  \href{https://withoutspacetime.org}{WithOut SpaceTime project}, supported by Grant ID 63683 from the John Templeton Foundation (JTF). The opinions expressed in this work are those of the authors and do not necessarily reflect the views of the John Templeton Foundation.


\appendix
\addtocontents{toc}{\protect\setcounter{tocdepth}{1}}


\section{Calculation of the residue \Eq{resi2}}\label{appA}

Let $z_n^\pm\neq 0$. We prove \Eq{resi2} in two ways. One is through l'H\^opital rule:
\ban
{\rm Res}[\tilde G_\pm(z),z_n^\pm] &=& \lim_{z\to z_n^\pm}\frac{1}{(\ups-1)!}\frac{\rmd^{\ups-1}}{\rmd z^{\ups-1}}[(z-z_n^\pm)^\ups \bar G_\pm(z)]\nn
&=&\lim_{z\to z_n^\pm}\frac{1}{(\ups-1)!}\frac{\rmd^{\ups-1}}{\rmd z^{\ups-1}}\frac{\mp(z-z_n^\pm)^\ups}{z[1+c_2(\pm z)^\om]^\ups}\nn
&=& \lim_{z\to z_n^\pm}\frac{1}{(\ups-1)!}\frac{\rmd^{\ups-1}}{\rmd z^{\ups-1}}\left\{\frac{\mp 1}{z[\pm c_2\om(\pm z_n^\pm)^{\om-1}]^{\ups}}+O(z-z_n^\pm)\right\}\nn
&=&\lim_{z\to z_n^\pm}\frac{\mp 1}{(\ups-1)!}\frac{(-1)^\ups(\ups-1)!}{z^\ups[\pm c_2\om(\pm z_n^\pm)^{\om-1}]^{\ups}}\nn
&=&\frac{\mp 1}{[-c_2\om(\pm z_n^\pm)^{\om}]^{\ups}}\,.
\ean
The other is through the series expansions
\ben
\frac{1}{z}=\frac{1}{z_n^\pm+(z-z_n^\pm)}=\frac{1}{z_n^\pm}\sum_{k=0}^{\infty} \left(\frac{z-z_n^\pm}{z_n^\pm}\right)^k (-1)^k\,,
\een
and
\ban
[1+c_2(\pm z)^\om]^\ups &=& \left[\sum_{h=1}^{\infty}\frac{\rmd^h c_2(\pm Z)^\om}{\rmd Z^h}\bigg|_{Z=z_n^\pm} \frac{\left(z-z_n^\pm\right)^h}{h!}\right]^\ups\\
&=&\left(z-z_n^\pm\right)^\ups\left[\sum_{h=1}^{\infty}\frac{\rmd^h c_2(\pm Z)^\om}{\rmd Z^h}\bigg|_{ Z=z_n^\pm} \frac{\left(z-z_n^\pm\right)^{h-1}}{h!}\right]^\ups,
\ean
which yield
\ban
\frac{-1\pm 1}{z}\mp\frac{1}{z[1+ c_2\om(\pm z)^{\om}]^{\ups}}&=&
\frac{1}{z_n^\pm}\sum_{k=0}^{\infty} \left(-\frac{1}{z_n^\pm}\right)^k \left(z-z_n^\pm\right)^{k}\Bigg\{-1\pm 1\\
&&\mp\left(z-z_n^\pm\right)^{-\ups}\left[\sum_{h=1}^{\infty}\frac{\rmd^h c_2(\pm Z)^\om}{\rmd Z^h}\bigg|_{ Z=z_n^\pm} \frac{\left(z-z_n^\pm\right)^{h-1}}{h!}\right]^{-\ups}\Bigg\}.
\ean
The residue is the coefficient of the term $\left(z-z_n^\pm\right)^{-1}$, obtained for $k=u-1$ and $h=1$:
\ben
\pm\left(-\frac{1}{z_n^\pm}\right)^u \left[\frac{\rmd c_2(\pm Z)^\om}{\rmd Z}\bigg|_{ Z=z_n^\pm} \right]^{-\ups}
= \frac{\pm1}{(-z_n^\pm)^{u} \left[ \pm c_2 \om (\pm z_n^\pm)^{\om-1} \right]^{\ups}} 
= \frac{\pm1}{ \left[-c_2 \omega (\pm z_n^\pm)^{\om} \right]^{\ups}}\,.
\een


\section{Poles distribution for the HP operator}\label{appB}

In this appendix, we prove several theorems about the distribution of the 
 complex poles of the Green's function \Eq{propz} on the Riemann surface.

Poles with a non-vanishing imaginary part create an elaborate pattern on the Riemann surface. Let us focus first on pairs $(z_n^\pm,z_{n_*}^\pm)\coloneqq (z_n^\pm,z_n^{\pm*})$ of complex conjugate poles. For any $\g$, the phases $\vp_n$ and $\vp_{n_*}$ of the complex conjugate poles $z_n$ and $z_{n_*}$ are related to each other by
\be\label{constrong1}
\vp_n^\pm=-\vp_{n_*}^\pm+2\pi k\,,\qquad k\in\mathbb{Z}\,.
\ee
To see where these conjugate pairs lie, we write down the conditions for the poles $z_n^\pm$ and $z_{n_*}^\pm$ to belong to, respectively, sheet $S_l^\pm$ and $S_{l_*}^\pm$:
\ba
&& 2l\mp\frac12 < \frac{\vp_n^\pm}{\pi} < 2l\mp\frac12+1\,,\label{inter1}\\
&& 2l_*\mp\frac12 < \frac{\vp_{n_*}^\pm}{\pi} < 2l_*\mp\frac12+1\,.\label{inter2}
\ea
Plugging \Eq{constrong1} into \Eq{inter1}, we get
\be\label{inter3}
2(k-l)\pm\frac12-1 < \frac{\vp_{n_*}^\pm}{\pi} < 2(k-l)\pm\frac12\,.
\ee
Comparing \Eq{inter3} with \Eq{inter2}, we reach the condition
\be\label{mmst}
l+l_*=k-\de_\mp\,.
\ee
Then:
\begin{itemize}
\item When the pole $z_n=z_n^+$ is in the first or fourth quadrant, we take the $-$ sign in \Eq{mmst}, so that $l+l_*=k$. If one pole is in a certain sheet, its conjugate is in a different one unless $l_*=l=k/2$:
\be\label{beo1}
z_n^+\in S_l^+\,,\qquad z_{n_*}^+\in S_{k-l}^+\,.
\ee
\item When the pole $z_n=z_n^-$ is in the second or third quadrant, we take the $+$ sign in \Eq{mmst}, so that $l+l_*=k-1$:
\be\label{beo2}
z_n^-\in S_l^-\,,\qquad z_{n_*}^-\in S_{k-l-1}^-\,.
\ee
The conjugate pair $(z_n^-,z_{n*}^-)$ is in the same sheet only if $l_*=l=(k-1)/2$.
\end{itemize}

The following theorems single out the Riemann sheets where complex poles are absent or appear in conjugate pairs.
\begin{theo}[{\bf HP: complex-conjugate pairs ---real $\bm{0<\g<2\ups+1}$}]\label{theo3-A}
When
\be\label{gulim}
0<\g<2\ups+1\,,
\ee
there are no complex-conjugate pairs $(z_n^\pm,z_n^{\pm*})$ in any Riemann sheet $S_{l}^\pm$.
\end{theo}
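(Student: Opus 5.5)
Fix a half-plane $S_l^\pm$ and suppose both $z_n^\pm$ and its conjugate $z_{n_*}^\pm$ lie in it. By \Eq{mmst}, this forces $l_*=l$, so $k=2l+\de_\mp$, and \Eq{constrong1} reads $\vp_{n_*}^\pm=-\vp_n^\pm+2\pi(2l+\de_\mp)$. Both phases belong to the explicit lattice \Eq{c2pha}, $\vp_m^\pm=\frac{2m+1}{\om}\pi-\de_\mp\pi$. Adding the two lattice expressions gives
\be
\frac{2(n+n_*+1)}{\om}\pi-2\de_\mp\pi=2\pi(2l+\de_\mp)\,,\qquad\text{that is}\qquad \frac{n+n_*+1}{\om}=2l+2\de_\mp\,.
\ee
Together with the window \Eq{inter1}, this is a Diophantine-type constraint on $(n,n_*,l)$ for a given $\om$.

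The key step is to pair this constraint with the sheet condition \Eq{inter1}, which confines $\vp_n^\pm$ to an open interval of length $\pi$ centred at $(2l+\de_\mp)\pi$. The conjugate lies in the mirror interval about the same centre. Consequently $\vp_n^\pm-\vp_{n_*}^\pm=\frac{2(n-n_*)}{\om}\pi$ must satisfy $0<\big|\frac{2(n-n_*)}{\om}\big|\pi<\pi$. The difference is nonzero because a genuinely complex pole is not on the real axis of that sheet. Since $n-n_*$ is a nonzero integer, this requires $|\om|>2|n-n_*|\geq2$.

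The upper bound in \Eq{gulim} gives $\g<2\ups+1$, i.e. $\om<2$. In the case $\om<0$, \Eq{gom1} gives $|\om|<1/\ups<2$. So in the whole range $|\om|<2$, which contradicts the requirement above. Hence no conjugate pair can share a sheet.

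I expect the main obstacle to be the bookkeeping of the boundary cases. I will check that $\vp_n^\pm=\vp_{n_*}^\pm$ modulo $2\pi$ corresponds to a real pole (phase $\equiv0$ or $\pi$), not a conjugate pair, so that excluding it is legitimate. I will also handle the endpoints of the open intervals in \Eq{Smpm}, i.e. poles on $\mathbb{I}$, which are not in $S_l^\pm$ at all. Finally, I will verify that the integers $\de_\mp$ shift the window centre consistently in both half-planes, so the same inequality $|\om|>2$ results for both signs.
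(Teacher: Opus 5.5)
Your argument is correct and is essentially the paper's: the paper notes that the admissible window for $n$ in each half-sheet $S_l^\pm$ has length $|\omega|/2<1$, so each half-sheet holds at most one pole. That is the same fact as your observation that two distinct poles differ in phase by at least $2\pi/|\omega|>\pi$, the width of $S_l^\pm$. Your sum constraint $(n+n_*+1)/\omega=2l+2\delta_\mp$ and the mirror-interval remark are not needed, because assuming both poles lie in $S_l^\pm$ already puts their phases in one open interval of length $\pi$.
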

\begin{proof}
In order to see how many complex poles $z_n^\pm$ lie in any given sheet $S_l^\pm$, we set the conditions on $n$ such that \Eq{inter1} holds. From \Eq{c2pha} and \Eq{inter1}, $2l-1/2<(2n+1)/\om<2l+1/2$ and $2l+1/2<(2n+1)/\om-1<2l+3/2$, hence
\ba
\hspace{-1.2cm}&& S_l^+:\qquad \left\{\begin{matrix} 
-\frac12<\om<0:\quad \hphantom{-}\frac{\om}{4}+\om l-\frac{1}{2}<n<-\frac{\om}{4}+\om l-\frac{1}{2}\\
\hspace{.8cm}\om>0:\quad -\frac{\om}{4}+\om l-\frac{1}{2}<n<\frac{\om}{4}+\om l-\frac{1}{2}\end{matrix}\right.\,,\label{condia}\\
\hspace{-1.2cm}&& S_l^-:\qquad \left\{\begin{matrix} 
-\frac12<\om<0:\quad \hphantom{-}\frac{\om}{4}+\om (l'+1)-\frac12<n<-\frac{\om}{4}+\om (l'+1)-\frac12\\
\hspace{.6cm}\om>0:\quad -\frac{\om}{4}+\om (l+1)-\frac12<n<\frac{\om}{4}+\om (l+1)-\frac12
\end{matrix}\right.\,,\label{condib}
\ea
where we split the intervals in $\om$ according to \Eq{gom}. The inequalities $n>a$ and $n<b$ are equivalent to $n\geq \lfloor a\rfloor+1$ and $n\leq \lceil b\rceil-1$. Thus, the admissible integers in the interval $a<n<b$ are $\lfloor a\rfloor+1,\,\lfloor a\rfloor+2,\,\dots,\,\lceil b\rceil-1$ and their number in any given Riemann sheet $S_l^\pm$ is
\be\label{flocei}
N_l^\pm=\lfloor b\rfloor-\lceil a\rceil+1\,.
\ee
In all the cases \Eq{condia} and \Eq{condib}, the length of the interval is $b-a=|\om|/2$, implying that $N_l^\pm=0$ or $N_l^\pm=1$ for
\be\label{gulim2}
-\frac1\ups<\om<2
\ee
(hence $0<\g<2u+1$). 
 Therefore, in this range of $\om$ values there are not enough poles to create complex-conjugate pairs on any Riemann sheet. For $2<\om<4$, $n=-1,0$ and $N_l^\pm=2$. For $\om>4$, $N_l$ increases further.
\end{proof}
Now present a powerful result, compatible with all the above theorems, stating that it is always possible to find Riemann sheets with no real or complex pole therein.
\begin{theo}[{\bf HP: empty sheets ---real $\bm{0<\g<2\ups+1}$}]\label{theo6-A}
There always exists a completely empty Riemann sheet $S_{l,l'}=S_l^+\cup S_{l'^-}$ for any non-integer $\om$ in the interval \Eq{gulim2}. 
\end{theo}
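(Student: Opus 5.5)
The argument reuses the counting from the proof of Theorem~\ref{theo3-A}. By \Eq{condia}, the poles $z_n^+$ in $S_l^+$ are indexed by the integers $n$ in an open interval of length $|\om|/2$ centred at $c_l^+\coloneqq\om l-\frac12$. By \Eq{condib}, the poles $z_n^-$ in $S_{l'}^-$ are indexed by the integers in an open interval of the same length centred at $c_{l'}^-\coloneqq\om(l'+1)-\frac12$. The sheet $S_{l,l'}=S_l^+\cup S_{l'}^-$ is therefore empty exactly when both intervals contain no integer. Since $l$ and $l'$ may be chosen independently, it is enough to show two things separately: that some $l$ leaves the interval around $c_l^+$ free of integers, and that some $l'$ does the same around $c_{l'}^-$.

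An open interval of length $|\om|/2$ centred at $c$ contains no integer iff $\mathrm{dist}(c,\mathbb{Z})\geq|\om|/4$. Under \Eq{gulim2} we have $|\om|/4<\frac12$. This is the key point: the forbidden neighbourhood of each integer has total width $|\om|/2<1$, so some admissible positions for the centre always exist. What remains is to show that the arithmetic progressions $\{\om l-\frac12\}_{l\in\mathbb{Z}}$ and $\{\om(l'+1)-\frac12\}_{l'\in\mathbb{Z}}$ actually land in that admissible region modulo $1$. Both progressions are the same set, namely $\om\mathbb{Z}-\frac12$ modulo $1$, so a single argument covers both half-planes.

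I would split into cases according to whether $\om$ is rational or irrational.
\begin{itemize}
\item If $\om$ is irrational, Weyl equidistribution (or simply Kronecker density) of $\om l$ modulo $1$ gives some $l$ with $\om l-\frac12$ arbitrarily close to a half-integer. Its distance to $\mathbb{Z}$ is then close to $\frac12>|\om|/4$.
\item If $\om=p'/q'$ is rational and irreducible with $q'\geq2$ (non-integer), the values $\om l$ modulo $1$ run through $\{0,1/q',\dots,(q'-1)/q'\}$. The choice $l=0$ gives centre $-\frac12$, at distance exactly $\frac12$ from $\mathbb{Z}$, so $S_0^+$ works.
\end{itemize}
In fact $l=0$ already suffices for the $+$ half in every case, because the centre $-\frac12$ has distance $\frac12>|\om|/4$ from $\mathbb{Z}$. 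Likewise $l'=-1$ gives $c^-_{-1}=-\frac12$. So I expect the explicit choice $S_{0,-1}=S_0^+\cup S_{-1}^-$ to be empty for every $\om$ in \Eq{gulim2}, and the density argument to be needed only as a fallback.

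The main obstacle is the boundary bookkeeping. I need to check that the floor/ceiling count \Eq{flocei} really returns $0$ at the explicit choice, and in particular that no pole sits on a sheet boundary $\Re\,z=0$, where it would have been excluded from every $S_l^\pm$ but could still affect the contour. With centre $-\frac12$ the interval is $]-\frac12-|\om|/4,-\frac12+|\om|/4[$, whose endpoints lie strictly inside $]-1,0[$ because $|\om|<2$, so no integer is attained. Still, I would double-check the sign conventions of \Eq{condia}--\Eq{condib} for $\om<0$, where the lower and upper bounds swap, and the use of $l'$ versus $l$ in \Eq{condib}. I would also recheck against \Eq{c2pha} directly: $n$ must not satisfy $\vp_n^+\in\,]-\pi/2,\pi/2[$, i.e.\ $|(2n+1)/\om|<\frac12$, which is impossible since $|2n+1|\geq1>|\om|/2$.
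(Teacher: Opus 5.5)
Your proof is correct, and in its final form it takes a different and more economical route than the paper.

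The paper sets up the same distance criteria \Eq{distcon+}--\Eq{distcon-} and then proves existence case by case:
\begin{itemize}
\item it treats $\om<0$ as trivial;
\item it invokes Weyl equidistribution of $\{\om l\}$ for irrational $0<\om<2$;
\item for rational $\om=p'/q'$ it reduces to $|2m-q'|\geq p'/2$, noting that the maximum $q'$ satisfies this iff $\om<2$. That maximum is attained at $m=0$, i.e.\ $l\equiv 0$ mod $q'$.
\end{itemize}

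You instead exhibit one witness, $S_0^+\cup S_{-1}^-$, that works uniformly in $\om$. On it $\pm z$ has phase in $]-\pi/2,\pi/2[$, so the phase of $(\pm z)^\om$ has modulus below $|\om|\pi/2<\pi$. Hence $1+(\pm z)^\om$ cannot vanish, even on the boundary $\Re\,z=0$. This has several consequences.
\begin{itemize}
\item The density argument becomes unnecessary; the paper's rational case is in effect already using your $l=0$.
\item It shows that only $|\om|<2$ matters.
\item It names a concrete pole-free sheet. For rational $\om$, $S_{-1}^-$ coincides with $S_{q'-1}^-$, so it lies in the main sequence \Eq{gooseq}. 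By contrast, the naive choice $S_0^+\cup S_0^-$ fails outside \Eq{00}.
\item It handles $\om<0$ correctly, with half-length $|\om|/4$.
\end{itemize}

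On the last point, the paper's inequalities are written with $\om/4$, which is what makes the negative-$\om$ case look vacuous. In fact, for irrational negative $\om$ some half-planes $S_l^+$ do contain poles. For example, with $\om=-\sqrt2/4$, the pole $z_{-2}^+$ has phase $6\sqrt2\,\pi\in\,]7.5\pi,8.5\pi[$, so it lies in $S_4^+$. Existence there therefore genuinely needs a choice such as yours.

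What the paper's route gives in exchange is a characterization of \emph{all} empty pairs $(l,l')$, which is what statements like \Eq{00} require. For the theorem as stated, your explicit witness is enough and is the cleaner proof.
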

\begin{proof}
Consider first $S_l^+$. The second interval in \Eq{condia} is centered at $\om l-1/2$ with half-length $\om/4<1/2$, which implies that there is no integer in it ($N_l^+=0$) if, and only if, the distance between $\om l-1/2$ and the closest integer $n$ is greater than or equal to $\om/4$. Defining ${\rm dist}(x)\coloneqq \min(\{x\},1-\{x\})$, where $\{x\}=x-\lfloor x\rfloor$ the fractional part of $x$, and using the property ${\rm dist}(x-n)={\rm dist}(x)$ for any integer $n$, we have
\be\label{distcon+}
S_l^+:\qquad 
{\rm dist}\!\left(\om l-\frac{1}{2}\right)\geq \frac{\om}{4}\,,
\ee
Similarly, for $S_{l'}^-$ the second interval in \Eq{condib} is centered at $\om(l'+1)-1/2$ with half-length $\om/4$, which implies that $N_{l'}^-=0$ if, and only if,
\be\label{distcon-}
S_{l'}^-:\qquad 
{\rm dist}\left[\om(l'+1)-\frac12\right]\geq \frac{\om}{4}\,.
\ee
For $\om>2$, one has $N_l^\pm\geq 1$ because the interval is $>1$ and contains at least one non-trivial integer $n$. For $-1/\ups<\om<0$, both \Eq{distcon+} and \Eq{distcon-} are trivially satisfied since the distance is a non-negative number. For $0<\om<2$ irrational, there are infinitely many $l$ such that $N_l^\pm=0$. Since we can arbitrarily glue together two different half-planes $S_l^+$ and $S_{l'}^-$ into a Riemann sheet $S_{l,l'}=S_l^+\cup S_{l'^-}$, we can always find two values $(l,l')$ such that $S_{l,l'}$ is empty of poles for any given $0<\om<2$. For irrational $\om$, one way to show this is to note that the sequences of the rational part $\{\om l\}$ (as well its counterpart with $l\to l'+1$) is equidistributed modulo 1 (Weyl theorem), hence the number $\om l-1/2$ is dense in $[0,1[$. Therefore, values arbitrarily close to any fractional part, or arbitrarily far from integers, occur such that \Eq{distcon+} and \Eq{distcon-} hold. 

The extension to rational $\om=p'/q'$ is straightforward. Take $S_l^+$, substitute $\om=p'/q'$ in \Eq{distcon+} and multiply by $2q'$ and find $|2p'l-q'|\geq p'/2$, where $p'l=0,1,\dots,q-1$ modulo $q$. Therefore, $|2p'l-q'|=|2m-q'|$, $m=0,1,\dots,q-1$, and we need to check whether and when the following condition holds:
\be\label{fail}
N_l^+=0\,:\qquad |2m-q'|\geq\frac{p'}{2}\,,\qquad m=0,1,\dots,q'-1\,.
\ee
The smallest possible value of $|2m-q'|$ is 0 if $q'$ is even ($m=q'/2$) and 1 if $q'$ is odd ($m=(q'\pm 1)/2$), while the largest value is $q'$ ($m=0$). Thus, inequality \Eq{fail} holds for all $m$ only if the maximum respects it: $q'> p'/2$, i.e., $\om<2$.
\end{proof}
Fixing $\om$, one can find the pairs $(l,l')$ (which always exists) obeying \Eq{distcon+} and \Eq{distcon-}. Conversely, fixing $(l,l')$ one can find the allowed range of $\om$ in the interval $]0,2[$. For example, for $l=0=l'$ it must be
\be\label{00}
0<\om\leq\frac25\,\cup\,\frac23\leq\om\leq\frac65\,.
\ee

If one fixes $\ups$ such that $\g$ falls outside the interval \Eq{gulim}, then there may be conjugate pairs in some Riemann sheets. 
\begin{theo}[{\bf HP: complex-conjugate poles ---irrational $\bm{\g>2\ups+1}$}]\label{theo4-A}
For $\g>2\ups+1$ ($\om>2$) irrational, the only Riemann sheets that can host 
conjugate pairs are $S_0^+$ and $S_{-1}^-$.
\end{theo}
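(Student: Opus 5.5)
The plan is to combine the conjugation constraint \Eq{constrong1} with the explicit phases \Eq{c2pha}, and to use irrationality of $\om$ to fix the integer $k$. Once $k$ is known, the sheet-labelling relation \Eq{mmst} tells us where a conjugate pair can sit in a single sheet. The statement only claims that $S_0^+$ and $S_{-1}^-$ \emph{can} host pairs. So I only need to show that every other sheet is excluded; no existence argument is required. It is enough to note that for $\om>2$ the proof of Theorem~\ref{theo3-A} already gives $N_l^\pm\geq 2$ on some sheets.

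First I fix a genuinely complex pole $z_n^+$ and its conjugate, which is again a pole of the same family, $z_{n_*}^+$ with phase $\vp_{n_*}^+=(2n_*+1)\pi/\om$. Inserting both phases into \Eq{constrong1}, $\vp_n^++\vp_{n_*}^+=2\pi k$, gives
\be
\frac{n+n_*+1}{\om}=k\,,\qquad n,n_*,k\in\mathbb{Z}\,.
\ee
If $k\neq 0$, then $\om=(n+n_*+1)/k$ would be rational. Since $\g=\ups\om+1$ is irrational, so is $\om$, hence $k=0$ and $n_*=-n-1$. By \Eq{beo1} the pair then sits in $S_l^+$ and $S_{-l}^+$. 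These coincide only for $l=l_*=k/2=0$, so the only admissible half-plane is $S_0^+$.

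Next I repeat the computation for $z_n^-$, with $\vp_n^-=(2n+1)\pi/\om-\pi$. The condition becomes $(n+n_*+1)/\om=k+1$, and irrationality forces $k=-1$. Then \Eq{beo2} gives $l+l_*=k-1=-2$, so the pair shares a sheet only when $l=l_*=-1$, i.e.\ in $S_{-1}^-$. Every other half-plane $S_l^\pm$ can therefore contain at most unpaired complex poles, or none at all, which is the claim.

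The main obstacle is to justify the hidden assumption that the conjugate of $z_n^\pm$ is again of the form $z_{n_*}^\pm$ within the \emph{same} $\pm$ family, so that \Eq{c2pha} applies to both poles. This holds because the pole condition $1+(\pm z)^\om=0$ is defined on the same half-plane for both members of a pair. I also need to exclude real poles, for which $n_*=n$. These would require $\om=2n+1$ rational in the $+$ family, or $(2n+1)/\om\in\mathbb{Z}$ in the $-$ family, and both are impossible for irrational $\om$. I would make these two points explicit at the start, and only then run the two short parity and irrationality arguments above.
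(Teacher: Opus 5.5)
Your proposal is correct and follows the paper's proof essentially verbatim: combining \Eq{constrong1} with \Eq{c2pha} gives $n_*=-n-1+\om(k+\de_\mp)$. Irrationality of $\om$ then forces $k=-\de_\mp$, and \Eq{mmst} places the conjugate in $S_{-l}^+$ or $S_{-l-2}^-$, which coincides with the original half-plane only for $S_0^+$ and $S_{-1}^-$. Your extra remark excluding self-conjugate (real) poles is a harmless addition; the precise condition in the $+$ family is $\om=(2n+1)/(2l)$ rather than $\om=2n+1$, but it is rational either way, so the conclusion stands.
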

\begin{proof}
Equations \Eq{c2pha} and \Eq{constrong1} imply that
\be\label{constrong2}
n_*= -n-1+\om(k+\de_\mp)\,.
\ee
Since $\om$ is irrational, $n_*$ is an integer only if $k+\de_\mp=0$, i.e., if
\ba
&& S_l^+:\qquad \!\! k=0\,,\qquad\,\,\,\, n_*=-n-1\,,\label{stro1b}\\
&& S_l^-:\qquad \!\! k=-1\,,\qquad n_*=-n-1\,.\label{stro3b}
\ea
Equation \Eq{mmst} is in force, so that 
\ba
&& z_n^+\in S_l^+\,,\qquad z_{n_*}^+\in S_{-l}^+\,,\\
&& z_n^-\in S_l^-\,,\qquad z_{n_*}^-\in S_{-l-2}^-\,.
\ea
In particular, the conjugate poles $z_{n_*}^\pm$ belong to the same sheet of $z_n^\pm$ only if $l=-\de_\mp$ ($l=0$ for $z_{n_*}^+$ and $l=-1$ for $z_{n_*}^-$). 
\end{proof}

\begin{theo}[{\bf HP: complex-conjugate poles ---rational $\bm{\g=p/q>2\ups+1}$}]\label{theo5-A}
Let $\g=p/q>2\ups+1$ ($\om=p'/q'$ given in \Eq{ppqp}) be irreducible. The only Riemann sheets in the sequence \Eq{gooseq} that can host a complex conjugate pair $(z_n,z_{n_*})$ are $S_0^+$ for any $q'$ and $S_{{q'}/{2}}^+$ and $S_{{q'}/{2}-1}^-$ if $q'$ is even and $p'$ is odd. In particular, to find pairs in $S_0^-$, one must have $q'=2$. When $p'=1$, there are no complex poles anywhere.
\end{theo}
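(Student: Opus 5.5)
The proof will mirror that of Theorem~\ref{theo4-A}, with the irrationality argument replaced by an integrality condition. The first step is to combine \Eq{c2pha} with the conjugation constraint \Eq{constrong1}, which gives \Eq{constrong2}:
\be
n_*=-n-1+\frac{p'}{q'}\,(k+\de_\mp)\,.
\ee
Since $p'$ and $q'$ are coprime, $n_*$ is an integer if and only if $q'$ divides $k+\de_\mp$. So the conjugation index is $k=jq'-\de_\mp$ with $j\in\mathbb{Z}$.

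Next I would insert this into the sheet relation \Eq{mmst}. It gives $l+l_*=jq'-2\de_\mp$, which reads $l+l_*=jq'$ for $S^+$ and $l+l_*=jq'-2$ for $S^-$. A pair shares a sheet when $l=l_*$, which means $2l=jq'$ on the $+$ side and $2l=jq'-2$ on the $-$ side. The sequence \Eq{gooseq} is taken modulo $q'$, with $l\in\{0,\dots,q'-1\}$. On the $+$ side, $j=0$ gives $l=0$ for any $q'$. If $q'$ is even, $j=1$ also gives $l=q'/2$.

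On the $-$ side, the candidates are $l=-1\equiv q'-1$ and, for even $q'$, $l=q'/2-1$. The $-1$ sheet should be identified with the one already found in Theorem~\ref{theo4-A}, up to the labelling convention. Pairs in $S^-_0$ would require $q'/2-1=0$, i.e.\ $q'=2$, which gives the stated special case. The main obstacle I expect is the parity condition for the half-period sheets $S^+_{q'/2}$ and $S^-_{q'/2-1}$. There I must check that the pole phases $(2n+1)\pi/\om$ actually land in the right window \Eq{inter1} and are genuinely non-real. With $\om=p'/q'$, the phase is $(2n+1)q'\pi/p'$. For even $q'$ and even $p'$ this is impossible by coprimality, so $p'$ odd is the relevant case. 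For $p'$ odd I would show that some $n$ puts $\vp_n/\pi$ in the interval centred on $2l\mp1/2+1/2$ with $l=q'/2$. I would use the counting intervals \Eq{condia}–\Eq{condib}, whose length is $\om/2>1$ since $\om>2$, so they contain an integer.

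Finally, for $p'=1$ the poles are $\vp_n=(2n+1)q'\pi$, which is an integer multiple of $\pi$. Every pole is then real (possibly lying on the sheet boundaries), and no complex poles occur anywhere. This gives the last claim. I would close by remarking that $\om>2$ is used only to guarantee nonempty intervals, consistent with Theorem~\ref{theo3-A}.
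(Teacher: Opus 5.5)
Your computation is the paper's: integrality of $n_*$ in \Eq{constrong2} forces $q'\mid k+\de_\mp$. Then \Eq{mmst} with $l=l_*$ gives $2l=jq'$ on the $+$ side and $2l=jq'-2$ on the $-$ side. So $S^+_0$ and, for even $q'$, $S^+_{q'/2}$ come out exactly as in the paper.

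The gap is how you dispose of the $-$ side. You note, correctly, that labels are defined modulo $q'$: shifting the phase by $2\pi q'$ multiplies $(-z)^\om$ by $\rme^{2\pi\rmi p'}=1$. Hence $l=-1$ \emph{is} the sheet $S^-_{q'-1}$ of \Eq{gooseq}. Even without reducing, your $j=2$ (i.e.\ $k=2q'-1$) gives $l=l_*=q'-1$ directly. Appealing to Theorem~\ref{theo4-A} ``up to labelling'' does not remove this sheet from the sequence. Your argument therefore actually ends with $S^+_0$ and $S^-_{q'-1}$, plus $S^+_{q'/2}$ and $S^-_{q'/2-1}$ for even $q'$, which is not the stated list.

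Nor can the extra sheet be excluded by any further argument. For every rational $\om>2$, the poles with $2n+1=2p'\pm1$ have phases $(2q'-1)\pi\pm\pi/\om$. They lie in $S^-_{q'-1}$ and form the conjugate pair $-\rme^{\pm\rmi\pi/\om}$ with negative real part; this is just the $S^-_{-1}$ pair of Theorem~\ref{theo4-A}, relabelled. For instance, $\om=7/3$ ($\g=17/3$, $\ups=2$, $q'$ odd) has $\vp^-_6=32\pi/7$ and $\vp^-_7=38\pi/7$, both in $S^-_2$.

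The paper's proof reaches the stated list only by admitting $k\in\{-1,q'-1\}$ on the $-$ side and calling $l=-1$ ``outside \Eq{gooseq}''. That overlooks $k=2q'-1$. The correct conclusion of your reduction is the statement with $S^-_{q'-1}$ added. The remark that pairs in $S^-_0$ require $q'=2$ survives, since $S^-_{q'-1}=S^-_0$ would need $q'=1$, i.e.\ integer $\om$.

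Two smaller points. First, your ``main obstacle'' step is unnecessary, because the theorem only gives a necessary condition (``can host''). As planned it would also not show what you want: a window of length $\om/2>1$ guarantees \emph{a} pole, not a non-real pair. In $S^+_{q'/2}$ and $S^-_{q'/2-1}$ the index $2n+1=p'$ gives the real poles $z=1$ and $z=-1$, and genuine pairs appear there only for $\om>4$; for $\om=5/2$ these sheets contain nothing else.

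Second, your $p'=1$ argument via the phases $(2n+1)q'\pi-\de_\mp\pi$ is correct and more direct than the paper's pole count, although the case is vacuous under $\om>2$. Also, such phases are integer multiples of $\pi$ and never fall on sheet boundaries, which sit at half-integer multiples.
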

\begin{proof}
For rational $\g$, \Eq{constrong2} is an integer if, and only if, $k+\de_\mp$ is an integer multiple of $q'$. This fixes all admissible values for $k$ in the sequence \Eq{gooseq}:
\bs\ba
&& S_l^+:\qquad \!\! k=0\,,\qquad\,\,\,\, n_*=-n-1\,,\label{stro1}\\
&&\hphantom{S_l^+:}\qquad k=q'\,,\qquad\,\,\, n_*=-n-1+p'-q'\,,\label{stro2}\\
&& S_l^-:\qquad \!\!  k=-1\,,\qquad n_*=-n-1\,,\label{stro3}\\
&& \hphantom{S_l^-:}\qquad \!  k=q'-1\,,\quad n_*=-n-1+p'-q'\,.\label{stro4}
\ea\es
From \Eq{mmst}, the two complex conjugate poles $(z_n^+,z_{n*}^+)$ lie in the same sheet only when $l_*=l=k/2$, so either in $S_0^+$ (\Eqq{stro1}) or in $S_{q'/2}^+$ when $q'$ is even (\Eqq{stro2}; hence $p'$ is odd). The conjugate pair $(z_n^-,z_{n*}^-)$ is in the same sheet only if $l_*=l=(k-1)/2$, hence either $l=-1$ (\Eqq{stro3}, i.e., outside \Eq{gooseq}) or $l=q'/2-1$ (\Eqq{stro4}); the latter case requires $q'$ to be even (hence $p'$ odd). Thus, these pairs can be found on the sheet $S_{q'/2-1}^-$ and, in particular, in $S_0^-$ only if $q'=2$ ($k=1$). Finally, since $p'$ counts the number of poles of the equation $1+(\pm z)^{p'/q'}=0$, there is only one real pole (of order $u$) when $p'=1$.
\end{proof}

Theorems \ref{theo3-A}, \ref{theo4-A} and \ref{theo5-A} individuate the sheets where one can find whole complex-conjugate pairs but they do not count them as a function of $\g$ precisely. In particular, they do not determine whether such pairs are actually accompanied by unpaired complex poles (or, even worse, are absent and only unpaired poles are present), nor whether sheets allowing for unpaired poles are populated or actually empty. For example, when $\g$ is rational $S_0^+$ can contain unpaired complex poles when setting $l=0$ and $l_*=k$, but for which values of $\g$? Are there really pairs in $S_{-1}^-$ for $\g$ irrational and does this sheet also or only host unpaired poles?

In general, for $\g>2\ups+1$ there are unpaired complex poles on $S_l^\pm$ depending on the values of $\g$ and $l$. For example, for $\om=3/5$ and $l=0$ one has an unpaired pole $z_0^-=\rme^{\rmi2\pi/3}$ in $S_0^-$. We do not formalize the full classification of isolated poles on the Riemann surface here. 


\section{Poles distribution for the HS, HA and NH operators}\label{appC}

In this appendix, we redo the work of appendix~\ref{appB} but for the Green's functions \Eq{hsim}--\Eq{nhrep}.


\subsection{HS operator}\label{appC1}

This section is devoted to the pole distribution of the Green's function \Eq{hsim}.

\subsubsection{Real poles}\label{appC1a}

\begin{theo}[{\bf HS: real poles ---irrational $\bm{\g>0}$}]\label{theo1-B1}
For $\g>0$ irrational, the only real pole of the Green's function \Eq{hsim} is $z=1$ in the sheet $S_0^+$.
\end{theo}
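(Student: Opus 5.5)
Write the HS Green's function \Eq{hsim} with $c_2=1$ as $\tilde G(z)=1/\{-z+(z^2)^{\g/2}\}$, split as in \Eq{hsim2} into the half-plane branches $\tilde G_\pm$. On $S_l^\pm$ the pole condition reads $(\pm z)^{\g-1}=1$, with the branch fixed by the phase $\t_l^\pm$ in \Eq{Smpm} and the representation $\pm1=\rme^{\rmi\pi\de_\mp}$. The poles are then $|z|=1$, $\vp_n^\pm$ as in \Eq{c2pha1}. So the plan is to determine exactly which of these phases are real (i.e.\ $\vp_n^\pm\in\pi\mathbb{Z}$) and which sheets contain them.

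\emph{Right half-plane.} Here $\vp_n^+=2n\pi/(\g-1)$. A real pole on the positive axis needs $\vp_n^+=2\pi j$ for some integer $j$. This gives $n=j(\g-1)$, and since $\g$ is irrational this forces $j=n=0$. So the only candidate is $\vp_0^+=0$, i.e.\ $z=1$.

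A point on the negative real axis would need $\vp_n^+=(2j+1)\pi$. That requires $2n=(2j+1)(\g-1)$, which again has no solution for irrational $\g$. Moreover, such a phase corresponds to $\Re z<0$, which is not in the domain of $\tilde G_+$ anyway.

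Checking \Eq{Smpm}, the phase $0$ lies in $]-\pi/2,\pi/2[$, so $z=1\in S_0^+$ only. Every other $S_l^+$ would require phase $2\pi l$, which is excluded by the argument above. I should also verify that $z=1$ is a genuine simple pole and not cancelled. Expanding the denominator near $z=1$ gives $-z+z^\g\simeq(\g-1)(z-1)\neq0$ at first order, which settles this.

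\emph{Left half-plane.} Here $\vp_n^-=(2n+1)\pi/(\g-1)-\pi$. A real pole would be $z=-1$, requiring $\vp_n^-=(2j+1)\pi$. This gives $2n+1=2(j+1)(\g-1)$, i.e.\ an odd integer equal to an irrational multiple of an even integer.

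If $j+1\neq0$, the right-hand side is irrational and there is no solution. If $j=-1$, the condition becomes $2n+1=0$, which is impossible. The positive real axis is excluded by the half-plane domain. Hence there is no tachyon $z=-1$ for irrational $\g$ on any $S_l^-$, and the claim follows.

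The main obstacle is bookkeeping of branches. The representative phase $\vp_n^\pm$ must be read modulo the sheet windows \Eq{Smpm} consistently with the $\de_\mp$ convention. I would double-check this by directly evaluating $(\pm z)^{\g-1}$ with $\Arg$ shifted by $2\pi l$ on each sheet, rather than relying solely on \Eq{c2pha1}.
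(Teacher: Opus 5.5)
Your argument matches the paper's proof. The paper reads off from \Eq{c2pha1} that $\vp_n^\pm\in\pi\mathbb{Z}$ is possible only for the upper sign with $n=0$, and then $2n/(\g-1)=2l$ forces $l=0$. Your case analysis of both half-planes is a more detailed version of this, and the check that $z=1$ is a genuine simple pole is a useful addition.

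One step is misstated, though, and the result depends on it. The pole condition is not $(\pm z)^{\g-1}=1$ on both halves. With $\sqrt{z^2}=-z$, the denominator of \Eq{hsim} is $-z+(-z)^\g=(-z)[1+(-z)^{\g-1}]$. So on $S_l^-$ the condition is $(-z)^{\g-1}=-1$, and overall it reads $(\pm z)^{\g-1}=\pm1$. Only this condition produces the phase $\vp_n^-=(2n+1)\pi/(\g-1)-\pi$ of \Eq{c2pha1}, which is what you actually use.

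If you took your stated condition literally (the sign in \Eq{hsim2} suggests it), you would get $\vp_n^-=2n\pi/(\g-1)-\pi$ instead. Then $n=0$ gives $z=\rme^{-\rmi\pi}=-1\in S_{-1}^-$, where $-z$ has phase $0$ and $(-z)^{\g-1}=1$ for every $\g$. That is a tachyonic real pole, which contradicts the theorem. So the branch double-check you propose at the end is not optional: it must be done against \Eq{hsim} itself. With the corrected condition, the rest of your proof goes through as written.
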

\begin{proof} The phases in \Eq{c2pha1} can never be integer multiples of $\pi$ unless we are in $S_l^+$ and $n=0=c$. From
\be
S_l^+:\qquad \frac{2n}{\g-1} = 2l\,,\label{Smcond1irrB1}
\ee
we conclude that $l=0$. All other sheets $S_l^-$ and $S_{l\neq 0}^+$ are empty of real poles.
\end{proof}

\begin{theo}[{\bf HS: real poles ---rational $\bm{\g=p/q>0}$}]\label{theo2-B1}
When $\g=p/q$ is an irreducible rational, the Green's function \Eq{hsim} has a physical real pole $z=1$ living in $S_0^+$. The tachyonic real pole $z=-1$ is present on all Riemann sheets $S_{l}^-$ such that $l=(2n+1)q/[2(p-q)]-1$. In particular, in the main sequence \Eq{gooseq'1} there is no $z=-1$ pole in $S_{q-1}^-$, while there is a tachyonic pole in $S_0^-$ only if
\be\label{nogam}
\g= n+\frac32=\frac12,\,\frac32,\,\frac52,\,\frac72\,,\dots\,,\qquad n\in\mathbb{Z}\,.
\ee
\end{theo}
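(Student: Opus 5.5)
We will follow the proof of Theorem~\ref{theo1-B1}, now keeping track of the rational case. With $c_2=1$, the poles of \Eq{hsim} have unit modulus and phases \Eq{c2pha1}:
\be
\vp_n^+=\frac{2n}{\g-1}\pi\,,\qquad \vp_n^-=\frac{2n+1}{\g-1}\pi-\pi\,.
\ee
A pole is real exactly when its phase is an integer multiple of $\pi$. We will combine this with the sheet conditions \Eq{Smpm}, namely $(2l-\tfrac12)\pi<\vp^+<(2l+\tfrac12)\pi$ on $S_l^+$ and $(2l+\tfrac12)\pi<\vp^-<(2l+\tfrac32)\pi$ on $S_l^-$. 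On $S_l^\pm$ the only reachable real points are $z=+1$ (phase $2\pi l$) on $S_l^+$ and $z=-1$ (phase $(2l+1)\pi$) on $S_l^-$. So the proof reduces to solving two linear Diophantine equations in $n$ and $l$.

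For $z=1$ we need $2n/(\g-1)=2l$. Substituting $\g=p/q$ gives $nq=l(p-q)$. The trivial solution $n=l=0$ always exists, so $z=1$ lies in $S_0^+$; this is the same as in Theorem~\ref{theo1-B1}. We will also note that $l\neq0$ solutions, with $l$ a multiple of $q/\gcd(q,p-q)=q$, fall outside the main sequence \Eq{gooseq'1}. This justifies singling out $S_0^+$ as the statement does.

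For $z=-1$ we need $\vp_n^-=(2l+1)\pi$, that is $(2n+1)/(\g-1)=2l+2$. Solving for $l$ gives
\be
l=\frac{(2n+1)q}{2(p-q)}-1\,,
\ee
which is exactly the claimed formula. The tachyon is present on $S_l^-$ precisely when this $l$ is an integer for some $n\in\mathbb{Z}$.

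The two special claims are then arithmetic, and this is the step needing most care with parity and coprimality. For $S_{q-1}^-$ we would need $(2n+1)q=2q(p-q)$, i.e. $2n+1=2(p-q)$, which is impossible because the left side is odd and the right side even. So there is no tachyon on $S_{q-1}^-$. For $S_0^-$ we need $(2n+1)q=2(p-q)$, equivalently $(2n+3)q=2p$. Since $\gcd(p,q)=1$, $q$ divides $2$. The case $q=1$ is excluded because it makes $2n+3$ even. Hence $q=2$, which gives $p=2n+3$ and $\g=p/2=n+\tfrac32$. This is \Eq{nogam}; the positive values require $n\geq-1$. Conversely, each such $\g$ produces a solution, which closes the equivalence.
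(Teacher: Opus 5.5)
Your proposal is correct and follows essentially the same route as the paper. You solve $\vp_n^+=2\pi l$ and $\vp_n^-=(2l+1)\pi$ as Diophantine conditions, obtaining $l\in q\mathbb{Z}$ for $z=1$ (so only $S_0^+$ in the main sequence) and $l=(2n+1)q/[2(p-q)]-1$ for $z=-1$, and then specialize to $l=q-1$ and $l=0$; your parity argument for $S_{q-1}^-$, the coprimality step forcing $q=2$ for $S_0^-$, and the converse check make explicit what the paper's proof only asserts.
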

\begin{proof}
The $z=-1$ pole is present on the sheets $l=(2n+1)q/[2(p-q)]-1$, where $l$ is integer only if $(2n+1)q$ is a multiple of $2(p-q)$, i.e., if $q$ is even. In particular, there is no real pole in $S_{-1}^-$ and $S_{q-1}^-$; $S_0^-$ has the tachyonic pole only if $\g=p/q = (2n+3)/2=1/2,3/2,\dots$; $S_1^-$ hosts it only if $\g=(2n+5)/4=1/4,3/4,\dots$; and so on. Avoiding these values guarantees the absence of this mode in any given sheet.

The $z=1$ pole is found on $S_0^+$ (corresponding to $n=0$) and on all Riemann sheets determined by \Eq{Smcond1irrB1}, i.e., when $n=l(p-q)/q$ is integer, hence when $l$ is an integer multiple of $q$. The only sheet in this sequence $\{S_0^+,S_{\pm q}^+,S_{\pm 2q}^+,\dots\}$ included in the minimal set \Eq{gooseq'1} is $S_0^+$.
\end{proof}
In particular, we have no other real mode except $z=0$ only in two cases: (I) $\g>0$ irrational and the theory is not defined on $S_0^+$, which is always possible; (IIa) $\g$ rational and not taking any of the values in \Eq{nogam} (if the theory is defined on $S_0^-$; the case on other sheets $S_l^-$ can be inferred from the formul\ae\ above), with (II) the theory not defined on $S_0^+$.

\subsubsection{Complex poles}\label{appC1b}

\begin{theo}[{\bf HS: complex-conjugate pairs ---real $\bm{0<\g<3}$}]\label{theo3-B1}
When
\be\label{gulim-B1}
0<\g<3\,,
\ee
there are no complex-conjugate pairs $(z_n^\pm,z_n^{\pm*})$ in any Riemann sheet $S_{l}^\pm$.
\end{theo}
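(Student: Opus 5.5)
The proof will follow that of Theorem~\ref{theo3-A}: I count how many poles $z_n^\pm$ of \Eq{hsim} can lie in a single half-plane sheet $S_l^\pm$, and show that for $0<\g<3$ this number is at most one. A single pole cannot form a conjugate pair, since a non-real pole and its conjugate are distinct points. This gives the claim immediately.

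The first step is to translate the membership condition \Eq{inter1}, $2l\mp\frac12<\vp_n^\pm/\pi<2l\mp\frac12+1$, into an interval condition on the integer $n$, using the HS phases \Eq{c2pha1}, $\vp_n^\pm/\pi=(2n+\de_\mp)/(\g-1)-\de_\mp$.

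For $S_l^+$ ($\de_-=0$), this reads
\[
2l-\tfrac12<\tfrac{2n}{\g-1}<2l+\tfrac12 .
\]
For $S_l^-$ ($\de_+=1$), it reads
\[
2l+\tfrac32<\tfrac{2n+1}{\g-1}<2l+\tfrac52 .
\]
Multiplying through by $(\g-1)/2$ gives an open interval for $n$ (or for $n+\frac12$). Its orientation flips when $\g<1$. In either case its length is $|\g-1|/2$, centred at $(\g-1)l$ or $(\g-1)(l+1)-\frac12$ respectively, exactly as in \Eq{condia}--\Eq{condib} with $\om$ replaced by $\g-1$.

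As in \Eq{flocei}, an open interval of length strictly less than $1$ contains at most one integer. So $N_l^\pm\le 1$ whenever $|\g-1|<2$, that is, for $-1<\g<3$. Intersecting with $\g>0$ gives the range \Eq{gulim-B1}.

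There is one subtlety. When $N_l^\pm=1$, the single pole could be real ($z=\pm1$, as in Theorems~\ref{theo1-B1}--\ref{theo2-B1}) rather than complex. That case is harmless here, because a real pole is its own conjugate and the theorem concerns only pairs of complex poles. The case $\g=1$ needs a separate remark: there the denominator $-z+z$ degenerates and there are no poles besides $z=0$. The integer case $\g=2$ is local, with the single pole $z=1$.

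The step I expect to need the most care is the bookkeeping for $S_l^-$ and the sign flip for $0<\g<1$. In the second case I have to check that the interval endpoints are still strict, so that a pole cannot sit exactly on the boundary $\Re\,z=0$. Such a pole would belong to neither half-plane and would not affect the count anyway. I would state this explicitly: $\vp_n^\pm=\pm\pi/2$ mod $2\pi$ is excluded from $S_l^\pm$ by \Eq{Smpm}.

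Finally, I would note what the interval counting gives beyond the theorem. For $3<\g<5$ the length lies between $1$ and $2$, so two poles, $n$ and $n_*$ related by \Eq{constrong1}, can fit in one sheet. This mirrors the $2<\om<4$ remark in Theorem~\ref{theo3-A} and shows the bound $\g<3$ is the natural threshold of the method.
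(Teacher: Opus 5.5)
Your proposal is correct and is essentially the paper's proof: you translate \Eq{inter1} with the HS phases \Eq{c2pha1} into open intervals for $n$ of length $|\gamma-1|/2$, which contain at most one integer for $0<\gamma<3$, so no half-sheet $S_l^\pm$ can host two poles and hence no conjugate pair. One small correction to a side remark: at $\gamma=1$ with $c_2=1$ the HS denominator $-z+(z^2)^{1/2}$ vanishes identically on the $\Re\,z>0$ half-plane, so that case is degenerate rather than pole-free; integer $\gamma$ is excluded from the theorem anyway, so this does not affect the argument.
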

\begin{proof}
From \Eq{c2pha1} and \Eq{inter1}, $2l-1/2<2n/(\g-1)<2l+1/2$ and $2l+1/2<(2n+2-\g)/(\g-1)<2l+3/2$, hence
\ba
\hspace{-.9cm}&& S_l^+:\quad \left\{\begin{matrix} 0<\g<1:\quad \hphantom{-}\frac{\g-1}{4}+(\g-1)l<n<-\frac{\g-1}{4}+(\g-1)l\\
\hspace{.4cm}\g>1:\quad  -\frac{\g-1}{4}+(\g-1)l<n<\frac{\g-1}{4}+(\g-1)l\end{matrix}\right.\,,\label{condia-B1}\\
\hspace{-.9cm}&& S_l^-:\quad \left\{\begin{matrix} 0<\g<1:\quad \hphantom{-}\frac{\g-1}{4}-\frac12+(\g-1)(l+1)<n<-\frac{\g-1}{4}-\frac12+(\g-1)(l+1)\\
\hspace{.4cm}\g>1:\quad -\frac{\g-1}{4}-\frac12+(\g-1)(l+1)<n<\frac{\g-1}{4}-\frac12+(\g-1)(l+1)
\end{matrix}\right..\label{condib-B1}
\ea
In all the cases \Eq{condia-B1} and \Eq{condib-B1}, the length of the interval is $b-a=|\g-1|/2$, implying that $N_l^\pm=0,1$ for $0<\g<3$, $N_l^\pm=1,2$ for $3<\g<5$, $N_l^\pm=2,3$ for $5<\g<7$, and so on.
\end{proof}
This theorem is insensitive to the eventual presence of unpaired complex poles. To fill this gap, we now ask which sheets $S_l^+$ and $S_l^-$ are empty of real and complex poles. The analogue of Theorem~\ref{theo6-A} is
\begin{theo}[{\bf HS: empty sheets ---real $\bm{0<\g<3}$}]\label{theo6-B1}
There always exists a completely empty Riemann sheet $S_{l,l'}=S_l^+\cup S_{l'^-}$ for any non-integer $\g$ in the interval \Eq{gulim-B1},
except when
\be\label{exep}
2<\g=\frac{p}{q}=3-\frac{1}{q}\,,\qquad p=3q-1\,.
\ee
For the values \Eq{exep}, there always is an unpaired complex pole in at least one half of the sheet $S_{l,l'}$, except on the sheet $S_0^+\cup S_{q-1}^-$, which only has the $z=1$ pole \Eq{pole+}.
\end{theo}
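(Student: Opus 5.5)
The plan is to follow the proof of Theorem~\ref{theo6-A}, replacing the HP intervals \Eq{condia}--\Eq{condib} by the HS intervals \Eq{condia-B1}--\Eq{condib-B1}. Set $a\coloneqq\g-1$.

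\textbf{Case $0<\g<1$.} Here $a<0$, and in both \Eq{condia-B1} and \Eq{condib-B1} the lower bound exceeds the upper bound. So $N_l^\pm=0$ for every $l$ and the statement is trivial.

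\textbf{Reduction to distance conditions for $1<\g<3$.} Each interval is open with half-length $a/4<1/2$. It is therefore empty exactly when its centre lies at distance at least $a/4$ from every integer. With ${\rm dist}(\cdot)$ defined as in the proof of Theorem~\ref{theo6-A}, this gives
\be
S_l^+:\ \ {\rm dist}(a l)\geq \frac{a}{4}\,,\qquad S_{l'}^-:\ \ {\rm dist}\!\left[a(l'+1)-\frac12\right]\geq\frac{a}{4}\,.
\ee
The centre of the $S_l^+$ interval is $al$, which is an integer when $l=0$; the corresponding pole is the real pole $z=1$ of Theorem~\ref{theo1-B1}. Hence $S_0^+$ is never empty, and in the $+$ half we must look for $l\neq0$.

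\textbf{Irrational $\g$.} The sequences $\{al\}$ and $\{a(l'+1)-1/2\}$ are equidistributed modulo 1 by Weyl's theorem. So there are infinitely many $l\neq0$ and $l'$ whose centres lie arbitrarily close to a half-integer. Since $a/4<1/2$ throughout $1<\g<3$, both inequalities then hold. Gluing such halves gives an empty sheet $S_{l,l'}$, exactly as in Theorem~\ref{theo6-A}.

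\textbf{Rational $\g=p/q$.} Write $a=(p-q)/q$ with $\gcd(p-q,q)=1$. As $l$ runs over $\mathbb{Z}$, $al$ runs over all residues $m/q$ modulo 1. The quantity ${\rm dist}(al)$ is largest, equal to $\lfloor q/2\rfloor/q$, at $m\approx q/2$, and this residue is reached with $l\neq0$ whenever $q\geq2$. Likewise ${\rm dist}[a(l'+1)-1/2]$ reaches $1/2$ at $l'+1\equiv0 \pmod q$, i.e.\ $l'=q-1$. So in the $-$ half the sheet $S_{q-1}^-$ is always empty. For the $+$ half, emptiness requires
\be
\frac{\lfloor q/2\rfloor}{q}\geq\frac{p-q}{4q}\,,\qquad\text{i.e.}\qquad p-q\leq 4\lfloor q/2\rfloor\,.
\ee
This holds automatically for $1<\g<2$. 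For $2<\g<3$ it has to be checked case by case, using $p-q<2q$ and the parity of $q$.

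\textbf{The exceptional family and the main obstacle.} The hard part is isolating the values where this inequality fails. Substituting $p=3q-1$, i.e.\ $a=2-1/q$ and $a/4=1/2-1/(4q)$, I expect failure when $q$ is odd: then the best distance available with $l\neq0$ is $1/2-1/(2q)<a/4$, so every $S_{l\neq0}^+$ contains a complex pole. That pole is unpaired by Theorem~\ref{theo3-B1}. The only sheet left is therefore $S_0^+\cup S_{q-1}^-$, whose $+$ half carries only $z=1$.

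Two points need care here. First, the boundary (equality) cases of the open intervals must be handled correctly. Second, the even-$q$ members of $p=3q-1$ satisfy the bound with equality through ${\rm dist}=1/2$, so the claimed exception for all $q$ is doubtful. Concretely, I would check $q=2$ ($\g=5/2$, $l=1$): the centre is $3/2$ with half-length $3/8$, so the interval $]9/8,15/8[$ contains no integer. This suggests $S_1^+$ is actually empty, which would restrict the exception \Eq{exep} to odd $q$. Finally, I would confirm the pole count in the remaining sheets by direct substitution into \Eq{c2pha1}.
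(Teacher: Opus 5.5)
Your route is the paper's own: you turn the centres and half-lengths of the intervals \Eq{condia-B1}--\Eq{condib-B1} into distance conditions, use Weyl equidistribution for irrational $\g$, and use the maximal residue distance $\lfloor q/2\rfloor/q$ for rational $\g$. Your doubt about even $q$ is justified and matches what the paper's proof actually derives, namely $p\le 3q$ for even $q$ and $p\le 3q-2$ for odd $q$. So \Eq{exep} is an exception only for odd $q$. Table~\ref{tab1} confirms this, since it lists empty sheets for $\g=5/2$ and $\g=11/4$, both of the form $3-1/q$. So do the quoted examples $8/3,14/5,20/7$, and the explicit ``$q$ odd'' in Theorem~\ref{theo8-B3}. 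Two small corrections to your wording. First, for even $q$ the bound holds strictly, since $a/4=\tfrac12-\tfrac1{4q}<\tfrac12$, not with equality. Second, equality ${\rm dist}=a/4$ does occur elsewhere, e.g.\ for $p=3q-2$ with $q$ odd ($\g=7/3$): there the best $S_l^+$ has a pole exactly on $\Re\,z=0$, and it counts as empty only because the half-planes \Eq{Smpm} are open, which is what the paper's $\geq$ encodes.

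The step that fails is the case $0<\g<1$. There the relevant intervals are the first lines of \Eq{condia-B1} and \Eq{condib-B1}, not the $\g>1$ lines with $a<0$ inserted. They have positive length $|a|/2$ and centres $al$ and $a(l'+1)-\tfrac12$, so they are not empty in general. For instance, $l=0$ always gives $n=0$, i.e.\ the pole $z=1$ in $S_0^+$, as you yourself note later. For $\g=1/2$, $l'=0$ gives $n=-1$ with $\vp_{-1}^-=\pi$, which is the tachyon in $S_0^-$ listed in Theorem~\ref{theo2-B1}. So ``$N_l^\pm=0$ for every $l$'' is false; the paper's proof makes the same slip by writing the half-length as $(\g-1)/4$ instead of $|\g-1|/4$. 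The conclusion survives if you rerun your $1<\g<3$ argument with half-length $|a|/4<\tfrac14$. The choice $l'=q-1$ makes the $-$ centre a half-integer. An $l\not\equiv0 \pmod q$ with ${\rm dist}(al)=\lfloor q/2\rfloor/q\geq\tfrac13$ (or Weyl, for irrational $\g$) empties the $+$ half. Finally, the claim that $S_0^+\cup S_{q-1}^-$ is the only admissible sheet in the exceptional case also needs the $-$ half. For $p=3q-1$ with $q$ odd one has $a(l'+1)\equiv-(l'+1)/q \pmod 1$. Hence every $l'\not\equiv q-1$ gives ${\rm dist}[a(l'+1)-\tfrac12]\le\tfrac12-\tfrac1q<\tfrac a4$. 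The resulting pole is complex, because that centre is never an integer when $q$ is odd.
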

\begin{proof}
Consider first $S_l^+$. The second interval in \Eq{condia-B1} is centered at $(\g-1)l$ with half-length $(\g-1)/4<1/2$, which implies that there is no integer in it ($N_l^+=0$) if, and only if, the distance between $(\g-1)l$ and the closest integer $n$ is greater than or equal to $(\g-1)/4$:
\be\label{distcon+B1}
S_l^+:\qquad 
{\rm dist}(\g l)\geq \frac{\g-1}{4}\,,
\ee
where ${\rm dist}(x)\coloneqq \min(\{x\},1-\{x\})$, $\{x\}=x-\lfloor x\rfloor$ is the fractional part of $x$ and we used the property ${\rm dist}(x-k)={\rm dist}(x)$ for any integer $k$. Similarly, for $S_{l'}^-$ the second interval in \Eq{condib-B1} is centered at $(\g-1)(l'+1)-1/2$ with half-length $(\g-1)/4$, which implies that $N_{l'}^-=0$ if, and only if,
\be\label{distcon-B1}
S_{l'}^-:\qquad 
{\rm dist}\left[\g(l'+1)-\frac12\right]\geq \frac{\g-1}{4}\,
\ee
For $\g>3$, one has $N_l^\pm\geq 1$ because $b-a>1$ and, therefore, the interval contains at least one non-trivial integer $n$. For $0<\g<1$, the right-hand side of \Eq{distcon+B1} and \Eq{distcon-B1} is negative and the inequalities are trivial. For $1<\g<3$ ($0<b-a<1$), there are infinitely many $l$ such that $N_l^\pm=0$. The condition $1<\g<3$ ensures the above intervals are short enough, while the necessary and sufficient conditions \Eq{distcon+B1} and \Eq{distcon-B1} ensure such intervals are positioned so as not to capture any integer $n$. Since we can arbitrarily glue together two different half-planes $S_l^+$ and $S_{l'}^-$ into a Riemann sheet $S_{l,l'}=S_l^+\cup S_{l'^-}$, we can always find two values $(l,l')$ such that $S_{l,l'}$ is empty of poles for any given $1<\g<3$. Indeed, the sequences of the rational parts $\{\g l\}$ and $\{\g (l'+1)\}$ are dense in $[0,1]$ when $\g$ is irrational, so that values arbitrarily close to any fractional part, or arbitrarily far from integers, occur such that \Eq{distcon+B1} and \Eq{distcon-B1} hold. In the case of rational $\g$, take $S_l^+$ and note that the fractional part $\{\g l\}=j/q$, where $j=0,1,\dots,q-1$ runs on the main sequence \Eq{gooseq'1}. Therefore, ${\rm dist}(\g l)=\min(j/q,1-j/q)$. The largest possible distance ${\rm dist}_{\rm max}$ occurs at $j=\lfloor q/2\rfloor$, so that ${\rm dist}_{\rm max}=1/2$ if $q$ is even and ${\rm dist}_{\rm max}=(q-1)/(2q)$ if $q$ is odd. Applying \Eq{gooseq'1}, we get ${\rm dist}_{\rm max}\geq(p-q)/(4q)$. Thus, we have $p\leq 3q$ for $q$ even and $p\leq3q-2$ for $q$ odd. The former inequality is automatically satisfied inside the interval $q<p<3q$, while the latter fails when $p=3q-1$ (in both cases $p=3q$ is excluded since $\g$ is non-integer). These are the exceptional values \Eq{exep}. The proof on $S_l^-$ gives the same result but this time $q$ is odd. Replacing $p=3q-1$ in \Eq{c2pha1} we get $\vp_n^+=2nq\pi/(2q-1)$ and $\vp_n^-=[(2n-1)q+1]\pi/(2q-1)$, which are not integer multiples of $\pi$. 
\end{proof}
Examples of empty sheets are given in table~\ref{tab1}. Examples of values of $\g$ such that we cannot find any empty sheet are $\g=8/3,14/5,20/7$.
\begin{table}[h!]
	\begin{center}
		\begin{tabular}{ccccc}
$\bm{\g}$ & $\frac{5}{3}$ & $\frac{11}{5}$ & $\frac{5}{2}$ & $\frac{11}{4}$ \\\hline
$\bm{l}$  & 1 & 2 & 1 & 2 \\
$\bm{l'}$ & 0 & 0 & 1 & 3
		\end{tabular}
	\end{center}
\caption{Examples of empty sheets $S_{l,l'}=S_l^+\cup S_{l'^-}$ for some rational values of $1<\g<3$.\label{tab1}}
\end{table}	

The next theorems cover the range $\g>3$.
\begin{theo}[{\bf HS: complex-conjugate poles ---irrational $\bm{\g}>3$}]\label{theo4-B1}
For $\g>3$ irrational, the only Riemann sheets that can host complex conjugate pairs $(z_n,z_{n_*})$ are $S_0^+$ and $S_{-1}^-$.
\end{theo}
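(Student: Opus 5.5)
The plan is to mirror the proof of Theorem~\ref{theo4-A}, replacing the HP phases \Eq{c2pha} with the HS phases \Eq{c2pha1}. I start from the general conjugation constraint \Eq{constrong1}, $\vp_n^\pm=-\vp_{n_*}^\pm+2\pi k$, and its consequence \Eq{mmst}, $l+l_*=k-\de_\mp$. Both were derived in appendix~\ref{appB} using only the sheet boundaries \Eq{Smpm}, which are the same for HS. The new input is the HS analogue of \Eq{constrong2}, obtained by substituting \Eq{c2pha1}.

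\textbf{Upper half-planes.} On $S_l^+$ the HS phase is $\vp_n^+=2n\pi/(\g-1)$. Inserting this into \Eq{constrong1} gives $2(n+n_*)/(\g-1)=2k$, that is,
\be
n_*=-n+k(\g-1)\,.
\ee
Since $\g$ is irrational, $k(\g-1)$ is an integer only when $k=0$. Then $n_*=-n$, and \Eq{mmst} with the $-$ sign gives $l_*=-l$. The conjugate partner lies in the same half-sheet only if $l=0$, so only $S_0^+$ qualifies.

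\textbf{Lower half-planes.} On $S_l^-$ the HS phase is $\vp_n^-=(2n+1)\pi/(\g-1)-\pi$. Inserting this into \Eq{constrong1} gives $(2n+2n_*+2)/(\g-1)=2k+2$, that is,
\be
n_*=-n-1+(k+1)(\g-1)\,.
\ee
Irrationality now forces $k=-1$, so $n_*=-n-1$. Then \Eq{mmst} gives $l+l_*=k-1=-2$, hence $l_*=-l-2$. The pair lies in one half-sheet only if $l=-1$, so only $S_{-1}^-$ qualifies.

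\textbf{Conclusion.} Combining the two cases shows that $S_0^+$ and $S_{-1}^-$ are the only half-sheets that can host a full conjugate pair. The hypothesis $\g>3$ plays no role in the exclusion argument. It only enters through Theorem~\ref{theo3-B1}, which guarantees that for $\g>3$ these candidate sheets can actually contain two or more poles, so the statement is not empty. The theorem is phrased as ``can host'', so existence does not need to be proved.

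\textbf{Expected obstacle.} The delicate point is the $\de_\mp$ bookkeeping in \Eq{c2pha1}. I need to check that the $-\pi$ shift and the $\de_\mp$ inside the numerator combine to give exactly $2(k+1)$, and that \Eq{mmst} still applies with the HS convention for $\vp_n^-$. If those signs come out differently, the surviving sheet in the lower half-plane would shift, for instance to $S_0^-$. I would therefore verify the result numerically at a sample irrational value such as $\g=\pi$, locating the pole $z_0^-$ and its conjugate explicitly.
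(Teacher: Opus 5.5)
Your proposal is correct and matches the paper's proof: substituting \Eq{c2pha1} into \Eq{constrong1} gives $n_*=-n-\de_\mp+(\g-1)(k+\de_\mp)$ (your two half-plane formulas in one line), irrationality forces $k=-\de_\mp$, and \Eq{mmst} then sends the partner of a pole in $S_l^+$ to $S_{-l}^+$ and of a pole in $S_l^-$ to $S_{-l-2}^-$, exactly as in Theorem~\ref{theo4-A}. Your $\de_\mp$ bookkeeping is right ($\de_-=0$, $\de_+=1$), so the shift to $S_0^-$ you worried about does not occur.
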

\begin{proof} The proof is identical to the one in Theorem~\ref{theo4-A} except that here, from \Eq{c2pha1} and \Eq{constrong1},
\be\label{constrong2-B1}
n_*=-n-\de_\mp+(\g-1)(k+\de_\mp)\,,
\ee
which is an integer only if
\be
S_l^+:\qquad \!\! k=0\,,\qquad\,\,\,\, n_*=-n\,,\label{stro1-B1}
\ee
and \Eq{stro3b} hold. The rest follows as in Theorem~\ref{theo4-A}.
\end{proof}

\begin{theo}[{\bf HS: complex-conjugate poles ---rational $\bm{\g=p/q}>3$}]\label{theo5-B1}
Let $\g=p/q>3$ be irreducible. The only Riemann sheets in the sequence \Eq{gooseq'1} that can host a complex conjugate pair $(z_n,z_{n_*})$ are $S_0^+$ for any $q$ and $S_{{q}/{2}}^+$ and $S_{{q}/{2}-1}^-$ if $q$ is even and $p$ is odd. In particular, to find pairs in $S_0^-$, one must have $q=2$ and $p\geq 7$ odd.
\end{theo}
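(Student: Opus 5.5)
The plan is to mirror the proof of Theorem~\ref{theo5-A}, with the HS phases \Eq{c2pha1} in place of \Eq{c2pha}. In the HS case the relevant exponent is $\g-1=(p-q)/q$. This fraction is irreducible because $\gcd(p-q,q)=\gcd(p,q)=1$. So $q$ plays the role of $q'$ and $p-q$ the role of $p'$.

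First, combine \Eq{c2pha1} with the conjugation relation \Eq{constrong1} to get \Eq{constrong2-B1}:
\[
n_*=-n-\de_\mp+\frac{p-q}{q}(k+\de_\mp).
\]
Since $n_*$ must be an integer, $q$ must divide $k+\de_\mp$. Restricting to the main sequence \Eq{gooseq'1} leaves only a few admissible values of $k$:
\begin{itemize}
\item on $S_l^+$, $k=0$ (then $n_*=-n$) or $k=q$ (then $n_*=-n+p-q$);
\item on $S_l^-$, $k=-1$ (then $n_*=-n-1$) or $k=q-1$ (then $n_*=-n-1+p-q$).
\end{itemize}

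Next, apply \Eq{mmst}, which says $l+l_*=k-\de_\mp$. A conjugate pair lies on a single sheet exactly when $l_*=l$.
\begin{itemize}
\item On $S^+$ this requires $l=k/2$. This gives $S_0^+$ from $k=0$ for every $q$, and $S_{q/2}^+$ from $k=q$, which needs $q$ even.
\item On $S^-$ it requires $l=(k-1)/2$. The value $k=-1$ gives $l=-1$, which lies outside \Eq{gooseq'1}. The value $k=q-1$ gives $l=q/2-1$, which again needs $q$ even.
\end{itemize}
Irreducibility of $p/q$ then forces $p$ to be odd whenever $q$ is even. This proves the first part of the statement.

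For the claim about $S_0^-$, set $q/2-1=0$, so $q=2$. Then $\g=p/2>3$ with $p$ odd, which means $p\geq7$.

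The main obstacle is the edge cases, not the algebra. I would check two points directly from \Eq{c2pha1}:
\begin{itemize}
\item The pair found must be genuinely complex and not the real poles $z=\pm1$. Theorem~\ref{theo2-B1} places $z=1$ in $S_0^+$, so the $n_*=-n$ case with $n=0$ has to be excluded as a self-conjugate real pole.
\item The sheet indices $l=q/2$ and $l=q/2-1$ must actually fall within $0\le l\le q-1$. This holds for $q\geq2$.
\end{itemize}
I would also note that the theorem only gives necessary conditions ("can host"). Existence of a pair on these sheets is not claimed, so no counting argument like the one in Theorem~\ref{theo3-B1} is needed.
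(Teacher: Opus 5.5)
You follow the paper's own proof essentially line by line, and you inherit its one genuine error: the list of admissible $k$ on the $S^-$ side is wrong. If $z_n^-$ and $z_{n_*}^-$ both lie in the main sequence \Eq{gooseq'1}, i.e.\ $0\le l,l_*\le q-1$, then \Eq{mmst} forces $k=l+l_*+1\in\{1,\dots,2q-1\}$. The divisibility condition $q\mid k+1$ from \Eq{constrong2-B1} then leaves $k=q-1$ \emph{and} $k=2q-1$. The value $k=-1$ is not in this range at all. (On the $S^+$ side, $k\in\{0,\dots,2q-2\}$ and your $k\in\{0,q\}$ is correct.) The case $k=2q-1$, with $n_*=-n-1+2(p-q)$, gives $l=l_*=q-1$. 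Equivalently, the sheet $S_{-1}^-$ you discard as ``outside the sequence'' is not outside. The function $(-z)^{(p-q)/q}$ is $2\pi q$-periodic in the phase, so $S_l^-$ and $S_{l+q}^-$ are the same sheet. The pair $(z_0^-,z_{-1}^-)$ on $S_{-1}^-$ is therefore the pair $(z_{p-q}^-,z_{p-q-1}^-)$ on $S_{q-1}^-$.

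This omission is not harmless, and no edge-case check can rescue the ``only'' claim. For every $\g>3$, the phases $\vp_0^-=\pi/(\g-1)-\pi$ and $\vp_{-1}^-=-\pi/(\g-1)-\pi$ both lie in $]-3\pi/2,-\pi/2[$. So $S_{-1}^-\equiv S_{q-1}^-$ always hosts a genuine conjugate pair. Two concrete cases:
for $\g=7/2$, the poles $\vp_4^-=13\pi/5$ and $\vp_5^-=17\pi/5$ (i.e.\ $z=\rme^{\pm3\rmi\pi/5}$) both lie in $S_1^-$;
for $\g=10/3$, the poles $\vp_6^-=32\pi/7$ and $\vp_7^-=38\pi/7$ (i.e.\ $z=\rme^{\pm4\rmi\pi/7}$) both lie in $S_2^-$.
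Neither sheet is on the list in the statement, and in the second example $q$ is odd. Once the $k$-range is fixed, your method actually proves that the candidate sheets are $S_0^+$ and $S_{q-1}^-$ for every $q$, plus $S_{q/2}^+$ and $S_{q/2-1}^-$ when $q$ is even. This is the rational counterpart of Theorem~\ref{theo4-B1}, where $S_{-1}^-$ does appear. The final sentence about $S_0^-$ survives, since $S_{q-1}^-=S_0^-$ would need $q=1$, i.e.\ integer $\g$.
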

\begin{proof}
For rational $\g$, \Eq{constrong2-B1} is an integer if, and only if, $k+\de_\mp$ is an integer multiple of $q$. This fixes all admissible values for $k$ in the sequence \Eq{gooseq'1}, through \Eq{stro1-B1} and
\bs\ba
&&S_l^+:\qquad k=q\,,\qquad\,\,\,\, n_*=-n+p-q\,,\label{stro2-B1}\\
&& S_l^-:\qquad \!\!  k=-1\,,\qquad n_*=-n-1\,,\label{stro3-B1}\\
&& \hphantom{S_l^-:}\qquad \!\!  k=q-1\,,\quad n_*=-n-1+p-q\,.\label{stro4-B1}
\ea\es
From \Eq{mmst}, the two complex conjugate poles $(z_n^+,z_{n*}^+)$ lie in the same sheet only when $l_*=l=k/2$, so either in $S_0^+$ (\Eqq{stro1-B1}) or in $S_{q/2}^+$ when $q$ is even (\Eqq{stro2-B1}; hence $p$ is odd). The conjugate pair $(z_n^-,z_{n*}^-)$ is in the same sheet only if $l_*=l=(k-1)/2$, hence either $l=-1$ (\Eqq{stro3-B1}, i.e., outside \Eq{gooseq'1}) or $l=q/2-1$  (\Eqq{stro4-B1}); the latter case requires $q$ to be even (hence $p$ odd). Thus, these pairs can be found on the sheet $S_{q/2-1}^-$ and, in particular, in $S_0^-$ only if $q=2$ ($k=1$), in which case $n_*=-n-3+p$ and $p\geq 7$ odd (for $p\geq 5$, $\g<3$ and there are no pairs).
\end{proof}
In particular, from \Eq{nogam}, the $q=2$ case not only has a symmetric distribution of the poles in $S^-_0$ around the real axis but also the $z=-1$ real pole. Therefore, this case should be excluded on physical grounds. 

The above theorems do not specify the distribution of unpaired complex poles for $\g>3$. 
In general, there are two possible configurations when $\g$ is rational (figure~\ref{fig5}). (a) Complex poles are symmetric with respect to the real axis, so that they are all organized into complex-conjugate pairs plus the tachyonic real pole.\footnote{From Theorems~\ref{theo2-B1} and \ref{theo5-B1}, for rational $\g$ it follows that whenever there are pairs in $S_{l}^-$ (i.e., for $l=q/2-1$ with $q$ even), there is also the $z=-1$ pole. Indeed, the equality $q/2-1=l=(2n+1)q/[2(p-q)]-1$ always holds.} (b) Complex poles are distributed asymmetrically with respect to the real axis and there are neither conjugate pairs nor the tachyon. The counting of the poles on $S_l^-$ is given by \Eq{condib-B1} and \Eq{flocei}.
\begin{figure}[ht]
	\bc
	\includegraphics[width=7.5cm]{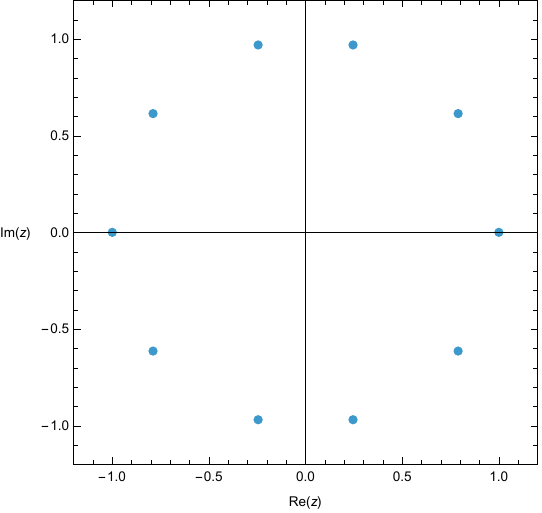}\includegraphics[width=7.5cm]{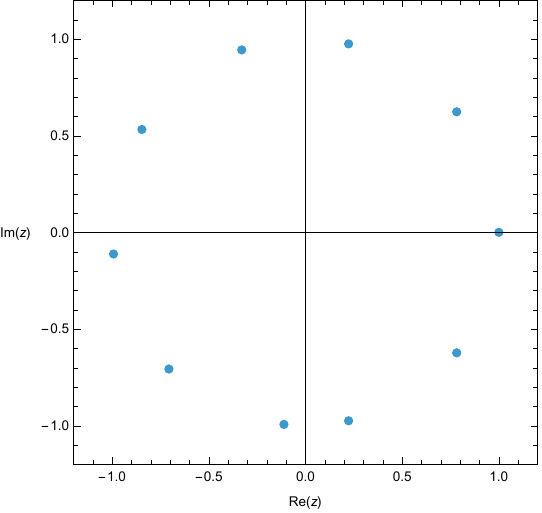}
	\ec
	\caption{\label{fig5} Examples of pole distribution in $S_0=S_0^+\cup S_0^-$ symmetric in $S_0^-$ (left, $\g=21/2$) and asymmetric (right, $\g=31/3$) for the theory with HS operator.}
\end{figure}


\subsection{HA operator}\label{appC2}

This section is devoted to the pole distribution of the Green's function \Eq{hasy} for odd $\Upsilon$. (The HA theory is identical to the HS one when $\Upsilon$ is even.)

\subsubsection{Real poles}\label{appC2a}

\begin{theo}[{\bf HA: real poles ---irrational $\bm{\g>0}$}]\label{theo1-B2}
For $\g>0$ irrational and $\Upsilon$ odd, there are no real poles anywhere on the Riemann surface.
\end{theo}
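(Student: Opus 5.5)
The plan is to follow the proof of Theorem~\ref{theo1-B1} and work directly with the HA phases \Eq{c2pha2}. A real pole $z=\pm1$ exists exactly when some phase $\vp_n^\pm$ is an integer multiple of $\pi$. For $\Upsilon$ odd the factor $(\Upsilon-1)(\de_\mp-1)$ is even, so the phases simplify to
\be
\vp_n^\pm=\frac{2n+1+2j_\pm}{\g-1}\pi-\de_\mp\pi\,,\qquad j_\pm\coloneqq\frac{(\Upsilon-1)(\de_\mp-1)}{2}\in\mathbb{Z}\,.
\ee
The numerator $2n+1+2j_\pm$ is therefore always an odd integer. The first step is to record this parity observation.

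The second step is to impose reality. The term $\de_\mp\pi$ is already an integer multiple of $\pi$, so a real pole requires
\be
\frac{2n+1+2j_\pm}{\g-1}=M\in\mathbb{Z}\,.
\ee
Take $\g$ irrational. Then $\g-1$ is irrational, and an odd integer (which is non-zero) divided by an irrational number is never an integer, nor even rational. Hence no $n\in\mathbb{Z}$ gives a phase in $\pi\mathbb{Z}$. This holds on every half-plane $S_l^\pm$ and every sheet, since the sheet index only restricts which $n$ land where (through \Eq{inter1}) and does not affect whether the phase can be real. The case $M=0$ is excluded automatically because the numerator never vanishes.

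The third step is to compare with HS, which shows where the hypothesis on $\Upsilon$ enters. In HS the $+$ branch has numerator $2n$, so $n=0$ gives the pole $z=1$ in $S_0^+$. In HA with odd $\Upsilon$, the sign $(-1)^\Upsilon=-1$ in \Eq{hasy2} turns the dispersion relation into $(\pm z)^{\g-1}=-1$ rather than $+1$. The value $z=\pm1$ on the principal branch would give $(\pm z)^{\g-1}=1\neq-1$. On other branches $z=\pm1$ appears with phase $2\pi m$, and then $\rme^{2\pi\rmi m(\g-1)}=-1$ requires $m(\g-1)\in\frac12+\mathbb{Z}$, which is impossible for irrational $\g$. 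This gives an independent check of the conclusion.

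The main obstacle is purely bookkeeping: one must verify the exact form of \Eq{c2pha2} for both signs, using the convention $\pm1=\rme^{\rmi\pi\de_\mp}$ of section~\ref{secpoles}, so that the parity of the numerator really is odd in both half-planes. I would re-derive \Eq{c2pha2} from \Eq{hasy2}, including the sign $(-1)^\Upsilon$, before invoking the irrationality argument.
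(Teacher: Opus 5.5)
Your proposal is correct and is essentially the paper's own proof. The paper likewise requires the HA phases \Eq{c2pha2} to be integer multiples of $\pi$, writes the conditions as $[2(n+1)-\Upsilon]/(\gamma-1)=2l$ on $S_l^+$ and $[2(n+1)-\gamma]/(\gamma-1)=2l+1$ on $S_l^-$, and simply asserts that these cannot hold for irrational $\gamma$ and odd $\Upsilon$. Your parity remark makes explicit what the paper leaves implicit: the numerators $2n+2-\Upsilon$ and $2n+1$ never vanish, whereas the vanishing of $2n$ is exactly what produces $z=1\in S_0^+$ in HS. Your dispersion-relation check is a harmless extra, but its contrast with HS holds only on $S_l^+$, since by \Eq{c2pha1} HS also has $(-z)^{\gamma-1}=-1$ on $S_l^-$.
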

\begin{proof} Real poles correspond to phases $\vp_n=c\pi$, where $c\in\mathbb{Z}$ is related to $l$ by
\ba
&&S_l^+:\qquad \frac{2(n+1)-\Upsilon}{\g-1} =c=2l\,,\label{Smcond1irr1B2}\\
&&S_l^-:\qquad \frac{2(n+1)-\g}{\g-1} =c=2l+1\,.\label{Smcond1irr2B2}
\ea
When $\g>0$ is irrational and $\Upsilon$ is odd, these phases can never be integer multiples of $\pi$.
\end{proof}

\begin{theo}[{\bf HA: real poles ---rational $\bm{\g=p/q>0}$}]\label{theo2-B2}
When $\g=p/q$ is an irreducible rational and $\Upsilon$ is odd, the Green's function \Eq{hasy} does not have the real pole $z=1$. The tachyonic real pole $z=-1$ is present on all Riemann sheets $S_{l}^-$ such that $l=(2n+1)q/[2(p-q)]-1$. In particular, in the main sequence \Eq{gooseq'1} there is no $z=-1$ pole in $S_{q-1}^-$, while there is a tachyonic pole in $S_0^-$ only if $\g$ takes the values \Eq{nogam}.
\end{theo}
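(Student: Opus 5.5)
The plan is to reuse the real-pole conditions \Eq{Smcond1irr1B2} and \Eq{Smcond1irr2B2} from the proof of Theorem~\ref{theo1-B2}, now with $\g=p/q$ rational. A pole $z_n^\pm=\rme^{\rmi\vp_n^\pm}$ is real exactly when its phase \Eq{c2pha2} is an integer multiple of $\pi$. The value $z=1$ requires $\vp_n^+=2l\pi$ inside the right half-plane $S_l^+$. The value $z=-1$ requires $\vp_n^-=(2l+1)\pi$ inside $S_l^-$, since by \Eq{Smpm} these are the only real directions available in each half.

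\textbf{Step 1 (absence of $z=1$).} For $S_l^+$, condition \Eq{Smcond1irr1B2} reads $2(n+1)-\Upsilon=2l(\g-1)$. I will multiply by $q$ to obtain
\be
q\,[2(n+1)-\Upsilon]=2l(p-q)\,.
\ee
The right-hand side is even. So $q$ must be even, since $\Upsilon$ is odd and $2(n+1)-\Upsilon$ is then odd. But then irreducibility forces $p$ to be odd, so $p-q$ is odd. Comparing $2$-adic valuations then gives $v_2(q)=1+v_2(l)$, because $2(n+1)-\Upsilon$ and $p-q$ are both odd.

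This valuation equation does admit solutions. The argument must therefore be finished differently. The cleanest route is to note that on $S_l^+$ the factor $(-z)^\Upsilon$ in \Eq{hasy} evaluated at $z=1$ is $\rme^{\rmi\Upsilon\pi\cdot(\text{odd})}=-1$, while $(z^2)^{(\g-\Upsilon)/2}=1$ on the principal branch of the right half-plane. The dispersion relation $-z+(-z)^\Upsilon(z^2)^{(\g-\Upsilon)/2}=0$ then becomes $-1-1\neq0$ at $z=1$. Equivalently, in \Eq{hasy2}, $(-1)^\Upsilon(+z)^{\g-1}-1=-2$ at $z=1$ on every sheet where $(+z)^{\g-1}$ is evaluated at phase $2l\pi$ with $l$ matching the sheet.

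I expect this to be the main obstacle: reconciling the sheet labelling of \Eq{c2pha2} with the actual branch value. I will try first to show that the admissible $n$ from \Eq{condia-B1}, with $\g-1\to$ the HA shift, never sits at the centre $2l$ exactly. If that fails, I will fall back on the direct evaluation just described.

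\textbf{Step 2 (the tachyon $z=-1$).} For $S_l^-$, condition \Eq{Smcond1irr2B2} gives $2(n+1)-\g=(2l+1)(\g-1)$. With $\g=p/q$ this solves to $2l+2=(2n+1)q/(p-q)$, that is, $l=(2n+1)q/[2(p-q)]-1$. This is the same formula as in Theorem~\ref{theo2-B1}, as expected, since $\vp_n^-$ coincides for HS and HA (remark after \Eq{c2pha3}). Hence the rest of the statement follows verbatim from the proof of Theorem~\ref{theo2-B1}.

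\textbf{Step 3 (specific sheets).} For $S_{q-1}^-$ one would need $(2n+1)q=2q(p-q)$. This is impossible by parity, since the left-hand side has odd cofactor $2n+1$ times $q$. For $S_0^-$ one needs $(2n+1)q=2(p-q)$, which with irreducibility forces $q=2$ and $p=2n+3$, i.e.\ the values \Eq{nogam}.
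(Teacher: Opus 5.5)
\textbf{The gap is in Step 1, and it cannot be closed, because the claim is false when $q$ is even.} Your $2$-adic computation is right and already shows this. The equation $q[2(n+1)-\Upsilon]=2l(p-q)$ is solvable exactly when $q$ is even and $l$ is an odd multiple of $q/2$.

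Your fallback evaluation does not rescue the claim, because it uses the principal branch on every sheet. On $S_l^+$ the point $z=1$ has phase $2l\pi$, so $(+z)^{\g-1}=\rme^{2\pi\rmi l(\g-1)}$. The bracket in \Eq{hasy2} at $z=1$ therefore equals $-\rme^{2\pi\rmi l\g}-1$. This is $-2$ only when $l\g\in\mathbb{Z}$, and it vanishes precisely when $2l\g$ is an odd integer, which are the very solutions you found.

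Concretely, take $\g=3/2$ ($\Upsilon=1$, $q=2$). Then \Eq{c2pha2} gives $\vp_0^+=2\pi$, a real pole $z=1$ on $S_1^+$, which lies inside the main sequence \Eq{gooseq'1}. Likewise $\g=7/2$ has one on $S_1^+$ and $\g=5/4$ on $S_2^+$.

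The paper's own one-line argument fails in the same case. It claims that $n=l(p-q)/q+\Upsilon/2-1$ is never an integer for odd $\Upsilon$. But for even $q$ and $l=q/2$, $l(p-q)/q=(p-q)/2$ is a half-integer, so $n$ is an integer.

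What can actually be proved is the following. For odd $q$ there is no $z=1$ pole on any sheet. For even $q$, the pole is absent from $S_0^+$ and from every $S_l^+$ except those with $l$ an odd multiple of $q/2$; in the main sequence that is only $S_{q/2}^+$.

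Your Steps 2 and 3 are fine and follow the paper's route: on $S_l^-$ the HA and HS functions coincide, so the tachyon analysis of Theorem~\ref{theo2-B1} carries over. One small fix: for $S_{q-1}^-$, cancel $q$ before the parity argument. The equation $(2n+1)q=2q(p-q)$ becomes $2n+1=2(p-q)$, which is impossible. As written, ``odd cofactor times $q$'' settles nothing when $q$ is even.
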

\begin{proof}
The analysis on $S_l^-$ for the $z=-1$ pole is the same as in the HS case, Theorem~\ref{theo2-B1}. The $z=1$ pole is found on all Riemann sheets determined by \Eq{Smcond1irr2B2}, i.e., when $n=l(p-q)/q+\Upsilon/2-1$ is integer, which is not the case if $\Upsilon$ is odd.
\end{proof}
The conclusion is similar to the HS case but it has more leeway in the admitted sheets when $\Upsilon$ is odd if we want to avoid the $z=\pm 1$ poles.

\subsubsection{Complex poles}\label{appC2b}

\begin{theo}[{\bf HA: complex-conjugate pairs ---real $\bm{0<\g<3}$}]\label{theo3-B2}
When \Eq{gulim-B1} holds, there are no complex-conjugate pairs $(z_n^\pm,z_n^{\pm*})$ in any Riemann sheet $S_{l}^\pm$.
\end{theo}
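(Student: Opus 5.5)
The argument copies the proof of Theorem~\ref{theo3-B1}, the only change being the phases \Eq{c2pha2} in place of \Eq{c2pha1}. For odd $\Upsilon$ the HA phases are
\[
\vp_n^+=\frac{2n+1}{\g-1}\pi\,,\qquad \vp_n^-=\frac{2n+2-\Upsilon}{\g-1}\pi-\pi\,.
\]
For even $\Upsilon$ the HA model coincides with HS and the statement is already Theorem~\ref{theo3-B1}. I would therefore assume $\Upsilon$ odd throughout. In particular $\Upsilon=1$ when $1<\g<2$, and $\Upsilon=0$ when $0<\g<1$, in which case $\cG_\pm$ in \Eq{hasy2} coincide with HS.

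First I would insert these phases into the sheet conditions \Eq{inter1}, $2l\mp\frac12<\vp_n^\pm/\pi<2l\mp\frac12+1$, and solve for $n$. This should give intervals analogous to \Eq{condia-B1} and \Eq{condib-B1}. On $S_l^+$ the interval is centred at $(\g-1)l-\frac12$. On $S_l^-$ it is centred at $(\g-1)(l+1)+\frac{\Upsilon}{2}-1$. In both cases the half-length is $|\g-1|/4$, and the endpoints are ordered according to whether $\g\lessgtr1$.

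The key observation is that the shift coming from $\Upsilon$ only translates the centre of each interval; its length $b-a=|\g-1|/2$ is unchanged. Using the counting formula \Eq{flocei}, an open interval of length less than $1$ contains at most one integer. Hence $N_l^\pm\in\{0,1\}$ whenever $|\g-1|<2$, which is exactly $0<\g<3$ (for $0<\g<1$ the length is even below $1/2$).

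A complex-conjugate pair $(z_n^\pm,z_n^{\pm*})$ lying in the same half-sheet needs two distinct integers $n\neq n_*$ in the same interval. Such pairs are related through \Eq{constrong1} and \Eq{mmst}, and $n_*\neq n$ because the poles are non-real. With at most one admissible integer, no such pair can exist in any $S_l^\pm$, which concludes the proof.

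The only delicate step is the bookkeeping of the $\de_\mp$ and $\Upsilon$ shifts when inverting \Eq{inter1}, in particular getting the orientation right for $\g<1$, where $\g-1<0$ reverses the inequalities. Since only the length of the interval enters the final argument, an error in the centre would not affect the conclusion.
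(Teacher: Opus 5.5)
Your proposal is correct and follows the paper's proof: the $\Upsilon$-dependent shift only translates the intervals for $n$, their length stays $|\g-1|/2<1$, and so, exactly as in Theorem~\ref{theo3-B1}, each half-sheet holds at most one pole. The only cosmetic difference is that, relative to \Eq{c2pha2}, you attached the $\Upsilon$-dependence to $\vp_n^-$ rather than $\vp_n^+$; for odd $\Upsilon$ both numerators run over the odd integers, so this just relabels $n$ by an integer shift and changes nothing.
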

\begin{proof}
The proof is exactly the same as in Theorem~\ref{theo3-B1}, since the length of the intervals
\ba
\hspace{-1.2cm}&& S_l^+:\qquad \left\{\begin{matrix} 0<\g<1:\quad \frac{\g-1}{4}(4l+1)+\frac{\Upsilon}{2}-1<n<\frac{\g-1}{4}(4l-1)+\frac{\Upsilon}{2}-1\\
\hspace{.7cm}\g>1:\quad  \frac{\g-1}{4}(4l-1)+\frac{\Upsilon}{2}-1<n<\frac{\g-1}{4}(4l+1)+\frac{\Upsilon}{2}-1\end{matrix}\right.\,,\label{condia-B2}
\ea
and \Eq{condib-B1} is still $b-a=|\g-1|/2$.
\end{proof}

\begin{theo}[{\bf HA: empty sheets ---real $\bm{0<\g<3}$}]\label{theo6-B2}
There always exists a completely empty Riemann sheet $S_{l,l'}=S_l^+\cup S_{l'^-}$ for any non-integer $0<\g<3$ except when \Eq{exep} holds. For $\g=3-1/q$, there always is an unpaired complex pole in at least one half of the sheet $S_{l,l'}$, except on the sheet $S_0^+\cup S_{q-1}^-$, which only has the $z=1$ pole \Eq{pole+}.
\end{theo}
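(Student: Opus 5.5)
The plan is to reduce Theorem~\ref{theo6-B2} to Theorem~\ref{theo6-B1}, handling separately the sub-ranges of $0<\g<3$ according to the parity of $\Upsilon=\lfloor\g\rfloor$. For $2<\g<3$ we have $\Upsilon=2$, which is even. As noted after \Eq{c2pha3}, the phases \Eq{c2pha2} then coincide with \Eq{c2pha1} in both half-planes, so the HA and HS Green's functions \Eq{hasy2}, \Eq{hsim2} are identical. Theorem~\ref{theo6-B1} then applies verbatim, including the exceptional values $\g=3-1/q$ of \Eq{exep} and the statement about the sheet $S_0^+\cup S_{q-1}^-$. Since every exceptional value \Eq{exep} lies in $]2,3[$, it remains only to show that for $0<\g<1$ ($\Upsilon=0$, even, so again HA$=$HS) and for $1<\g<2$ ($\Upsilon=1$, odd) an empty sheet always exists with no exceptions.

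For $0<\g<1$ the argument is the same as in Theorem~\ref{theo6-B1}: the right-hand sides of \Eq{distcon+B1}--\Eq{distcon-B1} are negative, so every $S_l^\pm$ is empty. For $1<\g<2$, I would use the intervals \Eq{condia-B2} for $S_l^+$ and \Eq{condib-B1} for $S_{l'}^-$ (the $S^-$ phase does not depend on $\Upsilon$). With $\Upsilon=1$, the $S_l^+$ interval is centred at $(\g-1)l-\tfrac12$ with half-length $(\g-1)/4<\tfrac14$. Emptiness of $S_l^+$ is therefore equivalent to
\be
{\rm dist}\!\left(\g l-\tfrac12\right)\geq\frac{\g-1}{4}\,,
\ee
and emptiness of $S_{l'}^-$ is \Eq{distcon-B1}. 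Real poles require no separate treatment: by Theorem~\ref{theo1-B2} and Theorem~\ref{theo2-B2}, $z=1$ is absent for odd $\Upsilon$, and $z=-1$ is an integer in the interval \Eq{condib-B1}, so it is already covered by the counting.

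The existence of $l,l'$ satisfying these conditions is the core step, and I would split it as in Theorem~\ref{theo6-B1}. For irrational $\g$, Weyl equidistribution of $\{\g l\}$ makes $\g l-\tfrac12$ (and $\g(l'+1)-\tfrac12$) come arbitrarily close to an integer, so the distance can be taken near $\tfrac12>(\g-1)/4$. For rational $\g=p/q$ with $q<p<2q$, the fractional parts $\{\g l\}$ run over $j/q$, $j=0,\dots,q-1$. Choosing $j=0$, i.e.\ $l\equiv 0 \pmod q$ (and $l'+1\equiv 0 \pmod q$, i.e.\ $l'=q-1$), gives ${\rm dist}=\tfrac12\geq(\g-1)/4$ in both half-planes. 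So $S_0^+\cup S_{q-1}^-$ is always empty for $1<\g<2$, and no exceptions arise.

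The main obstacle is the bookkeeping of the half-integer shifts: I must verify that the centres really are $\g l-\tfrac12$ and $\g(l'+1)-\tfrac12$ and not integers. If instead a centre sits at an integer, as in the $S^+$ half of HS, then the choice $j=0$ fails, and one must fall back on $j=\lfloor q/2\rfloor$, giving ${\rm dist}_{\max}$ equal to $\tfrac12$ or $(q-1)/(2q)$. For $p<2q$ one checks $(q-1)/(2q)\geq (p-q)/(4q)$, which holds since $p\leq 2q-1\leq 3q-2$. Either way no exception appears in $]1,2[$, which completes the plan.
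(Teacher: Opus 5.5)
Your decomposition matches the paper's: for even $\Upsilon$ HA$=$HS, so Theorem~\ref{theo6-B1} and its exceptions in $]2,3[$ are imported, and for $\Upsilon=1$ only the $S^+$ window moves, to centre $(\g-1)l-\tfrac12$. The step that fails is $0<\g<1$. There you repeat from the proof of Theorem~\ref{theo6-B1} that the right-hand sides of \Eq{distcon+B1}--\Eq{distcon-B1} are negative, so that every $S_l^\pm$ is empty. This is false. For $\g<1$ the relevant windows are the first lines of \Eq{condia-B1}--\Eq{condib-B1}, whose half-length is $|\g-1|/4=(1-\g)/4>0$. The emptiness criteria are therefore ${\rm dist}(\g l)\geq(1-\g)/4$ and ${\rm dist}[\g(l'+1)-\tfrac12]\geq(1-\g)/4$, and the first one fails at $l=0$. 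Indeed the real pole $z=1$ sits in $S_0^+$ for every non-integer $\g>0$ (Theorems~\ref{theo1-B1} and \ref{theo2-B1}); with $\Upsilon=0$, HA is literally HS. The conclusion survives with a short extra argument:
\begin{itemize}
\item $l'=-1$ (i.e.\ $q-1$) puts the $S^-$ centre at a half-integer, which empties $S_{l'}^-$.
\item The points $\g l$, $l=0,1,2,\dots$, advance in steps $\g<1$, while the window $[(1-\g)/4,(3+\g)/4]$ has length $(1+\g)/2>\g$. Hence some $l$ has $\g l$ inside it, which empties $S_l^+$.
\end{itemize}
The paper's own proof defers to Theorem~\ref{theo6-B1} and so has the same slip.

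Your $1<\g<2$ step, by contrast, is correct and sounder than the paper's. The paper uses the relabelling $(l_{\rm HA},l'_{\rm HA})=(l'_{\rm HS}+1,l_{\rm HS}-1)$, as if the two HS conditions were simply interchanged. But by \Eq{c2pha2} the $S^-$ window of HA is the HS one, and for $\g=7/4$ this map sends the empty HS sheet $S_2^+\cup S_0^-$ to $S_1^+\cup S_1^-$, which contains $z=-1$.

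A few smaller points:
\begin{itemize}
\item You can drop Weyl's theorem: $l=0$, $l'=-1$ give ${\rm dist}=\tfrac12$ for every $1<\g<2$, rational or irrational.
\item Your irrational sentence has a slip: it is $\g l$, not $\g l-\tfrac12$, that should approach an integer.
\item The appeal to Theorems~\ref{theo1-B2}--\ref{theo2-B2} is unnecessary, because the integer count already includes real poles. The second of them is also unreliable for even $q$: for $\g=3/2$, $z=1$ is a pole on $S_1^+$.
\item Importing the exceptional clause verbatim inherits an overstatement. The counting in Theorem~\ref{theo6-B1} only excludes empty sheets when $p=3q-1$ with $q$ odd, and table~\ref{tab1} itself lists empty sheets for $\g=5/2$ and $11/4$. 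Both exception clauses of the statement therefore hold only for odd $q$.
\end{itemize}
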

\begin{proof}
The proof is the same as in Theorem~\ref{theo6-B1} under the replacement
\ben
(\g-1)l\to (\g-1)l+\frac{\Upsilon}{2}-1
\een
everywhere, where $\Upsilon=1,2$. The HS case is recovered for $\Upsilon=2$, while for $\Upsilon=1$ the sheets $l$ and $l'+1$ are exchanged, so that $(l_{\rm HA},l_{\rm HA}')=(l_{\rm HS}'+1,l_{\rm HS}-1)$ ${\rm mod}\,q$. Conditions \Eq{distcon+B1} and \Eq{distcon-B1} are therefore interchanged but the exceptional values \Eq{exep} remain the same, since $2<\g<3$ in this case ($\Upsilon=2$). 
\end{proof}
In particular, the examples in table~\ref{tab1} are exactly the same.

\begin{theo}[{\bf HA: complex-conjugate poles ---irrational $\bm{\g>3}$}]\label{theo4-B2}
For $\g>0$ irrational, the only Riemann sheets that can host complex conjugate pairs $(z_n,z_{n_*})$ are $S_0^+$ and $S_{-1}^-$.
\end{theo}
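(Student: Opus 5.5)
The plan is to reproduce the argument of Theorems~\ref{theo4-A} and \ref{theo4-B1}, replacing the HS phases by the HA phases \Eq{c2pha2}. The only structural input is the conjugation relation \Eq{constrong1}: the phases of a conjugate pair satisfy $\vp_n^\pm=-\vp_{n_*}^\pm+2\pi k$. This relation must hold with both $n$ and $n_*$ integers, and the sheet bookkeeping \Eq{mmst} then fixes where the partner pole lies.

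\textbf{Step 1: integrality condition.} Substituting \Eq{c2pha2} into \Eq{constrong1} and solving for $n_*$ gives
\[
n_*=-n-1-(\Upsilon-1)(\de_\mp-1)+(\g-1)(k+\de_\mp)\,.
\]
The first three terms are integers. For irrational $\g$ the last term is an integer only if $k+\de_\mp=0$. This forces $k=0$ on $S_l^+$ and $k=-1$ on $S_l^-$, so that $n_*=-n-1+\Upsilon-1$ on $S_l^+$ and $n_*=-n-1$ on $S_l^-$. I will check these shifts against the explicit HA intervals \Eq{condia-B2}, but only integrality matters here.

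\textbf{Step 2: locating the partner.} Insert $k=0$ and $k=-1$ into \Eq{mmst}. This gives $l_*=-l$ on the $+$ half and $l_*=-l-2$ on the $-$ half. The pair lies in a single half-sheet only if $l=l_*$, that is, $l=0$ for $S^+$ and $l=-1$ for $S^-$. Hence the only candidate sheets are $S_0^+$ and $S_{-1}^-$.

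The main subtlety is checking that \Eq{mmst}, which was derived from the sheet definitions \Eq{Smpm} alone, is independent of the pole formula and therefore transfers unchanged to HA. I expect it does, since it only uses \Eq{inter1}--\Eq{inter2}. I should also confirm that the $\Upsilon$-dependent integer shift cannot rescue other values of $k$. It cannot, because that shift is an integer regardless of the parity of $\Upsilon$. Finally, the hypothesis $\g>0$ in the statement is harmless, since the argument never uses $\g>3$. It only shows which sheets \emph{can} host pairs; whether pairs actually occur there is a separate question and follows from Theorem~\ref{theo3-B2} for $\g<3$.
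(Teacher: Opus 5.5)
Your proposal is correct and follows the paper's proof essentially verbatim. Your expression for $n_*$ is equivalent to \Eq{constrong2-B2}, and irrationality of $\g$ forces $k+\de_\mp=0$, which reproduces \Eq{stro1-B2} and \Eq{stro3b}. Applying \Eq{mmst} exactly as in Theorem~\ref{theo4-A} then singles out $S_0^+$ and $S_{-1}^-$.
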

\begin{proof} The proof is identical to the one in Theorem~\ref{theo4-A} except that here, from \Eq{c2pha2} and \Eq{constrong1},
\be\label{constrong2-B2}
n_* = -n-1+(\Upsilon-1)(1-\de_\mp)+(\g-1)(k+\de_\mp)\,,
\ee
which is an integer only if
\be
S_l^+:\qquad \!\! k=0\,,\qquad\,\,\,\, n_*=-n-2+\Upsilon\,,\label{stro1-B2}
\ee
and \Eq{stro3b} hold. The rest follows as in Theorem~\ref{theo4-A}.
\end{proof}

\begin{theo}[{\bf HA: complex-conjugate poles ---rational $\bm{\g=p/q>3}$}]\label{theo5-B2}
Let $\g=p/q>0$ be irreducible. The only Riemann sheets in the sequence \Eq{gooseq'1} that can host a complex conjugate pair $(z_n,z_{n_*})$ are $S_0^+$ for any $q$ and $S_{{q}/{2}}^+$ and $S_{{q}/{2}-1}^-$ if $q$ is even and $p$ is odd. In particular, to find pairs in $S_0^-$, one must have $q=2$ and $p\neq 3$ odd.
\end{theo}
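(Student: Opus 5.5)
The proof will mirror Theorem~\ref{theo5-B1}, with the only change being the relation between the labels of conjugate poles. I first combine the HA phases \Eq{c2pha2} with the conjugation condition \Eq{constrong1}, which gives the index relation \Eq{constrong2-B2}:
\[
n_*=-n-1+(\Upsilon-1)(1-\de_\mp)+(\g-1)(k+\de_\mp)\,.
\]
With $\g=p/q$ irreducible, the quantity $(\g-1)(k+\de_\mp)=(p-q)(k+\de_\mp)/q$ is an integer if, and only if, $q$ divides $k+\de_\mp$. This holds because $\gcd(p-q,q)=\gcd(p,q)=1$. Restricting to the main sequence \Eq{gooseq'1}, I would list the admissible values of $k$. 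On $S_l^+$ they are $k=0$, giving $n_*=-n-2+\Upsilon$, and $k=q$, giving $n_*=-n-2+\Upsilon+p-q$. On $S_l^-$ they are $k=-1$, giving $n_*=-n-1$, and $k=q-1$, giving $n_*=-n-1+p-q$.

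Next I invoke the sheet-location rule \Eq{mmst}, $l+l_*=k-\de_\mp$. A pair $(z_n^+,z_{n_*}^+)$ lies on a single sheet only if $l=l_*=k/2$. This gives $S_0^+$ from $k=0$, and $S_{q/2}^+$ from $k=q$, which requires $q$ even and hence $p$ odd by irreducibility. Similarly, a pair $(z_n^-,z_{n_*}^-)$ requires $l=l_*=(k-1)/2$. This gives $l=-1$ from $k=-1$, which lies outside \Eq{gooseq'1}, or $l=q/2-1$ from $k=q-1$, which again needs $q$ even and $p$ odd. These cases yield exactly the sheets listed in the statement.

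For the special claim about $S_0^-$, I set $q/2-1=0$, so $q=2$ and $p$ is odd, with $n_*=-n-3+p$. It then remains to exclude $p=3$ and to check that the other odd $p$ are compatible. For $p=3$ we have $\g=3/2$ and $\Upsilon=1$ (odd), so $\g<3$. Theorem~\ref{theo3-B2} then forbids conjugate pairs on any sheet, which excludes $p=3$. For $p=1$ ($\g=1/2$) the same theorem also forbids pairs, so I expect that case to need a remark as well; the statement's wording ``$p\neq3$ odd'' presumably presumes the regime $\g>3$ of the theorem's title, where $p\geq7$. For $p=5$ ($\g=5/2$, $\Upsilon=2$) HA coincides with HS, and Theorem~\ref{theo5-B1} already handles it.

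The main obstacle is this last bookkeeping step: reconciling the parity of $\Upsilon$ with the HS result and making sure the counting from \Eq{condib-B1} and \Eq{flocei} actually places two poles in $S_0^-$ when $p\geq7$. That counting gives $N_0^-\geq 1$ when $b-a=(\g-1)/2>1$. I would verify this by checking that $n$ and $n_*=p-3-n$ both fall in the interval \Eq{condib-B1} with $l=0$. This holds because the interval is symmetric about $(p-3)/2$, and I would confirm that symmetry explicitly.
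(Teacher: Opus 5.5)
Your argument follows the paper's proof step by step: the index relation \Eq{constrong2-B2}, the divisibility $q\mid k+\de_\mp$ from $\gcd(p-q,q)=1$, and then \Eq{mmst}. It also inherits the one step there that fails, namely discarding $l=-1$ (from $k=-1$) as ``outside \Eq{gooseq'1}''. For rational $\g$ the sheet label only matters modulo $q$. From \Eq{c2pha2}, $\vp^\pm_{n+p-q}=\vp^\pm_n+2\pi q$, so $S_l^\pm$ and $S_{l+q}^\pm$ carry exactly the same poles, and $S_{-1}^-$ \emph{is} the sheet $S_{q-1}^-$ of the main sequence. Put differently, suppose you genuinely restrict to $l,l_*\in\{0,\dots,q-1\}$. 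Then $l+l_*=k-1\in[0,2q-2]$ together with $k\equiv-1$ (mod $q$) forces $k\in\{q-1,\,2q-1\}$. So $k=-1$ is not admissible, and $k=2q-1$ is missing from your list; it gives $l=l_*=q-1$ and $n_*=-n-1+2(p-q)$.

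This missing case is actually populated. For every rational $\g>3$, the poles $z^-_{p-q}$ and $z^-_{p-q-1}$ have phases $2\pi q-\pi\pm\pi/(\g-1)$. Both lie in $S_{q-1}^-$, whose range is $\,](2q-3/2)\pi,(2q-1/2)\pi[$, because $\pi/(\g-1)<\pi/2$, and they are complex conjugates. This is the rational copy of the pair $(z_0^-,z_{-1}^-)\in S_{-1}^-$ found in Theorem~\ref{theo4-B2}. Concretely, for $\g=7/2$ one gets $\vp^-_4=13\pi/5$ and $\vp^-_5=17\pi/5$, both in $S_1^-$, and $S_1^-$ is not on the list. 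So the first sentence of the statement is false as written, and no bookkeeping along these lines can prove it. The correct list must add $S_{q-1}^-$ for every $q$. The $S_0^-$ clause survives, since $S_{q-1}^-=S_0^-$ would require $q=1$, i.e.\ integer $\g$.

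Separately, the sufficiency check sketched in your last paragraph would fail. For $q=2$, the $S_0^-$ window from \Eq{condib-B1} is centred on the integer $(p-3)/2$ with half-width $(p-2)/8$. For $p=7$ and $p=9$ it therefore contains only $n=(p-3)/2$, for which $n_*=p-3-n=n$. That is the real pole $z=-1$ of Theorem~\ref{theo2-B2}, not a pair; a genuine pair first appears at $p=11$. The clause is only a necessary condition, so you do not need this step. Excluding $p=3$ (and likewise $p=1,5$) with Theorem~\ref{theo3-B2} already suffices.
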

\begin{proof}
The proof is identical to that of Theorem~\ref{theo4-B1} except that \Eqqs{stro1-B1} and \Eq{stro2-B2} are replaced by, respectively, \Eq{stro1-B2} and
\be
S_l^+\qquad k=q\,,\qquad\,\,\,\, n_*=-n-2+\Upsilon+p-q\,.\label{stro2-B2}
\ee
\end{proof}


\subsection{NH operator}\label{appC3}

This section is devoted to the pole distribution of the Green's function \Eq{nhrep}.

\subsubsection{Real poles}\label{appC3a}

\begin{theo}[{\bf NH: real poles ---irrational $\bm{\g>0}$}]\label{theo1-B3}
For $\g>0$ irrational, there are no real poles $z=\pm1$ in any Riemann sheet.
\end{theo}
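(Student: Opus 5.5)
The plan mirrors the proofs of Theorems~\ref{theo1-B1} and \ref{theo1-B2}: I will characterize the real poles by the condition that their phase is an integer multiple of $\pi$, and then show this is impossible for irrational $\g$.

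The NH dispersion relation $-z+(-z)^\g=0$ with $c_2=1$ gives $(-z)^{\g-1}=1$ for $z\neq0$. Its solutions are $z_n=\rme^{\rmi\vp_n}$ with the phases \Eq{c2pha3}, namely $\vp_n=(2n+1)\pi/(\g-1)+\pi$, $n\in\mathbb{Z}$. A pole is real, $z=\pm1$, if and only if $\vp_n=c\pi$ for some $c\in\mathbb{Z}$. Because the NH sheets \Eq{Smnh} are full planes with phase $\t_l\in[2\pi l,2\pi(l+1)[$, the value $z=1$ corresponds to $c=2l$ and $z=-1$ to $c=2l+1$. Both cases reduce to the single condition
\be
\frac{2n+1}{\g-1}=c-1\in\mathbb{Z}\,.
\ee

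The key step is to show this condition cannot hold for irrational $\g$. If $c-1\neq0$, then $\g-1=(2n+1)/(c-1)$ would be rational, so $\g$ would be rational, which is a contradiction. If $c-1=0$, then $2n+1=0$, which has no integer solution. Hence no $n$ gives a phase that is an integer multiple of $\pi$, and no sheet $S_l$ contains $z=\pm1$.

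The only delicate point is bookkeeping: one must make sure the phase convention in \Eq{c2pha3}, including the extra $+\pi$ coming from $-z=\rme^{\rmi(\vp-\pi)}$, is compatible with the sheet parametrization \Eq{Smnh}. Since an extra $\pi$ only shifts $c$ by an integer, the integrality argument is insensitive to this, and I expect no real obstacle. I will also remark that the argument rules out any real root, not only $\pm1$, since all roots have unit modulus.
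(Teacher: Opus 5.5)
Your proof is correct and is the paper's argument (the phases \Eq{c2pha3} are never integer multiples of $\pi$ for irrational $\g$), with the integrality step written out explicitly. One slip to fix: since $-z+(-z)^\g=(-z)\,[1+(-z)^{\g-1}]$, the pole condition is $(-z)^{\g-1}=-1$, not $+1$. The $-1$ condition is what produces \Eq{c2pha3}, whereas $(-z)^{\g-1}=1$ would admit $z=-1$ for every $\g$.
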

\begin{proof} Trivial, since the phases in \Eq{c2pha3} can never be integer multiples of $\pi$.
\end{proof}

\begin{theo}[{\bf NH: real poles ---rational $\bm{\g=p/q>0}$}]\label{theo2-B3}
When $\g=p/q$ is an irreducible rational, the Green's function \Eq{nhrep} has no pole $z=1$ in any Riemann sheet. The tachyonic pole $z=-1$ is present on all Riemann sheets $S_{l}$ such that $l=(2n+1)q/[2(p-q)]$. In particular, in the main sequence \Eq{gooseq'2} there is no $z=-1$ pole in $S_0$ for any $q$ nor on any other sheet when $q$ is odd.
\end{theo}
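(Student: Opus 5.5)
The approach is to read off real poles directly from the pole phases \Eq{c2pha3} of the NH Green's function, $\vp_n=(2n+1)\pi/(\g-1)+\pi$, together with the sheet parametrization \Eq{Smnh}. On $S_l$ the phase runs over $[2\pi l,2\pi(l+1)[$. A pole on $S_l$ is therefore at $z=1$ iff $\vp_n=2\pi l$, and at $z=-1$ iff $\vp_n=(2l+1)\pi$. Each case becomes a Diophantine condition in $(n,l)$ once we set $\g=p/q$ with $\gcd(p,q)=1$, hence $\gcd(p-q,q)=1$. The proof then reduces to divisibility and parity bookkeeping, as in Theorems~\ref{theo2-B1} and \ref{theo2-B2}.

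\emph{Tachyonic pole.} Setting $\vp_n=(2l+1)\pi$ gives $(2n+1)/(\g-1)=2l$, that is,
\[
l=\frac{(2n+1)q}{2(p-q)}\,,
\]
which is the stated family of sheets. I will then check when this $l$ is an integer. Since $\gcd(q,p-q)=1$, we need $(p-q)\mid(2n+1)$, so $p-q$ must be odd, and the factor $2$ must divide $q$. Hence $q$ must be even (and then $p$ is odd and $p-q$ is odd automatically). So when $q$ is odd no sheet carries $z=-1$. Moreover $l=0$ would force $2n+1=0$, which is impossible, so $S_0$ never hosts the tachyon for any $q$. This gives the last sentence of the statement for the main sequence \Eq{gooseq'2}.

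\emph{Physical pole $z=1$.} Imposing $\vp_n=2\pi l$ gives $(2n+1)q=(2l-1)(p-q)$, and I will show this has no admissible solution. The delicate point, and what I expect to be the main obstacle, is that a naive reading of this condition is solvable for some $\g$ (e.g.\ $q$ odd with $p-q$ odd). So the argument has to use the precise convention by which $\vp_n$ in \Eq{c2pha3} was derived: the choice of representative of $-1$ in $(-z)^{\g-1}$ and the fact that $\t_l=2\pi l$ lies on the edge of the cut on the positive real semi-axis. Concretely, I would go back to the dispersion relation $-z+(-z)^\g=0$ on $S_l$ and evaluate it at $z=\rme^{2\pi\rmi l}$, with $-z=\rme^{\rmi(2l\pm1)\pi}$ fixed by the convention of section~\ref{secpoles}. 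This gives $(-z)^{\g-1}=\rme^{\rmi(\g-1)(2l\pm1)\pi}$, which must equal the sign required by the pole equation. Comparing exponents yields a parity mismatch, an even versus an odd multiple of $\pi$ once multiplied by $q$. This is the same mechanism by which Theorem~\ref{theo2-B2} excludes $z=1$ for odd $\Upsilon$.

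If this direct check instead reveals that $z=1$ sits exactly on the branch cut $\t=2\pi l$, the fallback is to argue that it is not an isolated pole in the interior of any $S_l$, consistently with the footnote's alternative convention $z\to-z$. In either case the conclusion is that no sheet carries $z=1$, which completes the statement.
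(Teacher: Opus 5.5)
\textbf{Tachyonic pole $z=-1$.} This part is correct and is essentially the paper's argument. The paper sets $\vp_n=c\pi$, solves $\g=(2n+c)/(c-1)$, and discards $c=0,1,2$. This is where the absence of real poles on $S_0$ comes from; for the tachyon it is your $2n+1\neq0$ step. It then reads off $p$ odd and $q$ even from $\g=(2n+2l+1)/(2l)$. That is equivalent to your coprimality argument showing that $l=(2n+1)q/[2(p-q)]$ is an integer only for even $q$.

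\textbf{Physical pole $z=1$: the parity route fails.} The parity mismatch you plan to extract does not exist. Whichever representative $-z=\rme^{\rmi(2l\pm1)\pi}$ you choose, the condition is $(2n+1)q=(2l\pm1)(p-q)$. For $p$ even and $q$ odd both sides are odd, so there are genuine solutions. Take $\g=4/3$: \Eq{c2pha3} gives $\vp_n=(6n+4)\pi$. The only pole in the main sequence \Eq{gooseq'2} then sits at $z=\rme^{4\pi\rmi}=1$, on the edge of $S_2$, and $(-z)^{\g-1}=\rme^{\rmi\pi}=-1$ solves $-z+(-z)^\g=0$ exactly.

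The analogy with Theorem~\ref{theo2-B2} does not carry over. Any sign obstruction there comes from the explicit factor $(-1)^\Upsilon$ in the HA kinetic term, and NH has no such factor. So no evaluation of the dispersion relation can show that the root is absent.

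\textbf{What actually proves the claim.} The statement holds only because roots of this kind are not counted. That is exactly the paper's proof: even $c=2l$ is excluded ``since the point $z=1$ lies in the branch cut on the positive semi-axis.'' Your fallback is therefore the whole argument for the $z=1$ claim, and you should present it as the proof. State it as an explicit rule: a root with phase $2\pi l$ lies on the cut, along which the contour of figure~\ref{fig1} runs and across which $\tilde G$ is discontinuous, so it is not an isolated singularity inside any sheet. Say plainly that this is a convention, because the half-open interval in \Eq{Smnh} nominally contains $\tau_l=2\pi l$.
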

\begin{proof}
The phase of a real pole in any given Riemann sheet $S_m$ is $\vp_n=c\pi$, $c\in\mathbb{Z}$, where $\vp_n$ is given in \Eq{c2pha3}. The phases $\vp_n=0,\pi,2\pi$ are excluded if $\g$ is non-integer, since they correspond, respectively, to $c=0$ ($p/q=-2n$), $c=1$ ($q=0$) and $c=2$ ($p/q=2n+2$). In all the other cases,
\ben
\frac{p}{q}=\frac{2n+c}{c-1}\,,\qquad c\neq 0, 1,2\,.
\een
Since $c\neq 1$, there is no real pole $z=\pm 1$ on $S_0$. The sheets where this pole is hosted can be found from $2l\pi\leq c\pi < 2(l+1)\pi$, giving
\be\label{SmcondB3}
l\leq\frac{p+2nq}{2(p-q)}=\frac{c}{2}< l+1\,.
\ee
We exclude the case of even $c=2l$, since the point $z=1$ lies in the branch cut on the positive semi-axis. When $c=2l+1$ is odd, we have a real pole \Eq{pole-} on the negative real semi-axis (i.e., away from the branch cut at $\Re\,z\geq 0$) and, since $p/q=(2n+2l+1)/(2l)$, $p$ is odd and $q$ is even. Fixing $n$, $p$ and $q$, one finds $l=(2n+1)q/[2(p-q)]$. 
\end{proof}

\subsubsection{Complex poles}\label{appC3b}

Contrary to all the other cases, in the NH theory only $S_0$ can host complex-conjugate pairs. To see this, we have to modify the discussion between \Eq{constrong1} and \Eq{mmst}. The poles $z_n$ and $z_{n_*}$ belong to, respectively, sheet $S_l$ and $S_{l_*}$ when
\ba
&& 2\pi l < \vp_n < 2\pi(l+1)\,,\label{inter1-B3}\\
&& 2\pi l_* < \vp_{n_*} < 2\pi(l_*+1)\,,\label{inter2-B3}
\ea
where we take strict inequalities because we exclude the self-conjugate case of real poles. Plugging relation \Eq{constrong1} into \Eq{inter1-B3}, we get
\be\label{inter3-B3}
2\pi (k-l-1) < \vp_{n_*} < 2\pi(k-l)\,.
\ee
Comparing \Eq{inter3-B3} with \Eq{inter2-B3}, we obtain
\be\label{mmst-B3}
l+l_*=k-1\,,
\ee
which replaces \Eq{mmst}. 

\begin{theo}[{\bf NH: complex-conjugate pairs ---real $\bm{0<\g<3}$}]\label{theo3-B}
When $\g$ lies in the interval \Eq{gulim-B1}, there are no complex-conjugate pairs $(z_n,z_n^*)$ in any Riemann sheet $S_{l}$.
\end{theo}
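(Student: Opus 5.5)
The plan is to follow the interval-counting argument of Theorems~\ref{theo3-A} and \ref{theo3-B1}, adapted to the NH sheet structure \Eq{Smnh} and to the conjugation rule \Eq{mmst-B3}. First I will translate the membership condition \Eq{inter1-B3} for a pole $z_n$ with phase \Eq{c2pha3} into a condition on the integer $n$. Writing $\vp_n=\frac{2n+1}{\g-1}\pi+\pi$, the inequality $2l<\vp_n/\pi<2l+2$ becomes $2l-1<(2n+1)/(\g-1)<2l+1$. Solving for $n$ splits according to the sign of $\g-1$:
\begin{equation*}
\g>1:\quad \frac{(2l-1)(\g-1)-1}{2}<n<\frac{(2l+1)(\g-1)-1}{2}\,,
\end{equation*}
with the inequalities reversed for $0<\g<1$. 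In both cases the interval has length $b-a=|\g-1|$, rather than $|\g-1|/2$ as in the hermitian cases. This difference reflects the fact that an NH sheet spans a full turn $2\pi$ instead of a half-plane.

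Next I will count the admissible integers using \Eq{flocei}. For $0<\g<2$ the length satisfies $b-a<1$, so each sheet $S_l$ hosts at most one pole. A single pole cannot form a pair with its conjugate, because the conjugate is distinct whenever the pole is non-real. Real poles are excluded anyway: by Theorems~\ref{theo1-B3} and \ref{theo2-B3}, the only possible real pole is $z=-1$, which is self-conjugate and does not count as a pair.

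The main obstacle is the range $2<\g<3$. Here $1<b-a<2$, so a sheet may host two poles, and counting alone does not exclude a pair. I will therefore use the conjugation structure directly. From \Eq{constrong1} and \Eq{c2pha3}, conjugate partners satisfy $(2n+1)+(2n_*+1)=(\g-1)(2k-2)$, that is $n_*=-n-1+(\g-1)(k-1)$. Both poles lie in the same sheet only if $l=l_*$, and by \Eq{mmst-B3} this forces $k=2l+1$. The pair condition then becomes $\vp_n+\vp_{n_*}=2\pi(2l+1)$, so the two phases are placed symmetrically about $(2l+1)\pi$, i.e.\ about the negative real axis of $S_l$.

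To finish, I will combine this symmetry with the location of the pole nearest to that axis. Within $S_l$ the phases in the window are spaced by $2\pi/(\g-1)>\pi$ for $\g<3$. A symmetric pair would therefore need the phases $(2l+1)\pi\pm\delta$ with $2\delta$ equal to a multiple of $2\pi/(\g-1)$. With at most two poles in the window, this means $\delta=\pi/(\g-1)$, so the phases $(2l+1)\pi\pm\pi/(\g-1)$ must both lie strictly inside $]2l\pi,2(l+1)\pi[$. That requires $\pi/(\g-1)<\pi$, i.e.\ $\g>2$, which is consistent with the range, so I must also impose integrality. Setting $\vp_n=(2l+1)\pi-\pi/(\g-1)$ in \Eq{c2pha3} gives $2n+2=2l(\g-1)$, that is $n=l(\g-1)-1$. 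This can be an integer for rational $\g$, so the step is delicate. If the direct route fails, I would fall back on checking whether the pair's partner $n_*=-n-1+2l(\g-1)=n$ coincides with $n$ itself. That coincidence would mean the two poles are actually one, hence real (phase $(2l+1)\pi$, i.e.\ $z=-1$), which contradicts non-reality and closes the case. I expect this identification step to be where the argument must be checked most carefully.
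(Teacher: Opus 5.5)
Your argument for $0<\gamma<2$ is sound and matches the paper's interval-counting proof. You also computed the length of the $n$-window correctly as $|\gamma-1|$: the endpoints in \Eq{condi-B3} differ by $\gamma-1$. The paper's proof instead asserts a length of $|\gamma-1|/2$, and uses it to extend ``$N_l\leq 1$'' to all of $0<\gamma<3$. So the paper's own proof also only covers $0<\gamma<2$. The obstacle you found for $2<\gamma<3$ is real, not an artefact of your method.

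The $2<\gamma<3$ step is where your proposal breaks, and it cannot be repaired, because the statement is false in that range. Your reduction is correct up to the condition $n=l(\gamma-1)-1\in\mathbb{Z}$. For $l=0$, however, this gives $n=-1$ for every $\gamma$, not only for special rationals. Your fallback also contains an algebra slip: $n_*=-n-1+2l(\gamma-1)=l(\gamma-1)=n+1\neq n$. The partner is therefore a distinct pole, not the same one. Concretely, $z_{-1}$ and $z_0$ have phases $\pi\mp\pi/(\gamma-1)$, both strictly inside $]0,2\pi[$ as soon as $\gamma>2$. They are non-real, mutually conjugate poles on $S_0$ for every $2<\gamma<3$. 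For example, at $\gamma=5/2$ they sit at phases $\pi/3$ and $5\pi/3$, and one checks directly on the principal branch that $-z+(-z)^{5/2}=0$ there. This agrees with Theorem~\ref{theo5-B3}(b) with $\bar n=1$, and with the proof of Theorem~\ref{theo8-B3}; both place a conjugate pair on $S_0$, contradicting the statement. Finished correctly, your method establishes the following. For $0<\gamma<2$ there are no pairs on any sheet. For $2<\gamma<3$ a pair lies in $S_l$ if and only if $l(\gamma-1)\in\mathbb{Z}$. That means only on $S_0$, together with its identical copies $S_{mq}$ when $\gamma=p/q$.
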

\begin{proof}
We follow the same steps as in Theorem~\ref{theo3-A}. From \Eq{c2pha3} and \Eq{inter1-B3}, $2l<(2n+1)/(\g-1)+1<2l+2$, hence
\ba
&&0<\g<1:\quad \frac{\g-1}{2}(2l+1)-\frac{1}{2}<n<\frac{\g-1}{2}(2l-1)-\frac{1}{2}\,,\\
&&\hspace{.7cm}\g>1:\quad  \frac{\g-1}{2}(2l-1)-\frac{1}{2}<n<\frac{\g-1}{2}(2l+1)-\frac{1}{2}\,.\label{condi-B3}
\ea
The length of the interval is $|\g-1|/2$, hence $N_l=0$ or $N_l=1$ when \Eq{gulim-B1} holds.
\end{proof}

Finally, the equivalent of Theorems~\ref{theo6-B1} and \ref{theo6-B2} is the following.
\begin{theo}[{\bf HN: empty sheets ---real $\bm{0<\g<3}$}]\label{theo8-B3}
There always exists a completely empty Riemann sheets $S_l$ for any non-integer $0<\g<3$ except when $q$ is odd and \Eq{exep} holds. For these values, there always is an unpaired complex pole in $S_{l\neq 0}$, while the sheet $S_0$ only has a complex-conjugate pair $(z_n,z_n^*)$.
\end{theo}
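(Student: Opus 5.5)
The plan is to follow the template of Theorems~\ref{theo6-A} and \ref{theo6-B1}, now for the single-cut NH surface. From the proof of Theorem~\ref{theo3-B}, the number of poles on $S_l$ is the number of integers $n$ in the interval \Eq{condi-B3}. For $\g>1$ this interval is centred at $(\g-1)l-\tfrac12$ and has half-length $(\g-1)/4<\tfrac12$. Hence $N_l=0$ if and only if
\be\label{distconNH}
{\rm dist}\!\left[(\g-1)l-\tfrac12\right]={\rm dist}\!\left(\g l-\tfrac12\right)\geq\frac{\g-1}{4}\,.
\ee
Boundary cases with equality place a pole on the edge of the sheet. There they coincide with the real-axis cut or the $z=-1$ line, and they are handled with Theorem~\ref{theo2-B3}.

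For $0<\g<1$ the right-hand side of \Eq{distconNH} is negative, so every sheet is empty and there is nothing to prove. For $1<\g<3$ with $\g$ irrational, Weyl equidistribution of $\{\g l\}$ gives infinitely many $l$ with $\{\g l\}$ arbitrarily close to $0$. Then ${\rm dist}(\g l-\tfrac12)$ is close to $\tfrac12>(\g-1)/4$, so an empty $S_l$ exists. Unlike the hermitian cases, there is no gluing of two half-planes here: a single $l$ must work. That is why the irrational argument has to produce $\{\g l\}\approx 0$ rather than an arbitrary target.

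For rational $\g=p/q$ we have $\{\g l\}=j/q$ with $j=0,\dots,q-1$ as $l$ runs over the sequence \Eq{gooseq'2}. So we need the maximum over $j$ of ${\rm dist}(j/q-\tfrac12)$.

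If $q$ is even, no $j/q$ lies exactly at $\tfrac12$ except $j=q/2$, and $j=0$ gives distance $\tfrac12$. Since $(\g-1)/4<\tfrac12$, \Eq{distconNH} holds. Equivalently $S_0$ works, unless the pole sits on the boundary; that case is discussed with Theorem~\ref{theo2-B3}.

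If $q$ is odd, we again check $j=0$, which gives distance $\tfrac12$. I expect the real case split to come from the admissible $l$ and from where the conjugate pole lands. Following the HS proof, the maximal distance is $(q-1)/(2q)$, and comparing with $(p-q)/(4q)$ gives $p\leq 3q-2$. Thus only $p=3q-1$, i.e.\ \Eq{exep}, fails for odd $q$. I will have to recheck the centring offset ($-\tfrac12$ versus $0$) carefully from \Eq{c2pha3}, since it decides which parity of $q$ is exceptional. This bookkeeping is the main obstacle: the statement says odd $q$, which matches the HS $S^-$ computation with centre $\g(l'+1)-\tfrac12$.

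For the exceptional values $\g=3-1/q$ with $q$ odd, I substitute into \Eq{c2pha3} to get $\vp_n=[(2n+1)q/(2q-1)+1]\pi$ and list the poles in each $S_l$. On $S_l$ with $l\neq0$, the relation \Eq{mmst-B3} gives $l_*=k-1-l$. Using \Eq{constrong1}, $k$ must be a multiple of $q$, so the partner of any pole on $S_{l\neq0}$ lies on another sheet, and each such sheet has an unpaired pole. On $S_0$, I take $k=1$ (this needs care because of the $q$-multiplicity) together with $n_*=-n-1+(\g-1)\cdot$integer. I then exhibit the two poles symmetric about the real axis, with phases $\pi\pm\delta$, and check that $N_0=2$ is consistent: here the interval length $(\g-1)/2\to1$ allows two integers. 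This yields the claimed single conjugate pair on $S_0$.
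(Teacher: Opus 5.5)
\textbf{The central step fails.} Your plan follows the paper's proof almost step by step: the same centre $(\g-1)l-\tfrac12$, the same reduction to the HS condition \Eq{distcon-B1}, and the same import of the exceptional set \Eq{exep}. But the half-length everything rests on is wrong. The interval \Eq{condi-B3} has length $\frac{\g-1}{2}(2l+1)-\frac{\g-1}{2}(2l-1)=\g-1$, so its half-length is $|\g-1|/2$, not $(\g-1)/4$. Geometrically, an NH sheet $S_l$ spans a full angle $2\pi$, while consecutive poles \Eq{c2pha3} are $2\pi/|\g-1|$ apart. The value $(\g-1)/4$ belongs to the half-plane sheets $S_l^\pm$ of width $\pi$ in Theorem~\ref{theo6-B1}. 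The paper's proof and Theorem~\ref{theo3-B} quote the same half-length, so you inherited this error rather than introduced it.

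For $0<\g<2$ the existence claim survives anyway. At $l=0$ the interval is centred at $-\tfrac12$ with half-length $|\g-1|/2<\tfrac12$, so $S_0$ is empty, as in part a) of Theorem~\ref{theo5-B3}. However, your remark that every sheet is empty for $0<\g<1$ is false. The half-length is $(1-\g)/2>0$, and Theorem~\ref{theo7-B3} a) exhibits sheets carrying a tachyon or an isolated complex pole.

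\textbf{For $2<\g<3$ the argument breaks down and cannot be repaired.} Every interval of length $\g-1>1$ contains an integer. So every $S_l$ carries at least one pole, for rational and irrational $\g$ alike, and no completely empty sheet exists. In particular, $S_0$ always contains $n=-1,0$, i.e.\ the pair $(z_0,z_{-1})$ (Theorem~\ref{theo5-B3}), and not only at the values \Eq{exep}. Take $\g=5/2$, which has even $q$, so you claim ``$S_0$ works''. There one finds $\vp_{-1}=\pi/3$ and $\vp_0=5\pi/3$ on $S_0$, and the tachyon $\vp_1=3\pi$ on $S_1$, so neither sheet is empty.

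\textbf{Your proposal is also internally inconsistent here.} With centring $-\tfrac12$, the choice $j=0$ gives distance $\tfrac12$ for odd $q$ as well, so your own criterion produces no exception at all. The bound $(q-1)/(2q)$ is borrowed from the HS computation on $S_l^+$, whose centre is $\g l$, not $\g l-\tfrac12$. Moreover, an interval of length $(\g-1)/2<1$ cannot contain the two integers you need for $N_0=2$; that count only follows from the correct length $\g-1>1$.

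\textbf{What can actually be proved.} For $0<\g<2$, $S_0$ is empty. For $2<\g<3$, no sheet is empty, and $S_0$ hosts exactly one complex-conjugate pair (cf.\ Theorem~\ref{theo7-B3} c)). So the behaviour the statement attributes to the exceptional values \Eq{exep} is generic in that range, and the existence claim there is false as stated.
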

\begin{proof}
As in the HS and HA theories, the case $0<\g<1$ is trivial. Taking $\g>0$ from now on, the interval \Eq{condi-B3} 
 is centered at $(\g-1)l-1/2$ with half-length $0<(\g-1)/4<1/2$. Thus, when $l\neq 0$ there is no integer in it ($N_l=0$) if, and only if, the distance between $(\g-1)l-1/2$ and the closest integer $n$ is greater than or equal to $(\g-1)/4$:
\be\label{distcon0}
{\rm dist}\left(\g l-\frac12\right)\geq \frac{\g-1}{4}\,.
\ee
This inequality has the same form as \Eq{distcon-B1} and the same analysis follows through for $l\neq 0$, leading to \Eq{exep} for $q$ odd. When $l=0$, \Eq{distcon0} holds regardless of the value of $\g$ and a separate treatment is required, for instance as in Theorem~\ref{theo7-B3}. It turns out that $N_0=2$ and $S_0$ hosts a complex-conjugate pair.
\end{proof}

\begin{theo}[{\bf NH: complex-conjugate poles ---irrational $\bm{\g>0}$}]\label{theo4-B3}
For $\g>0$ irrational, the only Riemann sheet that can host complex poles is $S_0$ and such poles are always coupled into conjugate pairs $(z_n,z_{n_*})$.
\end{theo}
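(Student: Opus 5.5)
The plan is to repeat the argument of Theorem~\ref{theo4-A} for the NH phases \Eq{c2pha3}, using the modified sheet-pairing rule \Eq{mmst-B3} in place of \Eq{mmst}. I will read the statement as follows: complex conjugate pairs can sit only in $S_0$, and every complex pole that lands in $S_0$ has its conjugate there as well. Poles on the other sheets are necessarily unpaired, which is what disqualifies those sheets.

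The first step is to derive the index relation for conjugate poles. Take a complex pole $z_n$ with $\vp_n=(2n+1)\pi/(\g-1)+\pi$. Its conjugate is another pole $z_{n_*}$ of the same family whose phase satisfies \Eq{constrong1}, $\vp_{n_*}=-\vp_n+2\pi k$. Substituting \Eq{c2pha3} on both sides and simplifying should give
\be
n_*=-n-1+(\g-1)(k-1)\,,
\ee
which is the NH analogue of \Eq{constrong2}. Since $\g$ is irrational, $n_*$ is an integer only if $k=1$. Therefore $n_*=-n-1$, so the conjugate of a pole is always again a pole of the family, and only with this single value of $k$.

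The second step locates the conjugate on the Riemann surface. Insert $k=1$ into \Eq{mmst-B3}; this gives $l+l_*=0$. So if $z_n\in S_l$ then $z_{n_*}\in S_{-l}$. Both poles lie on the same sheet exactly when $l=l_*=0$. Hence every complex pole in $S_0$ comes with its conjugate in $S_0$, i.e.\ the poles there are grouped into conjugate pairs. Any complex pole in $S_{l\neq0}$ has its partner on the different sheet $S_{-l}$ and is therefore unpaired. I also need that there are no self-conjugate poles. This is Theorem~\ref{theo1-B3}: for irrational $\g$ there are no real poles, so the phases are never multiples of $\pi$ and the strict inequalities \Eq{inter1-B3}--\Eq{inter2-B3} apply.

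The main obstacle is the bookkeeping of sheet boundaries. I must check that the half-open convention $[2\pi l,2\pi(l+1)[$ of \Eq{Smnh}, together with the branch cut on the positive real semi-axis, does not shift the relation $l+l_*=k-1$ by one for poles whose phases are close to $2\pi l$. Irrationality of $\g$ excludes phases that are exact multiples of $2\pi$, so the strict inequalities are legitimate and the derivation of \Eq{mmst-B3} goes through unchanged. As a consistency check, I will compare with Theorem~\ref{theo3-B}: for $0<\g<3$ each sheet contains at most one pole, so $S_0$ is either empty or holds only a real pole. Since irrational $\g$ gives no real poles, $S_0$ must then be empty, and genuine pairs in $S_0$ should appear only for $\g>3$.
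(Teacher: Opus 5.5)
Your core argument is exactly the paper's proof, and it is correct. Substituting \Eq{c2pha3} into \Eq{constrong1} gives $n_*=-n-1+(\g-1)(k-1)$. Irrationality forces $k=1$, and \Eq{mmst-B3} then yields $l_*=-l$, so a conjugate pair shares a sheet only for $l=0$. Your reading of the statement as a claim about conjugate pairs is also the one this argument supports. Taken literally, the statement would fail, since for irrational $\g$ there are infinitely many unpaired complex poles on sheets $S_{l\neq0}$.

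The closing consistency check, however, is wrong and should be dropped. The NH sheets \Eq{Smnh} have angular width $2\pi$, while consecutive poles are spaced by $2\pi/(\g-1)$. Hence the admissible-$n$ interval \Eq{condi-B3} has length $\g-1$, not the $(\g-1)/2$ quoted in Theorem~\ref{theo3-B}, and "at most one pole per sheet" holds only for $\g<2$. For $2<\g<3$ the sheet $S_0$ already contains the conjugate pair $(z_0,z_{-1})$ with phases $\pi\pm\pi/(\g-1)$. This agrees with Theorem~\ref{theo5-B3} and contradicts your conclusion that pairs in $S_0$ appear only for $\g>3$. The error does not affect your proof of the statement itself, which never uses Theorem~\ref{theo3-B}.
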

\begin{proof} From \Eq{c2pha3} and \Eq{constrong1},
\be\label{constrong2-B3}
n_*=-n-1+(\g-1)(k-1)\,,
\ee
which is an integer if, and only if,
\be
k=1\,:\qquad n_*=-n-1\,.\label{stro1-B3}
\ee
The condition \Eq{stro1-B3} describes two complex conjugate poles with phases $\vp_n$ and $\vp_{-n-1}=-\vp_n+2\pi$, hence necessarily lying in the same Riemann sheet. This sheet is $S_0$, since \Eq{mmst-B3} yields $l_*=-l$ which is equal to $l$ if, and only if, $l=0$.
\end{proof}
For any rational or irrational $\g>0$, we can be more specific about complex conjugate pairs of poles on the sheet $S_0$, which we denote as $(\bar z_n,\bar z_n^*)$ to distinguish them from generic pairs $(z_n,z_n^*)$ in other sheets:
\be
(\bar z_{n+1},\bar z_{n+1}^*)\coloneqq (z_n,z_{-n-1})=(z_n,z_n^*)\in S_0\,,\qquad n=0,\,\dots,\,\bar n-1\,,
\ee
where now $n$ is restricted to semi-positive integers and $\bar n$ is the number of pairs in $S_0$.
\begin{theo}[{\bf NH: complex poles on $\bm{S_0}$ ---real $\bm{\g>0}$}]\label{theo5-B3}
	For real $\g>0$:
	\begin{itemize}
	\item[a)] When $0<\g<2$, there are no complex poles on the Riemann sheet $S_0$.
	\item[b)] When $\g>2$, all complex poles on the Riemann sheet $S_0$ are organized into $\bar n\geq 1$ conjugate pairs such that
	\be\label{gk}
	2\bar n<\g<2\bar n+2\,,\qquad \bar n=1,\,2,\,\dots\,.
	\ee
	\end{itemize}
\end{theo}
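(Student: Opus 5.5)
The plan is to work directly with the explicit pole phases \Eq{c2pha3} of the NH Green's function, $\vp_n=\frac{2n+1}{\g-1}\pi+\pi$ with $|z_n|=1$ (since $c_2=1$), and to impose the membership condition for the sheet $S_0$ from \Eq{Smnh}. Because $z=1$ sits on the branch cut and real poles are excluded by Theorems~\ref{theo1-B3} and \ref{theo2-B3} (or handled separately), I will use the strict condition $0<\vp_n<2\pi$ as in \Eq{inter1-B3} with $l=0$. Subtracting $\pi$ and dividing by $\pi$, this becomes
\[
-1<\frac{2n+1}{\g-1}<1\,,\qquad\text{i.e.}\qquad |2n+1|<|\g-1|\,.
\]
So the whole statement reduces to counting odd integers of modulus less than $|\g-1|$.

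For part a), note that $0<\g<2$ gives $|\g-1|<1$. No odd integer has modulus smaller than $1$, so no $n$ satisfies the condition and $S_0$ is empty of complex poles.

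For part b), take $\g>2$, so that $\g-1>1$. The admissible $n$ come in pairs $n$ and $-n-1$, because $|2(-n-1)+1|=|2n+1|$. By \Eq{stro1-B3} and the proof of Theorem~\ref{theo4-B3}, $\vp_{-n-1}=2\pi-\vp_n$, so each such pair is exactly a complex-conjugate pair $(\bar z_{n+1},\bar z_{n+1}^*)$ in $S_0$. Choosing the representative $n\geq 0$, the pairs are labelled by the positive odd numbers $1,3,\dots,2\bar n-1$ lying below $\g-1$. Their number $\bar n$ is therefore fixed by $2\bar n-1<\g-1\leq 2\bar n+1$, that is $2\bar n<\g\leq 2\bar n+2$. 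Since $\g>2$ forces $\g-1>1$, we always have $\bar n\geq1$. None of these poles is real: a real pole would need $\vp_n=\pi$, hence $2n+1=0$, which is impossible. Thus every complex pole on $S_0$ is paired, with no unpaired leftovers.

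The only delicate step is the boundary case, where I expect the main care to be needed. At $\g=2\bar n+2$, $\g$ is an integer, and this is excluded by the standing non-integer assumption. One also has to check that the endpoint $2\bar n+1=\g-1$ would give $\vp=2\pi$, which lies on the cut and so is not counted. Together these turn the bound into the strict inequality \Eq{gk}. I would also double-check that the sheet convention $\t_0\in[0,2\pi[$ is consistent with the phase formula, so that no branch shift spoils the inequality $|2n+1|<\g-1$.
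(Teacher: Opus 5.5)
Your proposal is correct and follows the paper's own proof. Both impose $0<\vp_n<2\pi$ on the phases \Eq{c2pha3}, find that no integer survives for $0<\g<2$, and for $\g>2$ count the $2\bar n$ admissible values $-\bar n\leq n\leq \bar n-1$, which pair up via $n_*=-n-1$ into $\bar n$ conjugate pairs. Your condition $|2n+1|<|\g-1|$ simply merges the paper's two interval conditions for $\g\lessgtr 1$, and your explicit checks ($\vp_{-n-1}=2\pi-\vp_n$, no real pole on $S_0$, endpoint excluded by non-integer $\g$) spell out details the paper leaves implicit.
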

\begin{proof}
In order to see how many complex poles $z_n$ lie in $S_0$, we set the conditions on $n$ such that $0<\vp_n<2\pi$. From \Eq{c2pha3},
\ba
&& 0<\g<1\,:\qquad \frac{\g}{2}-1<n<-\frac{\g}{2}\,,\label{condia-B3}\\
&& \hspace{.75cm}\g>1:\qquad \!\!\!\hphantom{-1}-\frac{\g}{2}<n<\frac{\g}{2}-1\,,\label{condib-B3}
\ea
where we took strict inequalities since we are ignoring real poles.

{\bf a)} Condition \Eq{condia-B3} never holds because the interval is a subinterval of $(-1,0)$ centered around $-1/2$. The same is true for condition \Eq{condib-B3} when $1<\g<2$. 

{\bf b)} When $\g>2$, condition \Eq{condib-B3} is satisfied for any $n$ such that
\be\label{ccpn}
-\bar n\leq n\leq \bar n-1\,,\qquad \bar n\in\mathbb{N}^+\,,
\ee
where $\bar n$ is the largest natural number smaller than $\g/2$ and is thus determined by \Eq{gk}. For any given $\bar n$, there are $2\bar n$ integer values of $n$ falling in the interval \Eq{ccpn}. However, according to \Eqq{stro1-B3}, also $n_*$ satisfies \Eq{ccpn} and the whole conjugate pair $(z_n,z_{n_*})=(z_n,z_n^*)$ is in $S_0$. Therefore, half of the values of $n$ inside \Eq{ccpn} correspond to a conjugate pole and there are in total $2\bar n/2=\bar n$ conjugate pairs in $S_0$:
\be
{\rm card}\big\{(\bar z_n,\bar z_n^*)\in S_0\big\}=\bar n\,.
\ee
This concludes the proof.
\end{proof}
Thus, if $2<\g<4$ ($\bar n=1$), then only $n=-1,0$ are allowed and $S_0$ hosts one conjugate pair. If $4<\g<6$ ($\bar n=2$), then $n=-2,-1,0,1$ and $S_0$ hosts two conjugate pairs. If $6<\g<8$ ($\bar n=3$), then $n=-3,-2,-1,0,1,2$ and $S_0$ hosts three conjugate pairs. And so on:
\ban
\hspace{-1.cm}&&2<\g<4\,:\qquad (\bar z_1,\bar z_1^*)=(z_0,z_{-1})\,.\\
\hspace{-1.cm}&&4<\g<6\,:\qquad (\bar z_1,\bar z_1^*)=(z_0,z_{-1})\,,\quad (\bar z_2,\bar z_2^*)=(z_1,z_{-2})\,.\\
\hspace{-1.cm}&&6<\g<8\,:\qquad (\bar z_1,\bar z_1^*)=(z_0,z_{-1})\,,\quad (\bar z_2,\bar z_2^*)=(z_1,z_{-2})\,,\quad (\bar z_3,\bar z_3^*)=(z_2,z_{-3})\,.\\
\hspace{-1.cm}&&\dots
\ean

\begin{theo}[{\bf NH: complex-conjugate poles ---rational $\bm{\g=p/q>0}$}]\label{theo6-B3}
For $\g=p/q>0$ rational and irreducible, complex-conjugate poles $z_n$ and $z_{n_*}$ can lie in the same Riemann sheet $S_l$ in the sequence \Eq{gooseq} only for $l=0$ and, if $q$ is even, also for $l=q/2$. Otherwise, $z_n$ and $z_{n_*}$ lie in different sheets $S_l$ and $S_{l_*}$ such that
\be\label{maincon-B3}
l+l_*=q\,.
\ee
\end{theo}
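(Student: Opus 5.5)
We will follow the template of Theorems~\ref{theo5-A} and \ref{theo5-B1}, now using the NH phases \Eq{c2pha3} together with the sheet relation \Eq{mmst-B3}, which replaces \Eq{mmst}.

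\textbf{Integrality condition.} Inserting $\g-1=(p-q)/q$ into \Eq{constrong2-B3} gives
\be
n_*=-n-1+\frac{(p-q)(k-1)}{q}\,.
\ee
Since $\g=p/q$ is irreducible, $\gcd(p-q,q)=\gcd(p,q)=1$. Hence $n_*$ is an integer if, and only if, $q$ divides $k-1$, that is, $k=1+jq$ with $j\in\mathbb{Z}$.

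\textbf{Restriction to the main sequence.} Both sheet indices must lie in $\{0,\dots,q-1\}$, so $0\leq l+l_*\leq 2q-2$. By \Eq{mmst-B3}, $l+l_*=k-1=jq$. The only admissible values are therefore $j=0$, giving $k=1$ and $l+l_*=0$, and $j=1$, giving $k=q+1$ and $l+l_*=q$.

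\textbf{Case $j=0$.} Here $l=l_*=0$, which reproduces the $S_0$ pairs already found in Theorem~\ref{theo4-B3}.

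\textbf{Case $j=1$.} Here $l+l_*=q$. The two poles of the pair sit on the same sheet exactly when $l=l_*=q/2$, which requires $q$ to be even. This gives the second admissible sheet $S_{q/2}$, with partner index $n_*=-n-1+p-q$.

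\textbf{Remaining pairs.} Every other conjugate pair has $l\neq l_*$ and obeys \Eq{maincon-B3}. This is the stated alternative, provided the $l=0$ pairs with $l_*=0$ are treated as the separate first case.

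\textbf{Main obstacle.} The delicate part is the boundary bookkeeping for the identification $S_l\simeq S_{l+q}$.
\begin{itemize}
\item The choice $l=0$, $l_*=q$ is equivalent to $l_*=0$ modulo $q$. Confirming this consistently requires the periodicity $\vp_{n+(p-q)}=\vp_n+2\pi q$, which follows directly from \Eq{c2pha3}.
\item The inequalities \Eq{inter1-B3} are strict, so pole phases lying exactly on a sheet boundary must be excluded. These are the real poles already classified in Theorem~\ref{theo2-B3}.
\end{itemize}
To handle both points, I would first state the reduction modulo $q$ explicitly and then check the endpoints separately.
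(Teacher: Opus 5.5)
Your proposal is correct and follows the paper's route. Integrality of $n_*$ in \Eq{constrong2-B3} restricts you to $k=1$ or $k=q+1$ within the main sequence, and \Eq{mmst-B3} then gives either $l=l_*=0$ or $l+l_*=q$, with $l=l_*=q/2$ possible only for even $q$.

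You are also slightly more explicit than the paper. It asserts the two admissible values of $k$ without spelling out the $\gcd(p-q,q)=1$ step or the bound $0\leq l+l_*\leq 2q-2$ that excludes $k-1=jq$ with $j\geq 2$.
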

\begin{proof}
When $\g=p/q$ is rational, \Eq{constrong2-B3} which is an integer if, and only if, either \Eq{stro1-B3} or
\be
k=q+1\,:\qquad n_*=p-q-n-1\,,\label{stro2-B3}
\ee
holds. We already know from Theorem~\ref{theo4-B3} that \Eqq{stro1-B3} corresponds to a conjugate pair belonging to the sheet $S_0$, while \Eqq{stro2-B3} together with \Eq{mmst-B3} yield \Eq{maincon-B3}. Thus, to a pole in $S_1$ there would correspond its conjugate in $S_{q-1}$, the conjugate of a pole in $S_2$ would be in $S_{q-2}$, and so on. If $q$ is odd, then $q-1$ is even and the pairs are always split across the sheets of the sequence \Eq{gooseq'2} complex in this way. If $q$ is even, then \Eq{maincon-B3} is well-defined also for $l=l_*$, in which case there may also be a complex conjugate pair in the sheet $S_{q/2}$.
\end{proof}
Theorems \ref{theo1-B3}--\ref{theo6-B3} are instrumental to prove the following general rules, essentially stating that massless gauge and gravitational theories are well-defined and well-behaved only in $S_0$ when $\g$ is rational.
\begin{theo}[{\bf NH: distribution of poles ---rational $\bm{\g=p/q>0}$}]\label{theo7-B3}
For $\g=p/q>0$ rational and irreducible, inside the sequence \Eq{gooseq'2}:
\begin{itemize}
\item[a)] When $0<\g<2$ ($|p-q|<q$), $q-|p-q|$ of the $q$ sheets of the sequence have no poles, while the other sheets have either one tachyonic pole for a massless action (\Eq{eq:model} with $m^2=0$) or an isolated complex pole. $S_0$ is always empty.
\item[b)] When $\g=1$, all sheets are identical and there are no extra poles apart from the $z=0$ pole. When $\g=2$ ($|p-q|=q$), all sheets are identical and there is one extra pole apart from the $z=0$ pole.
\item[c)] When $\g>2$ ($|p-q|>q$), some sheets are populated by more than one pole. $S_0$ always hosts at least one pair of complex conjugate poles and never contains real or unpaired complex poles. The other sheets in the sequence do.
\end{itemize} 
\end{theo}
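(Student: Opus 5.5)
I would prove Theorem~\ref{theo7-B3} by direct counting on the phases \Eq{c2pha3}, $\vp_n=(2n+1)\pi/(\g-1)+\pi$, using the sheet condition \Eq{inter1-B3} (strict inequalities for complex poles, $\vp_n=c\pi$ with $c$ odd for the tachyon \Eq{pole-}, with $c$ even excluded because $z=1$ sits on the cut). For $\g=p/q$ the phases are periodic in $n$ with period $|p-q|$ modulo $2\pi q/|p-q|\cdot|p-q|/q$. More simply, the distinct poles are the $|p-q|$ solutions of $(-z)^{(p-q)/q}=1$ counted by \Eq{pqpoles2}, distributed over the $q$ sheets of \Eq{gooseq'2}. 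The key quantity is the number $N_l$ of admissible $n$ per sheet. From \Eq{condi-B3}, the interval for $n$ has length $|\g-1|=|p-q|/q$ once endpoints allowing real poles are included, i.e., twice the half-length used in Theorem~\ref{theo3-B}. The total over a full period of sheets is $|p-q|$.

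\textbf{Case a)} When $|p-q|<q$, each sheet interval has length $<1$, so $N_l\in\{0,1\}$. Summing over the $q$ sheets gives exactly $|p-q|$ occupied sheets and $q-|p-q|$ empty ones. A single pole in a sheet is either real or complex. If it is real with $c$ odd, it is the tachyon, whose presence Theorem~\ref{theo2-B3} characterizes (sheets $l=(2n+1)q/[2(p-q)]$); for a massless action this is a genuine tachyon. Otherwise its conjugate lies in $S_{q-l}$ by Theorem~\ref{theo6-B3}, hence it is isolated in $S_l$. The exception is $l=q/2$, where a pair would require $N_l=2$, which is impossible. Emptiness of $S_0$ then follows from Theorem~\ref{theo5-B3}(a) for complex poles and Theorem~\ref{theo2-B3} for real ones.

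\textbf{Case b)} For $\g=1$ the kinetic term is $-2z$ (with $c_2=1$ the combination is degenerate) or, read as $\om\to0$, there is no dispersion equation $(-z)^{\g-1}=1$ beyond $z=0$, and $q=1$ means a single sheet. For $\g=2$ the equation $-z=1$ gives the one extra pole $z=-1$, again with $q=1$ so all sheets coincide. This is a direct substitution.

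\textbf{Case c)} When $|p-q|>q$, the interval length exceeds $1$, so by pigeonhole some $N_l\geq 2$. Theorem~\ref{theo5-B3}(b) gives $\bar n\geq1$ conjugate pairs in $S_0$ and no unpaired complex poles there, since all $n$ in \Eq{ccpn} pair via \Eq{stro1-B3}. Real poles on $S_0$ are excluded by Theorem~\ref{theo2-B3}. For the other sheets, Theorem~\ref{theo6-B3} splits conjugates between $S_l$ and $S_{q-l}$, except possibly at $l=q/2$. For $l\neq0,q/2$ every complex pole is therefore unpaired, and such poles exist because the total count $|p-q|$ exceeds what $S_0$ and $S_{q/2}$ can absorb. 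At $l=q/2$ I would show that a tachyon appears, as in the analogous footnote for HS, by checking that $l=q/2$ satisfies $l=(2n+1)q/[2(p-q)]$ for some $n$.

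The main obstacle is making ``the other sheets do'' precise in c), namely that every sheet $l\neq0$ actually contains a real or unpaired pole rather than being empty. I would first try an averaging argument: $N_l\geq\lfloor|p-q|/q\rfloor\geq1$ for every $l$, since each interval has length $>1$ and so always contains an integer. This settles nonemptiness directly, and the classification above then rules out everything except real or unpaired poles for $l\neq0$.
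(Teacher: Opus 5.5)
Your argument is correct in substance and uses the same ingredients as the paper: the phases $\varphi_n$, Theorem~\ref{theo2-B3} for real poles, Theorem~\ref{theo5-B3} for $S_0$, and the conjugation rule $l_*\equiv -l \pmod q$ of Theorem~\ref{theo6-B3}. The difference is that you organize it around the per-sheet count $N_l$ rather than the global count $|p-q|$.

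In a), the paper only observes that $|p-q|<q$ poles cannot fill $q$ sheets and asserts that they ``spread thin'', at most one per sheet. It then follows the splitting scheme $(S_1,S_{q-1}),(S_2,S_{q-2}),\dots$ case by case according to the parities of $p$ and $q$. In c), it argues from the parity of the number of poles on each sheet.

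Your key observation is that the $n$-interval in \Eq{condi-B3} has length $|\gamma-1|=|p-q|/q$, and this does the real work. For $\gamma<2$ it gives $N_l\le 1$, so the count of empty sheets is derived rather than asserted. For $\gamma>2$ it guarantees at least one pole strictly inside every sheet. Combined with $l_*\equiv -l$ and the tachyon at $\varphi=(q+1)\pi\in S_{q/2}$ (your check works with $2n+1=p-q$), this proves ``the other sheets do'' more cleanly than the paper's parity remark. That remark is in fact wrong: for $\gamma=7/3$ the sheets $S_1$ and $S_2$ each carry one unpaired complex pole and there is no tachyon.

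The length $|\gamma-1|$ follows directly from the two endpoints of \Eq{condi-B3}, not from including endpoints. The value $|\gamma-1|/2$ in the proof of Theorem~\ref{theo3-B} is an arithmetic slip; it would contradict Theorem~\ref{theo5-B3}b) for $2<\gamma<3$. Your value is consistent with $q|\gamma-1|=|p-q|$ and reproduces exactly the threshold $\gamma=2$ of the statement.

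There are two corrections.

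\textbf{Sign of the pole condition.} With $c_2=1$ the pole condition is $(-z)^{\gamma-1}=-1$, not $+1$; your quoted $\varphi_n$ already encodes the right sign. The phases repeat with period $2\pi q$ under $n\to n+|p-q|$, not $2\pi$. Consequently, in b) the extra pole for $\gamma=2$ is $z=+1$, since $\tilde G=1/[z(z-1)]$. This is the heavy mode \Eq{pole+}, not a tachyon, though the count claimed in b) is unaffected.

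\textbf{The pole on the cut.} When $p$ is even and $q$ is odd, the unique real pole has $\varphi=(q+1)\pi$ with $q+1$ even, so it sits at $z=+1$ on the cut. For example, $\gamma=4/3$ gives $\varphi_0=4\pi$.
\begin{itemize}
\item If you exclude even $c$, as you state at the outset, the sum over sheets in a) gives $|p-q|-1$ occupied sheets, not $|p-q|$.
\item If you include it, that sheet carries a non-tachyonic real pole, which lies outside the dichotomy asserted in a).
\end{itemize}
The paper's proof has the same blind spot (it labels this pole $\tilde z_0=-1$). So this is an imprecision of the statement in that parity class rather than of your method, but you should fix a convention for the boundary pole explicitly.
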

\begin{proof}
{\bf a)} Let $0<\g<2$. From \Eqq{pqpoles2}, it is clear that the $|p-q|$ poles cannot populate all the $q$ sheets of a sequence. That the sheet $S_0$ is always empty of poles when $0<\g<2$ can be seen from \Eqq{pqpoles2}. The requirement $0\leq \vp_n<2\pi$ is never satisfied for $0<p/q<2$. In fact, if $0<p/q<1$, then $\vp_n\propto (p-q)^{-1}<0$ for $n\geq 0$, while for $n<0$ one has $(2|n|q-p)/(q-p)>2$. If $1<p/q<2$, then for $n\geq 0$ we get $(p+2nq)/(p-q)>2$, while for $n<0$ we have $\vp_n\propto p-2|n|q<0$.

If the poles are even, then there are $|p-q|/2$ complex conjugate couples $(z_n,z_n^*)$. However, since they spread thin over the $q$ sheets, there is at most one pole in a sheet, leading to an unphysical theory. To see this, we note that we have an even number $|p-q|$ of poles only when both $p$ and $q$ are odd (examples: $\g=1/3,17/9$); they cannot be both even because $p/q$ is irreducible. These singularities are organized into $|p-q|/2$ pairs of complex conjugate poles 
\ben
\tilde z_l\in \{z_n\}\setminus\{\bar z_n\}\,,\qquad l=\bar n+1,\bar n+2,\dots\,,
\een
in the sheets $S_1$, $\dots$, $S_{q-1}$, which are even in number. According to Theorem \ref{theo6-B3} and condition \Eq{maincon-B3}, the poles in each pair $(\tilde z_l,\tilde z_l^*)$, if present, are split in the sheets
\be\label{SSsch}
(S_1,S_{q-1}),\, (S_2,S_{q-2}),\, \dots,\, (S_{|p-q|/2},S_{|p-q|/2+1})\,.
\ee
If $p/q<1$, we have $q-(q-p)=p$ empty sheets in the sequence: $S_0$ and $p-1$ (even) sheets in the sub-sequence $\{S_1,\dots,S_{q-1}\}$, which are paired in the same conjugate way as above. If $1<p/q<2$, we have $q-(p-q)=2q-p$ empty sheets, $S_0$ and $2q-p-1$ (even) sheets in the sub-sequence. Thus, depending on the values of $p$ and $q$, we have 0, 2 ($S_{|p-q|/2}$ and $S_{|p-q|/2+1}$, the central sheets in the sub-sequence), 4 ($S_{|p-q|/2-1}$, $S_{|p-q|/2}$, $S_{|p-q|/2+1}$, $S_{|p-q|/2+2}$), $\dots$ regular (i.e., empty of poles) sheets in the sub-sequence. The singular sheets have complex poles not conjugated in a pair. 

If the poles are odd, then there are $(|p-q|-1)/2$ pairs of complex conjugate poles plus a real pole we denote as $\tilde z_0=-1$. When $p$ is even and $q$ is odd (examples: $\g=2/5$, $2/3$, $4/5$, $4/3$, $16/9$), these pairs are distributed in the sub-sequence $\{S_1,\dots,S_{q-1}\}$ according to the above scheme $(S_1,S_{q-1})$, $(S_2,S_{q-2})$, $\dots$. The real pole $\tilde z_0$ is in the $l$-th sheet found from \Eq{SmcondB3}. The rest of $q-|p-q|-1$ sheets, odd in number, are empty. When $p$ is odd and $q$ is even (examples: $\g=1/4,1/2,3/4,3/2$), the same pattern occurs but there is the extra sheet $S_{q/2}$ in the sequence and the real pole $\tilde z_0$ is therein.

Since sheets with a real-valued pole have a tachyon and sheets with isolated complex poles lead to a non-real spectrum, a viable theory can be hosted only in regular sheets such as $S_0$.

{\bf b)} The integer cases $\g=1,2$ are easy to work out and we leave them to the reader.

{\bf c)} Let now $\g>2$. The condition $0\leq \vp_n<2\pi$ translates into
\be\label{pqondi}
\frac{p}{q}\geq -2n\qquad (n<0)\,,\qquad \frac{p}{q}>2(n+1)\,,
\ee
which are always guaranteed for $n=0,-1$. By the condition \Eq{stro1-B3}, the constraints \Eq{pqondi} are automatically valid also for $n'=-n-1$. Therefore, the pole $\bar z_1=z_0$ and its complex conjugate $\bar z_1^*=z_{-1}$ always lie in the Riemann sheet $S_0$ and, if $p/q\gg 1$, there may be more conjugate pairs therein but no unpaired complex pole. Also, there is no real pole on $S_0$ unless $\g$ is integer, as shown in Theorems~\ref{theo1-B3} and \ref{theo2-B3}.

In all the other sheets in the sequence, there is either an odd or an even number of poles. Both cases are unphysical, the first for the reasons explained in Theorem~\ref{theo2-B3} (there always is a real tachyonic pole) and the second because complex-conjugate pairs are divided through different sheets according to the pattern established in Theorem~\ref{theo6-B3}. The real pole $\tilde z_0=-1$ is present only when $p$ is odd and is in one of these sheets together with complex poles; this sheet is $S_{q/2}$, if $q$ is even. Complex conjugate poles are always paired in $S_0$ (where there is at least one pair, or more if $p\gg q$; see Theorem~\ref{theo5-B3}), while in the sub-sequence $\{S_1,\dots,S_{q-1}\}$ they are either split across different sheets or, when paired together, accompanied by the real pole $\tilde z_0$. If $p>2q$, then $p-q>q$ and there are more poles than sheets in the sequence. When $p$ is odd and $q$ is even (examples: $\g=5/2,7/2,9/2,11/2,17/2$), there is one real pole in $S_{q/2}$. When $p$ is even and $q$ is odd (example: $\g=10/3$) or when $p$ is odd and $q$ is odd (example: $\g=7/3$), there is no real pole.
The complex poles $\tilde z_j$ in the sub-sequence are distributed according to the scheme \Eq{SSsch}. On one hand, when $p/q$ is not much greater than 2, then $p-q= O(q)$ and in any sheet of the sub-sequence there can be a set of poles $\tilde z_{j}$, $\tilde z_{j+1}^*$, $\tilde z_{j+2}$, $\dots$ but conjugate pairs $(\tilde z_{j},\tilde z_{j}^*)$, $(\tilde z_{j+1},\tilde z_{j+1}^*)$, $\dots$ never combine in the same sheet, thus leading to a non-real spectrum. This is a consequence of \Eqq{stro2-B3}, which says that $\vp_n$ and $\vp_{n_*}$ are separated by $2\pi(q+1)>2\pi$. On the other hand, when $p/q\gg 2$, then the $p-q\gg q$ poles over-populate the sequence of sheets and tend to pair together on the same sheet. However, this always happens in the sheet with the real pole, thus leading to a spectrum with a tachyon.
\end{proof}


\end{document}